\documentclass[ejs]{imsart}

\usepackage[utf8]{inputenc}
\usepackage{amsmath}
\usepackage{algorithm2e}
\usepackage{geometry}
\usepackage{mathtools}
\usepackage{amssymb}
\usepackage{amsthm}
\usepackage{multirow}
\usepackage{bm}
\usepackage{natbib}
\usepackage{color}
\RequirePackage[colorlinks,citecolor=blue,urlcolor=blue]{hyperref}

\newcommand{\floor}[1]{\left\lfloor #1 \right\rfloor}
\DeclareMathOperator{\tr}{tr}
\theoremstyle{plain}
\newtheorem{theorem}{Theorem}[section]

\newtheorem{proposition}{Proposition}[section]
\newtheorem{assumption}{Assumption}[section]

\newtheorem{lemma}{Lemma}[section]

\DeclareMathOperator*{\argmax}{arg\,max}

\newcommand{\change}[1]{{\color{black}{#1}}}
\newcommand{\csv}[1]{{\color{cyan}{#1}}}

\newcommand{\Dkonv}{\stackrel{d}{\longrightarrow}}

\newcommand{\weak}{\rightsquigarrow}

\newcommand{\R}{\mathbb{R}}
\newcommand{\eps}{\varepsilon}

\begin{document}

\begin{frontmatter}

\title{Change-point Inference for High-dimensional Heteroscedastic Data}

\runtitle{Change-point Inference for Heteroscedastic Data}

\begin{aug}
\author{\fnms{Teng} \snm{Wu}\ead[label=e1]{tengwu2@illinois.edu}}

\address{Microsoft Corporation\\
Redmond\\
\printead{e1}}

\author{\fnms{Stanislav} \snm{Volgushev}
\ead[label=e2]{stanislav.volgushev@utoronto.ca}
}

\address{Department of Statistics\\
University of Toronto\\
\printead{e2}}

\author{\fnms{Xiaofeng} \snm{Shao}\thanksref{t3}
\ead[label=e3]{xshao@illinois.edu}
}

\address{Department of Statistics\\
University of Illinois a Urbana-Champaign\\
\printead{e3}\\
}

\thankstext{t3}{Stanislav Volgushev was supported by a grant from NSERC of Canada;
Xiaofeng Shao was supported by grants NSF-DMS2014018  and NSF-DMS2210002. We would like to thank two referees for constructive comments, which led to substantial improvements.}
\runauthor{T. Wu et al.}

\end{aug}

\begin{abstract}
    We propose a bootstrap-based test to detect a mean shift in a sequence of high-dimensional observations with unknown time-varying heteroscedasticity. 
    The proposed test builds on the U-statistic based approach in \cite{wang2019inference}, targets a dense alternative, and adopts a wild bootstrap procedure to generate critical values. The bootstrap-based test is free of tuning parameters and is capable of accommodating unconditional time varying heteroscedasticity in the high-dimensional observations, as demonstrated in our theory and simulations. Theoretically, we justify the bootstrap consistency by using the recently proposed unconditional approach in \cite{bucher2019note}. 
    Extensions to testing for multiple change-points and estimation using wild binary segmentation are also presented. 
    Numerical simulations demonstrate the robustness of the proposed testing and estimation procedures with respect to different kinds of time-varying heteroscedasticity. 
    
\end{abstract}

\begin{keyword}[class=MSC]
\kwd[Primary ]{62F40}
\kwd{62H15}
\kwd[; secondary ]{62G10, 62G20}
\end{keyword}

\begin{keyword}
\kwd{bootstrap}
\kwd{dense alternative}
\kwd{U statistic}
\end{keyword}
\tableofcontents
\end{frontmatter}

\section{Introduction}

Owing to the advances in science and technology, high-dimensional data has been increasingly important in many areas, such as genomics, neuroscience and finance among others. In the analysis of high-dimensional datasets, often some kind of homogeneity assumption such as iid (independent and identically distributed) is made, but in reality the data may exhibit certain breaks in its stochastic property, especially when the data is ordered by time (e.g., stock return data) or one-dimensional locations (e.g., gene expression levels indexed by genomic loci). This has motivated a growing literature of change-point testing and estimation for the mean shift in high-dimensional data. 
See  \cite{horvath2012change,cho2015multiple,jirak2015uniform,wang2018high,wang2019inference, enikeeva2019high,  yu2021, zhangwangshao2021} for some recent work.

A common feature of all above-mentioned papers is that they assume the second order properties (i.e, covariance matrix for independent high-dimensional data) is time invariant, while the mean may undergo changes at unknown times. This is a strong assumption and may be violated for many high-dimensional datasets. See Section~\ref{sec:data} for significant evidence of time varying heteroscedasticity for a genomic dataset that has been analyzed by several researchers [\cite{wang2018high,wang2019inference,zhangwangshao2021}]. 
When heteroscedasticity is present, the existing change-point detection methods developed under the homoscedastic assumption may fail or their validity remains unknown. 
For low dimensional time series,  novel change-point detection methods
have been developed by \cite{zhou2013heteroscedasticity} and \cite{gorecki2018change} to detect mean changes while allowing for second or higher order non-stationarity, but an extension of their methods to high-dimensional setting is very nontrivial. In summary, there is a lack of methodology to detect mean changes for high-dimensional heteroscedastic data.

In this article, we develop a novel test and estimation procedure that can detect 
change-points in the mean when unconditional heteroscedasticity is present in the sequence of high-dimensional observations. To facilitate our methodological development, we assume the following mathematical framework: the $p$-dimensional observation at the $i$th time or location is 
\begin{eqnarray}
\label{model1}
X_i = \mu_i + H(i/n)Z_i, ~i=1,\cdots,n,
\end{eqnarray}
where $Z_i$'s are i.i.d. $p$-dimensional random vectors with mean 0 and covariance matrix $\Sigma$, and $H(i/n)$ is a $p \times p$ diagonal matrix that models the unconditional time/location dependent heteroscedasticity.
We are interested in testing 
\[H_0: \mu_1 = \mu_2 \ldots = \mu_n \text{ vs }\]
\change{
$$
H_{1}: \exists s \in \mathbb{N} \text { and } 1<k_{1}<\cdots<k_{s}< n \text { such that }
$$
$$
\mu_{1}=\cdots=\mu_{k_{1}} \neq \mu_{k_{1}+1}=\ldots \cdots \ldots=\mu_{k_{s}} \neq \mu_{k_{s}+1}=\cdots=\mu_{n} .
$$
Under $H_1$, $k_1 ,\ldots, k_s$ are unknown change points. }
The estimation of the number $s$ and location of change-points $(k_1,\cdots,k_s)$ is also addressed in the present paper.  
Note that when $p=1$, our model is similar to that in \cite{gorecki2018change}, except that the latter paper allowed serial dependence in $\{Z_i\}_{i=1}^{n}$. We do not pursue the more general, heteroscedastic and temporally dependent case, as there are methodological challenges to handle temporal dependence in the high-dimensional setting; see Section~\ref{sec:conclusion} for more discussions. Nevertheless, the temporal independence assumption is commonly made in the literature of change-point detection of genomic data; see \cite{jeng10} and \cite{zhang12}. 



In this paper,  we propose to build on the U-statistic based approach in \cite{wang2019inference}, who extended the two sample U-statistic used in \cite{chen2010two} from high-dimensional two sample testing to change-point testing. In \cite{wang2019inference}, the sequence of observations is assumed to be homoskedastic 
subject to mean shifts under the alternative, that is $H(\cdot)={\bf I}_p$ ($p\times p$ identity matrix). They adopt the idea of self-normalization (\cite{shao2010self},~\cite{shao2010testing}) in forming their test statistic and the  theoretical validity of their SN-based test is shown under homoskedasticity.
When there is time-varying heteroscedasticity, we show that the asymptotic null distribution of the SN-based test statistic in \cite{wang2019inference} is no longer pivotal, and it depends on the unknown $H(\cdot)$. To accommodate the unknown heteroscedascity, we propose to use the wild bootstrap to directly approximate the finite sample distribution of the original class of U-statistics, instead of doing self-normalization. With the aid of the recently proposed unconditional approach in justifying bootstrap consistency [\cite{bucher2019note}], we are able to show the consistency of wild bootstrap under the framework (\ref{model1}) and derive the local asymptotic power
under the one-change point alternative. 
In the context of testing for one change point in mean, our bootstrap-based test is free of tuning parameters, and performs well for a broad range of heteroscedastic models in our simulation studies. Extensions to testing for multiple change-point alternative and estimation of change-points using WBS (wild binary segmentation, \cite{fryzlewicz2014wild}) are also made. \change{ Note that like \cite{wang2019inference}, our bootstrap-based test targets  dense alternatives (i.e.when small changes occur for a substantial portion of the components),  which  can be well
motivated by real data and is often the type of alternative we are interested in. For example, copy
number variations in cancer cells are commonly manifested as change-points occurring at the same
positions across many related data sequences corresponding to cancer samples and biologically-
related individuals; see \cite{fan2017}.}

The rest of the paper is structured as follows. Section~\ref{sec:test} describes the test statistic and wild bootstrap scheme for testing a single change point. An extension to testing multiple change points is also made. Section~\ref{sec:theory} provides the assumptions and  theoretical results for the proposed testing procedure under the null and alternatives. In Section~\ref{sec:estimation}, we combine the WBS with our bootstrap-based test for change-point estimation. Section~\ref{sec:simulation}  compares the bootstrap-based  testing and estimation methods with their counterparts in  \cite{wang2019inference} via  simulations. Section~\ref{sec:data} illustrates the usefulness of our method using a real dataset and Section~\ref{sec:conclusion} concludes. All technical details and proofs are relegated to the appendix.
 
\section{Test statistics and bootstrap calibration}

\label{sec:test}
\subsection{Single change point testing}
\label{sub:single}
We first focus on the single change point alternative
\[H_{11}:  \mu_1 = \mu_2 \ldots = \mu_{k_1} \ne  \mu_{k_1 + 1} = \ldots \mu_n. \]
Our test statistic is motivated by \cite{wang2019inference}, which was inspired by the two sample testing statistics in \cite{chen2010two}. \change{For readers who are not familiar with those papers, we now provide a brief introduction to the main ideas which appeared in there. More precisely, suppose $(U_2, V_2)$ is an independent copy of $(U_1,V_1)$. Consider the function
\[
h\{(U_1,V_1),(U_2,V_2)\} = (U_1- V_1)^T(U_2-V_2).
\]
The expectation of this kernel function is 
\[
E[h\{(U_1,V_1),(U_2,V_2)\}] = \|E(U_1) - E(V_1)\|^2.
\]
Note that this expectation equals zero if and only if $E[U_1] = E[V_1]$. A natural unbiased estimator for $E[h\{(U_1,V_1),(U_2,V_2)\}]$ given two independent samples $U_1,\dots,U_n$, $V_1,\dots, V_m$ take sthe form 
\begin{align*}
&\frac{1}{n(n-1)m(m-1)} \sum_{i_1 \neq i_2, i_1,i_2 =1 }^n \sum_{j_1 \neq j_2, j_1,j_2 =1 }^m h((U_{i_1},V_{j_1}),(U_{i_2},V_{j_2}))
\\
= & \frac{4}{n(n-1)m(m-1)} \sum_{1 \le i_1 <i_2 \le n} \sum_{1 \le j_1 < j_2 \le m } h((U_{i_1},V_{j_1}),(U_{i_2},V_{j_2})). 
\end{align*}
Note that this is simply a two-sample U-Statistic with kernel $h$. This statistic was proposed by \cite{chen2010two} for comparing the means of two possibly high-dimensional vectors. The key observation of \cite{chen2010two} was that this statistic is more appropriate than the seemingly natural alternative $\|\bar U - \bar V\|_2^2$ (with $\bar U, \bar V$ denoting the corresponding sample means) because the latter contains terms of the form $(U_i-V_j)^T(U_i-V_j)$ which do not have expected value zero under the null of equal means. This does not matter in fixed dimensions,  but can blow up if the dimension of the vectors grows with sample size. 

Suppose the change in mean vector occurs at time $ k+1$. We can view $X_1,\ldots,X_{k}$ and $X_{k+1},\ldots,X_{n}$ as two independent samples with different means. A natural test statistic for a change at time $k$ is thus 
\begin{align*}
G_n(k) =& \frac{2}{k(k-1)}\frac{2}{(n-k)(n-k-1)}\sum_{1\le i_1<j_1\le k}\sum_{k+1\le i_2<j_2\le n} (X_{i_1}-X_{i_2})^T(X_{j_1}-X_{j_2}) 
\\
= & \frac{2}{k(k-1)}\sum_{1\le i< j \le k}X_i^TX_j + \frac{2}{(n-k)(n-k-1)} \sum_{k+1 \le i< j\le n}X_{i}^TX_{j}
 - \frac{2}{k(n-k)}\sum_{i=1}^{k}\sum_{j=k+1}^{n} X_i^TX_j.
\end{align*}
Here, the second equality follows after straightforward computations and facilitates the theoretical analysis of our test statistic. Since the location $k$ of the change point is unknown, we consider the maximum value over all possible change-points. }



To mimic the CUSUM process used in the low dimensional setting, we define a rescaled version of $G_n(m)$,
\[\Tilde{G}_n(m)  =  \frac{m(m-1)(n-m)(n-m-1)}{n^3} G_n(m),\]
where the rescaling is adopted to prevent the statistics $G_n(m)$ on the two ends  from blowing up. Note that this was implicitly done in the SN-based test statistic of \cite{wang2019inference}. Then we define our test statistic for $H_{11}$ to be 
\[T_n = \max_{m =2 ,3,\ldots,n-2}\Tilde{G}_n(m).\]
This formulation is similar to \cite{wang2019inference} but does not require the use of self-normalization technique, which has its origin from \cite{shao2010self} and \cite{shao2010testing}. \change{Under the null we have $E[G_n(m)] = 0$ for all $n,m$. Hence the statistic $T_n$ is expected to converge to a non-degenerate distribution upon suitable standardization. Under the  single change-point alternative with change at $k_0$ we have $E[G_n(k_0)] > 0$ with magnitude depending on $k_0$ and the size of the change. Hence the test statistic with the same normalization as under the null diverges under the one change-point alternative if the magnitude of change is large enough. 
}
As will be shown later, the limiting null distribution of $T_n$ depends on the unknown $H(\cdot)$, thus is not asymptotically pivotal and the idea of self-normalization is not directly applicable. This motivates us to propose a bootstrap-based approach to approximate the finite sample distribution (or the limiting null distribution) of $T_n$ under the null. 

Specifically, we employ the Gaussian multiplier bootstrap. Let $e_1,\ldots,e_n$ be i.i.d $N(0,1)$ random variables independent of $X_1,X_2,\ldots,X_n$. Let $\bar X = \frac{1}{n} \sum_{i = 1}^n X_i$ denote the sample mean. The bootstrap test statistic is defined as
\[T_n^* = \max_{m =2 ,3,\ldots,n-2}\Tilde{G}^*_n(m),\]
where
\[\Tilde{G}^*_n(m)  =  \frac{m(m-1)(n-m)(n-m-1)}{n^3} G^*_n(m),\]
and
\begin{align*}
    G^*_n(m) =& 
    \frac{2}{m(m-1)}\sum_{1\le i< j \le m}(X_{i}-\bar{X})^T(X_{j}- \bar{X}) e_i e_j
    \\
    & + \frac{2}{(n-m)(n-m-1)}  \sum_{m+1 \le i< j\le n}(X_{i}-\bar{X})^T(X_{j}- \bar{X})e_i e_j\\
    & - \frac{2}{m(n-m)}\sum_{i=1}^{m}\sum_{j=m+1}^{n} (X_{i}-\bar{X})^T(X_{j}- \bar{X})e_i e_j, 
\end{align*}
To ensure the bootstrap consistency, the observations are centered with the overall mean. \change{In practice, we also tried the centering by local mean (e.g., replace $\bar{X}$ by $\frac{1}{m} \sum_{i=1}^{m} X_{i}$ in the first summand of $G_{n}^{*}(m)$), and the results are similar to the ones we obtain by centering by overall mean. The proof and implementation for the latter seem a bit simpler, so we only present the latter.}

\change{In the low dimensional setting, i.e., when $p$ is fixed, the weighted bootstrap for degenerate U-statistic has been studied by \cite{Huskova92}, \cite{Janssen94}, \cite{dehling94}, \cite{wang04}, among others. 
We refer the reader to a recent paper by \cite{bootU2021} and more references therein. We are not aware of any results on bootstrap consistency for degenerate U-Statistics for data of increasing dimension. }

The theoretical bootstrap critical value for a size $\alpha$ test is defined to be 
\[
c_{1,\alpha} = \inf\{t \in \mathbb{R} : P(T_n^* > t|\bm X) \le \alpha\},
\]
where $\bm{X} = (X_1, \ldots, X_n)$. In practice this theoretical value is typically approximated by Monte Carlo simulations. Let $F_M^*$ denote the empirical cdf of $M$ bootstrap statistics $T_n^{*,1},\dots,T_n^{*,M}$, where each of them is based on an independent sequence of multipliers. Then we define 
\[
c_{1,\alpha}^{(M)} = \inf\{t \in \mathbb{R} : 1-F_M^*(t) \le \alpha\}.
\] 
This quantity can be computed through simulations. We reject the null hypothesis when $T_n > c_{1,\alpha}^{(M)} $.

\subsection{Multiple change points testing}
In practice, the number of change points is often unknown, so we consider a more general multiple change-points alternative, 
\change{
$$
H_{1}: \exists s \in \mathbb{N} \text { and } 1<k_{1}<\cdots<k_{s}< n \text { such that }
$$
$$
\mu_{1}=\cdots=\mu_{k_{1}} \neq \mu_{k_{1}+1}=\ldots \cdots \ldots=\mu_{k_{s}} \neq \mu_{k_{s}+1}=\cdots=\mu_{n} .
$$
}
Inspired by the scanning approach developed by \cite{lavitas2018time} for change-point testing in the univariate time series setting, we can incorporate the idea of forward and backward scanning into our test statistics for multiple change points detection. \\
To this end, we first introduce some more general notations. For any $a \le m \le b$, $a,b,m \in \{1,\dots,n\}$ define 
\begin{align*}
    G_n(m;a,b) &= 
    \begin{pmatrix}
    m-a+1\\2
    \end{pmatrix}^{-1}\sum_{a\le i< j \le m}X_i^TX_j + \begin{pmatrix}
    b-m\\2
    \end{pmatrix}^{-1} \sum_{m+1 \le i< j\le b}X_{i}^TX_{j}\\
    & \quad - \frac{2}{(m-a+1)(b-m)}\sum_{i=a}^{m}\sum_{j=m+1}^{b} X_i^TX_j,
    \\
\Tilde{G}_n(m;a,b) & =  \frac{(m-a+1)(m-a)(b-m)(b-m-1)}{(b-a+1)^3} G_n(m;a,b)
\end{align*}
It is obvious that $G_n(m) = G_n(m;1,n)$ and $\Tilde{G}_n(m) = \Tilde{G}_n(m;1,n).$
Our test statistic for multiple change points alternative takes the following form
\[T_{n,M} = \max_{1 \le m < k \le n } \Tilde{G}_n(m;1,k) + \max_{1\le k<m \le n }  \Tilde{G}_n(m;k,n).\]
Under $H_0$, since there is no change point, both forward and backward scanning parts are expected to be small. When there is at least one change point, the first change point would result in an inflation of the forward scanning part and the last change point would lead to a large value for  the backward scanning part.

Again, we use the Gaussian multiplier bootstrap to obtain the bootstrap distribution and calibrate the size. The bootstrap statistic is defined as 
\[T^*_{n,M} = \max_{1 \le m < k\le n} \Tilde{G}_n^*(m;1,k) + \max_{1\le k<m \le n}  \Tilde{G}_n^*(m;k,n),\]
where 
\[\Tilde{G}_n^*(m;a,b)  =  \frac{(m-a+1)(m-a)(b-m)(b-m-1)}{(b-a+1)^3} G_n^*(m;a,b),\]
and 
\begin{align*}
    G_n^*(m;a,b) = &
    \begin{pmatrix}
    m-a+1\\2
    \end{pmatrix}^{-1}\sum_{a\le i< j \le m}(X_i-\bar X)^T(X_j - \bar X) e_i e_j
    \\
    & + \begin{pmatrix}
    b-m\\2
    \end{pmatrix}^{-1} \sum_{m+1 \le i< j\le b}(X_i-\bar X)^T(X_j - \bar X)e_i e_j\\
    & - \frac{2}{(m-a+1)(b-m)}\sum_{i=a}^{m}\sum_{j=m+1}^{b} (X_i-\bar X)^T(X_j - \bar X)e_i e_j.
\end{align*}
The bootstrap critical value is defined to be 
\[c_{2,\alpha} = \inf\{t \in \mathbb{R} : P(T_{n,M}^* > t|\bm X) \le \alpha\}.\]
In practice, the critical value is approximated by $c_{2,\alpha}^{(M_n)}$, which is computed from the $M_n$ bootstrap samples, similarly as  $c_{1,\alpha}^{(M_n)}$. We then reject the null hypothesis when $T_{n,M} > c_{2,\alpha}^{(M_n)} $. It is worth noting that the proposed bootstrap test avoids the trimming parameter that is required in \cite{lavitas2018time} and \cite{wang2019inference}, and is thus tuning parameter free. 


\section{Theoretical results}
\label{sec:theory}
In this section, we present the theoretical results regarding the asymptotic properties of the test statistics and bootstrap consistency. 
\csv{Throughout this paper, we work with triangular array asymptotics where $Z_1,\dots,Z_n$ are independent across $n$ but with dimension $p = p_n$ that can grow with $n$. In order to keep the notation simple, we will not explicitly mark the dependence of the distribution and dimension of $Z$ on $n$. All asymptotics will be for $n \to \infty$. }
\change{For a symmetric matrix $\Sigma$, we denote $\|\Sigma\|_F$ its Frobenius norm. }
Consider the model~(\ref{model1}), 
where $Z_i$'s are i.i.d. $p-$dimensional random vectors with $E[Z_1] = 0$, $E[Z_1Z_1^T] = \Sigma$. The main technical assumptions are displayed below.

\begin{assumption}
\label{as:1}
 $tr(\Sigma^4) =o(\|\Sigma\|^4_F)$.
\end{assumption}
\begin{assumption}
\label{as:2}
$\sum_{l_1,\ldots,l_h=1}^p cum^2(Z_{1,l_1},\ldots,Z_{1,l_h}) \le C||\Sigma||^h_F$ for $h=2,3,4,5,6$ and some positive constant $C$ \change{which does not depend on $n$}.
\end{assumption}
\begin{assumption}
\label{as:3}
For every $t\in [0,1]$, $H(t)$ is a $p\times p$ diagonal matrix with all diagonal elements bounded by some finite constant $B$, \change{independent of $n$} that is 
\[|H_{l,l}(t) | \le B \text{ for all $t \in [0,1]$ and \change{$l = 1, \ldots, p_n, n \geq 1$}}.\]
\end{assumption}
\begin{assumption}
\label{as:4} 
\csv{Assume that the following limit 
\[
V(a,b) :=\lim_{n \to \infty} \frac{1}{n^2||\Sigma||_F^2} \sum_{i=\floor{na}+1}^{\floor{nb}-1}  \sum_{j =  \floor{na}+1}^i \tr\Big( H^2\Big(\frac{j}{n}\Big)H^2\Big(\frac{i+1}{n}\Big)\Sigma^2\Big)
\]
exists for all $0\le a \le b\le 1$.
}
\end{assumption}

Assumptions~\ref{as:1} and \ref{as:2} are also imposed in \cite{wang2019inference}.
 \change{As shown in \cite{wang2019inference}, Assumption~\ref{as:1} is equivalent to $\|\Sigma\|_2=o(\|\Sigma\|_F)$ and can only hold when $p = p_n \rightarrow \infty$ as $n\to \infty$. See page 813 of \cite{chen2010two} for additional discussion on its implications on the eigenvalues of $\Sigma$. Assumption~\ref{as:2} was first imposed in \cite{wang2019inference}, which is shown to be weaker than the factor-model-like assumption in \cite{chen2010two}; see Remark 3.3 in \cite{wang2019inference}. The summability of cumulants assumption is commonly used in time series analysis for the asymptotic analysis of low-dimensional time series [\cite{Brillinger1975}]. In our setting, it is used to ensure that the dependence is weak enough across the dimension of the vector. This is a crucial technical ingredient in our asymptotic analysis when establishing finite-dimensional convergence of a suitably normalized version of the process $\widetilde{G}_n$ to a multivariate normal limit.} \change{ Assumption~\ref{as:2} in general holds under uniform bounds on moments and `short-range’ dependence  conditions on the components of $X$ (possible after permutation). For example, if the sequence corresponding to the ordered components of $X$ (or a permutation of components) satisfies certain mixing and moment conditions, then Assumption~\ref{as:2} holds. See Remark 3.2 of \cite{wang2019inference} for more discussion and references. }
 

Assumptions~\ref{as:3} and ~\ref{as:4} are regarding the time varying heteroskedasticity $\{H(t)\}$. Assumption~\ref{as:3} bounds the range of heteroskedasticity, and is mild. 
\csv{Assumption~\ref{as:4} appears when we study the process $t \mapsto \tilde G_n(\floor{nt})$. More precisely, in the Appendix we decompose $\tilde G_n(\floor{nt})$ into a linear combination of a process $\tilde S_n$ evaluated at different points (see beginning of the proof of Theorem~\ref{thm:1}). The limiting variance of this process $\tilde S_n$ is directly related to the limit appearing in Assumption~\ref{as:4} (see the proof of Proposition~\ref{prop:convSn}). For a transparent example, assume $H(t) = f(t) I_p$ where $f$ is a real-valued function and $I_p$ denotes the $p\times p$ identity matrix. In this case the assumption simplifies because $tr(H^2(i/n)H^2(j/n)\Sigma^2)$ reduces to $f^2(i/n)f^2(j/n) tr(\Sigma^2) = f^2(i/n)f^2(j/n) \|\Sigma\|_F^2$. The normalized sum can be seen as a Riemann approximation of an integral, and convergence takes place provided that $f$ is sufficiently regular (for instance, bounded and piece-wise continuous with a finite number of jumps.) In general settings, Assumption~\ref{as:4} boils down to requiring sufficient regularity of each component of $H$ in a suitable uniform sense. In what follows, we use $I(\cdot)$ to denote the indicator function. }

\begin{theorem}\label{thm:1}
Under Assumptions~\ref{as:1}-\ref{as:4} and $H_0$, the normalized test statistic $T_n$ converges to a non-pivotal distribution after proper standardization, that is, 
\[\frac{T_n}{\|\Sigma\|_F} \xrightarrow{d} T = \sup_{r \in [0,1]} G(r),\]
where 
\[G(r) := 2(1-r)Q(0,r) + 2rQ(r, 1) - 2r(1-r) Q(0, 1),\]
and $Q$ is a mean-zero  Gaussian process on $[0,1]^2$, and the covariance is given by 
\[Cov(Q(a_1,b_1), Q(a_2,b_2)) = V(a_1\vee a_2, b_1\wedge b_2) I(b_1\wedge b_2 > a_1 \vee a_2).\]
\end{theorem}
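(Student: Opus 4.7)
The plan is to express $T_n$ as a continuous functional of a standardized two-parameter U-statistic partial-sum process and then apply the continuous mapping theorem. A key preliminary observation is that the kernel $(x-x')^\top(y-y')$ underlying $G_n(m)$ is invariant under a common shift of all observations, so under $H_0$ one may assume $\mu_1=\cdots=\mu_n=0$ throughout and write $X_i = H(i/n)Z_i$.

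The first step I would carry out is an algebraic decomposition. Setting $S(a,b) := \sum_{a<i<j\le b} X_i^\top X_j$ and using the identity $\sum_{i=1}^m\sum_{j=m+1}^n X_i^\top X_j = S(0,n) - S(0,m) - S(m,n)$, direct arithmetic yields
\begin{align*}
\widetilde G_n(m) = \frac{2(n-1)(n-m-1)}{n^3}\,S(0,m) + \frac{2(n-1)(m-1)}{n^3}\,S(m,n) - \frac{2(m-1)(n-m-1)}{n^3}\,S(0,n).
\end{align*}
Introducing the standardized process $\widetilde S_n(a,b) := (n\|\Sigma\|_F)^{-1}S(\floor{na},\floor{nb})$ for $0\le a\le b\le 1$, the expression $\widetilde G_n(\floor{nr})/\|\Sigma\|_F$ becomes a linear combination of $\widetilde S_n(0,r)$, $\widetilde S_n(r,1)$ and $\widetilde S_n(0,1)$ whose deterministic coefficients converge uniformly on $[0,1]$ to $2(1-r)$, $2r$ and $-2r(1-r)$, exactly matching the formula defining $G(r)$.

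The core of the argument is to establish weak convergence of $\widetilde S_n$, on the index set $\{(a,b):0\le a\le b\le 1\}$, to the Gaussian process $Q$ (this is the content of Proposition~\ref{prop:convSn}). For the finite-dimensional part I would exploit the martingale structure
\[
S(\floor{na},\floor{nb}) = \sum_{j=\floor{na}+2}^{\floor{nb}} X_j^\top\Bigl(\sum_{\floor{na}<i<j}X_i\Bigr),
\]
which is a sum of martingale differences relative to the natural filtration of $Z_1,\dots,Z_j$. A martingale central limit theorem applies once one checks that (i) the predictable quadratic variation, rescaled by $n^{-2}\|\Sigma\|_F^{-2}$, converges in probability to $V(a,b)$, which is essentially the content of Assumption~\ref{as:4}; and (ii) a conditional Lindeberg condition holds, following from a fourth-moment bound delivered by the cumulant-summability Assumption~\ref{as:2}. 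Joint convergence of $(\widetilde S_n(a_k,b_k))_k$ to a Gaussian vector is obtained via the Cram\'er--Wold device applied to arbitrary linear combinations, whose predictable variations are again Riemann-like sums of traces controlled by Assumption~\ref{as:4}. Assumption~\ref{as:1} ensures the standardization is correct and single-index contributions are asymptotically negligible.

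Tightness on the two-parameter index set I would establish via a fourth-moment bound on increments of the form $E|\widetilde S_n(a,b)-\widetilde S_n(a',b')|^4 \le C(|a-a'|+|b-b'|)^{1+\delta}$; expanding the fourth power into sums of products of cumulants of coordinates of $Z_1$, Assumption~\ref{as:2} together with the boundedness of $H$ from Assumption~\ref{as:3} controls all resulting trace expressions. Once $\widetilde S_n\weak Q$, applying the continuous mapping theorem to the functional $f\mapsto \sup_{r\in[0,1]}\{2(1-r)f(0,r)+2r\,f(r,1)-2r(1-r)f(0,1)\}$ yields $T_n/\|\Sigma\|_F\weak \sup_{r}G(r)$; the vanishing boundary weight $m(m-1)(n-m)(n-m-1)/n^3$ guarantees that the discrete maximum over $m\in\{2,\dots,n-2\}$ and the continuous supremum over $r\in[0,1]$ coincide in the limit. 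The main obstacle I anticipate is the bookkeeping of trace expressions when computing fourth moments and predictable variations jointly: the diagonal heteroscedasticity $H(\cdot)$ interacts non-trivially with a generic non-diagonal $\Sigma$, so every cumulant expansion decomposes into a proliferation of cross-terms that must all be majorized uniformly by the correct scale $\|\Sigma\|_F^4$, and controlling them is where the cumulant-summability Assumption~\ref{as:2} does the heavy lifting.
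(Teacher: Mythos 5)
Your overall route coincides with the paper's: the same algebraic decomposition of $\widetilde G_n(m)$ into a linear combination of partial sums over $(0,m)$, $(m,n)$ and $(0,n)$, finite-dimensional convergence of the standardized two-parameter process via a martingale CLT (with the conditional variance converging to $V(a,b)$ and a Lindeberg condition checked through moment/cumulant bounds), tightness via moment bounds on increments, and the (extended) continuous mapping theorem to pass from the discrete maximum to the supremum. Up to that level of description there is nothing to object to.

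There is, however, one concrete gap: your tightness step. You propose the increment bound $E|\widetilde S_n(a,b)-\widetilde S_n(a',b')|^4 \le C(|a-a'|+|b-b'|)^{1+\delta}$. First, the exponent $1+\delta$ is what one needs for a process indexed by a \emph{one}-dimensional set; here the index set is (a subset of) $[0,1]^2$, so the Kolmogorov--Chentsov-type criterion requires an exponent strictly larger than $2$. Second, and more fundamentally, fourth moments cannot deliver such an exponent for this process: an increment of $S_n$ involves on the order of $n^2\|u-v\|$ pairs $(i,j)$, each product $X_i^\top X_j$ contributing variance of order $\|\Sigma\|_F^2$, so the degenerate (martingale) structure yields at best $E|\widetilde S_n(u)-\widetilde S_n(v)|^4 \lesssim \|u-v\|^2$ --- exactly the critical exponent, for which the entropy integral $\int_0^1 \varepsilon^{-1}\,d\varepsilon$ diverges. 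This is precisely why the paper computes \emph{sixth} moments, proving $E|S_n(u)-S_n(v)|^6 \le C\,n^6\|\Sigma\|_F^6(\|u-v\|_2^3+n^{-3})$ via Lemma~\ref{lem:boundmom} (the case $s=6$), which gives exponent $3>2$ and a convergent entropy integral; it is also why Assumption~\ref{as:2} is stated for cumulants up to order $h=6$ rather than $4$. Note also the additive $n^{-3}$ term, which accounts for the piecewise-constant nature of $S_n$ on the grid $\{i/n\}$ and cannot be dropped. Your argument as written would therefore stall at tightness; upgrading your moment computation from fourth to sixth order (reusing the same cumulant bookkeeping you already describe) repairs it and recovers the paper's proof of Proposition~\ref{prop:convSn}.
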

The theorem implies that the normalized test statistic converges to a potentially non pivotal distribution which depends on the time varying heteroskedasticity function $H(\cdot)$. When the time varying heteroskedasticity function is the identity matrix (i.e., $H(t) = I_p$ for every $t\in [0,1]$), the covariance structure of $Q$ is the same as that in Theorem 3.4 of \cite{wang2019inference} and the limit is pivotal. Self-normlization can then help to get rid of the unknown normalizing factor $\|\Sigma\|_F$ leading to a pivotal test. However, in general the distribution of the self normalized statistic from \cite{wang2019inference} is not pivotal due to presence of the unknown heteroskedasticity function $H(\cdot)$ in the definition of $V$.

Next we present the results on the bootstrap consistency under $H_0$. Additional assumptions are needed to establish bootstrap consistency. In particular, we assume 
\begin{assumption}
\label{as:5}
Assume that $tr(\Sigma)^2 = o(n^2\|\Sigma\|_F^2)$ and 
\[
\frac{\sum_{s,t =1}^p cum(Z_{1,s}, Z_{1,s}, Z_{1,t}, Z_{1,t})}{n^2\|\Sigma\|_F^2} \to 0 .
\]
\end{assumption}
As shown in the Appendix, Assumption~\ref{as:5} implies \[\kappa_4 := E[Z_1^TZ_1Z_1^TZ_1] =  o(n^2||\Sigma||_F^2),\]
which is comparable to Assumption~\ref{as:2} and can be verified under similar weak dependence structure as described in \cite{wang2019inference}. This assumption is used when showing the negligibility of some remainder terms for the bootstrap process.  

\begin{theorem}\label{thm:2}
Assume Assumptions~\ref{as:1}-\ref{as:5} hold. Under $H_0$, we have for any sequence $M_n \to \infty$ and any $\alpha < 1/2$: $P(T_n > c_{1,\alpha}^{(M_n)}) \to \alpha$.

\end{theorem}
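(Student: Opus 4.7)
The plan is to apply the unconditional bootstrap approach of \cite{bucher2019note}: rather than proving convergence of the conditional law of $T_n^*$ given $\bm X$, I would show directly the joint weak convergence
\[
\Big( T_n/\|\Sigma\|_F,\ T_n^{*,1}/\|\Sigma\|_F,\ \ldots,\ T_n^{*,M}/\|\Sigma\|_F \Big) \xrightarrow{d} \big(T^{(0)}, T^{(1)}, \ldots, T^{(M)}\big)
\]
for each fixed $M$, where $T^{(0)},T^{(1)},\ldots,T^{(M)}$ are i.i.d.\ copies of the limit $T$ from Theorem~\ref{thm:1}. Since $T = \sup_{r\in[0,1]} G(r)$ with $G$ a non-degenerate centered Gaussian process, $T$ has a continuous cdf. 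Given this joint limit for each fixed $M$, the empirical $(1-\alpha)$-quantile $c_{1,\alpha}^{(M)}/\|\Sigma\|_F$ converges in probability to $F_T^{-1}(1-\alpha)$ as $M \to \infty$ by Glivenko-Cantelli for the limiting i.i.d.\ copies; a standard diagonal extraction promotes this to $M_n \to \infty$, and $P(T_n > c_{1,\alpha}^{(M_n)}) \to \alpha$ follows from the Portmanteau lemma together with continuity of $F_T$.

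For the joint convergence I would reuse the process-level decomposition from the proof of Theorem~\ref{thm:1}, writing $\tilde G_n(\floor{n\cdot})/\|\Sigma\|_F$ as a continuous functional of a scalar process $\tilde S_n$ and, analogously, writing $\tilde G_n^{*,\ell}(\floor{n\cdot})/\|\Sigma\|_F$ in terms of a bootstrap process $\tilde S_n^{*,\ell}$. Expanding $(X_i - \bar X)^T (X_j - \bar X) e_i e_j$ produces a leading term $(H(i/n) Z_i)^T (H(j/n) Z_j)\, e_i e_j$ plus three remainders that involve $\bar X$; Assumption~\ref{as:5}, which the excerpt already notes implies $\kappa_4 = E\|Z_1\|^4 = o(n^2 \|\Sigma\|_F^2)$, is precisely what guarantees that these remainders vanish once standardized by $\|\Sigma\|_F$. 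The leading part is a second-order chaos in the independent Gaussian multipliers $(e_i^{(\ell)})$ whose coefficients are bilinear forms in $(Z_i)$.

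Finite-dimensional joint convergence of $(\tilde S_n, \tilde S_n^{*,1},\ldots,\tilde S_n^{*,M})$ would then be established through a conditional characteristic function computation. Conditional on $\bm X$ and at finitely many time points, each $\tilde S_n^{*,\ell}$ is a second-order chaos in i.i.d.\ standard normal variables whose coefficient array depends on the data; a CLT for such chaoses, combined with the independence of the multiplier sequences across $\ell$, yields that the conditional law of $(\tilde S_n^{*,\ell})_{\ell=1}^M$ is asymptotically centered Gaussian with block-diagonal covariance, each block being the same random quadratic form in the $(Z_i)$. Using the cumulant summability in Assumption~\ref{as:2} together with Assumption~\ref{as:5}, this random quadratic form concentrates in probability on the deterministic limit $V(a,b)$ that already drives the law of $\tilde S_n$ in Theorem~\ref{thm:1}. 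Combined with the $\bm X$-measurability of $\tilde S_n$ and its distributional convergence to a Gaussian process with covariance $V$, the conditional characteristic function of $(\tilde S_n^{*,\ell})_{\ell=1}^M$ converges in $L^1$ to the product of $M$ identical unconditional Gaussian characteristic functions, which is exactly the statement that the unconditional joint limit is given by $M+1$ independent copies with the correct marginal law. Tightness at the process level is inherited from the arguments behind Theorem~\ref{thm:1}, and the continuous mapping theorem applied to the sup functional transfers the joint weak convergence from $\tilde S_n$ and its bootstrap replicates to $T_n$ and $T_n^{*,\ell}$.

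The main obstacle I expect is the concentration step for the conditional bootstrap covariance: one must bound the variance of expressions of the form $\frac{1}{n^2\|\Sigma\|_F^2}\sum_{i \neq j} a_{ij}\, Z_i^T B_{ij} Z_j$ uniformly over the weight structures $(a_{ij}, B_{ij})$ that arise from the rescaled CUSUM kernel, while carrying the diagonal matrix $H(i/n)$ through the computation. This is where the joint strength of the cumulant bound in Assumption~\ref{as:2} and the fourth-moment hypothesis in Assumption~\ref{as:5} becomes indispensable, and where the interaction between the heteroscedasticity profile $H$ and the covariance $\Sigma$ of $Z_i$ makes the bookkeeping substantially more delicate than in the homoscedastic setting of \cite{wang2019inference}.
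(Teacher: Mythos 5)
Your overall strategy is the right one and matches the paper's in outline: establish joint weak convergence of $T_n/\|\Sigma\|_F$ with independent bootstrap replicates, strip the $\bar X$-centering terms using Assumption~\ref{as:5}, and pass to the Monte Carlo quantile $c_{1,\alpha}^{(M_n)}$ via the unconditional framework of \cite{bucher2019note}. Two differences in execution are worth noting. First, the paper only needs the joint limit with \emph{two} bootstrap copies, $(T_n,T_{n,1}^*,T_{n,2}^*)/\|\Sigma\|_F \weak (T,T',T'')$, which it feeds into Lemma 2.2 and Lemma 4.2 of \cite{bucher2019note}; your fixed-$M$ joint convergence plus ``Glivenko--Cantelli on the limit and diagonal extraction'' is essentially a re-derivation of that lemma, and the details (uniform closeness of the empirical cdf of $T_n^{*,1},\dots,T_n^{*,M_n}$ to $F_T$ when $n$ and $M_n$ diverge together) are exactly what that lemma packages; you should either invoke it or spell this out. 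Second, for the finite-dimensional distributions the paper does \emph{not} condition on $\bm X$ at all: it treats linear contrasts of $S_n$, $S_{n,1}^{1,*}$, $S_{n,1}^{2,*}$ as a single martingale difference array with respect to $\mathcal{F}_i=\sigma(X_1,\dots,X_i,e_1,\dots,e_i,e_1',\dots,e_i')$ and applies an unconditional martingale CLT; asymptotic independence of the three components comes for free because the cross terms have zero conditional expectation. Your conditional second-order-chaos CLT plus concentration of the random conditional covariance would also work, and its key computation (concentration of $\frac{1}{n^2\|\Sigma\|_F^2}\sum_{i\neq j}a_{ij}Z_i^TB_{ij}Z_j$ on $V(a,b)$) is literally the paper's verification that the predictable quadratic variation $M_2(a,b)\Pkonv V(a,b)$, carried out by splitting into a diagonal part $M_2^{(1)}$ and an off-diagonal part $M_2^{(2)}$ and using Assumptions~\ref{as:1}--\ref{as:2}; but the conditional route reintroduces exactly the conditional-weak-convergence machinery that the unconditional approach is designed to avoid.

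There is one genuine gap: your claim that $T=\sup_{r\in[0,1]}G(r)$ ``has a continuous cdf'' is unjustified and, at the left endpoint of the support, almost certainly false. Since $Q(a,a)$ has zero variance, $G(0)=G(1)=0$ almost surely, hence $T\geq 0$ almost surely and $P(T=0)=P(G(r)\leq 0\ \forall r)$ may be strictly positive, i.e.\ the cdf of $T$ can have an atom at $0$. The paper has to work around this: it uses the Gaussian-suprema results of \cite{gaenssler2007continuity} and \cite{tsirel1976density} to show the cdf $H$ of $T$ is continuous and strictly increasing on $(0,\infty)$ only, bounds $H(0)\leq 1-P(G(1/2)>0)=1/2$, and then modifies Lemma 4.2 of \cite{bucher2019note} so that continuity of the limiting cdf is required only on $(0,\infty)$ together with the condition that the target quantile lies in $(0,\infty)$. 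This is precisely where the hypothesis $\alpha<1/2$ enters, and your argument never uses or explains that restriction. Without this step the Portmanteau/quantile-convergence argument at the end of your proof does not go through, because the $(1-\alpha)$-quantile could in principle sit at a discontinuity of $F_T$.
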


Next we state the result regarding the power of the proposed test statistics. 
\begin{theorem}\label{thm:3}
\change{
Suppose that Assumptions~\ref{as:1}-\ref{as:5} hold. }
Assume there is one single change point at $k_1 := \floor{nc}$, $\mu_i =\mu ,i=1,..., k_1$ and $\mu_i = \mu + \Delta,i= k_1 + 1 \ldots, n.$ Then for any sequence $M_n \to \infty$ and any $\alpha < 1/2$ 
\begin{enumerate}
\item(Diminishing local alternative) If $\frac{n\|\Delta\|_2^2}{\|\Sigma\|_F} \to 0$, 
then $P(T_n > c_{1,\alpha}^{(M_n)}) \to \alpha$.
\item(Diverging local alternative) If $\frac{n\|\Delta\|_2^2}{\|\Sigma\|_F} \to \infty$, 
then $P(T_n > c_{1,\alpha}^{(M_n)}) \to 1$.
\item(Fixed local alternative) If $\frac{n\|\Delta\|_2^2}{\|\Sigma\|_F} \to \beta \in (0, \infty)$,  \[\frac{T_n}{\|\Sigma\|_F} \xrightarrow{d} \sup_{r \in [0,1]} \left(G(r;0,1) + \Lambda(r)\right), \]
where \[\Lambda(r) = \begin{cases} (1-c)^2r^2\beta & r \le c\\
c^2 (1-r)^2 \beta & r > c
\end{cases}.
\]
Moreover, for \change{$2$ copies of the bootstrap statistic $T_{n}^{1,*},T_{n}^{2,*}$} which are based on independent sets of multipliers we have 
\[
\change{(T_{n}/\|\Sigma\|_F,T_{n}^{1,*}/\|\Sigma\|_F,T_{n}^{2,*}/\|\Sigma\|_F) \xrightarrow{d} (T,T^{(1)},T^{(2)})},
\]
where \change{$T,T^{(1)},T^{(2)}$} are independent copies of $T$ from Theorem~\ref{thm:1}.


\end{enumerate}
\end{theorem}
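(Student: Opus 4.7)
My plan is to decompose $\tilde G_n(m)$ additively into a pure-noise piece (matching the null), a deterministic signal piece, and linear cross terms that vanish after normalization by $\|\Sigma\|_F$. Writing $X_i = \mu_i + \eta_i$ with $\eta_i := H(i/n) Z_i$, and using the invariance of $G_n(m)$ under common translations of all observations, I may assume $\mu_i = \Delta\, I(i > k_1)$. Expanding $X_i^T X_j$ yields
\[
G_n(m) = G_n^\eta(m) + G_n^{\mathrm{cross}}(m) + G_n^\mu(m),
\]
where $G_n^\eta$ is obtained by setting $\mu_i \equiv 0$ (its distribution matches the null), $G_n^\mu$ is obtained by setting $\eta_i \equiv 0$ (a deterministic quantity equal to $E[G_n(m)]$), and $G_n^{\mathrm{cross}}$ collects the terms linear in $\eta$. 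A direct computation on the two ranges $m \le k_1$ and $m > k_1$ shows that for $m = \lfloor n r \rfloor$,
\[
\tilde G_n^\mu(m) / \|\Sigma\|_F \longrightarrow \Lambda(r)
\]
uniformly in $r \in [0,1]$, using $n \|\Delta\|_2^2/\|\Sigma\|_F \to \beta$.

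For the cross terms, a second-moment computation using Assumption~\ref{as:3} together with Assumption~\ref{as:1} ($\|\Sigma\|_{\mathrm{op}} = o(\|\Sigma\|_F)$) and $n \|\Delta\|_2^2/\|\Sigma\|_F = O(1)$ shows that $\tilde G_n^{\mathrm{cross}}(m)/\|\Sigma\|_F = O_p(\sqrt{\|\Sigma\|_{\mathrm{op}}/\|\Sigma\|_F}) = o_p(1)$ pointwise, and uniformity in $m$ follows because $G_n^{\mathrm{cross}}$ is a smooth function of the partial sums $\sum_{i \le m} \eta_i$ whose maximum is controlled by a Doob-type inequality. Combined with Theorem~\ref{thm:1} applied to $\tilde G_n^\eta$, this yields the functional limit
\[
\tilde G_n(\lfloor n \cdot \rfloor)/\|\Sigma\|_F \xrightarrow{d} G(\cdot) + \Lambda(\cdot)
\]
in $\ell^\infty([0,1])$, and the continuous mapping theorem delivers the claimed fixed-alternative limit for $T_n/\|\Sigma\|_F$.

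The analogous decomposition for the bootstrap statistic uses $X_i - \bar X = (\eta_i - \bar\eta) + (\mu_i - \bar\mu)$. The piece in $\eta - \bar\eta$ has the same limit as in Theorem~\ref{thm:2}. The purely $\mu$ piece $\sum_{i<j}(\mu_i-\bar\mu)^T(\mu_j-\bar\mu) e_i e_j$ has variance $O(n^2\|\Delta\|_2^4) = O(\|\Sigma\|_F^2)$ over the multipliers, so after the outer prefactor it contributes $O_p(\|\Sigma\|_F/n)$ to $\tilde G_n^*(m)$ and hence $O_p(1/n)$ after normalization; the mixed $\eta$--$\mu$ pieces are handled identically. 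Consequently $c_{1,\alpha}^{(M_n)}/\|\Sigma\|_F$ remains bounded in probability in all three regimes. Parts (1) and (2) then follow immediately: when $n \|\Delta\|_2^2/\|\Sigma\|_F \to 0$ we have $\Lambda \equiv 0$ in the limit and the proof of Theorem~\ref{thm:2} applies verbatim to give asymptotic size $\alpha$; when it diverges, $\tilde G_n^\mu(k_1)/\|\Sigma\|_F \to \infty$ dominates the bounded critical value, yielding power tending to one.

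For the joint convergence in part (3), the triples $(Z_i)$, $(e_i^{(1)})$, $(e_i^{(2)})$ are mutually independent by construction, so applying the unconditional strategy of \cite{bucher2019note} along the lines of Theorem~\ref{thm:2} establishes joint convergence of $(\tilde G_n, \tilde G_n^{*,1}, \tilde G_n^{*,2})/\|\Sigma\|_F$ as processes to $(G + \Lambda, G^{(1)}, G^{(2)})$, where $G, G^{(1)}, G^{(2)}$ are three independent copies of the limit Gaussian process in Theorem~\ref{thm:1}. The continuous mapping theorem applied to the supremum functional then gives the stated joint limit. The main technical obstacle throughout is the uniform-in-$m$ control of the cross and bootstrap-cross terms: unlike in the null analysis, these sums involve a deterministic drift of componentwise order $\sqrt{\|\Sigma\|_F/n}$ that propagates through the U-statistic kernel in several ways, and extracting $o_p(1)$ rates requires carefully invoking $\|\Sigma\|_{\mathrm{op}}/\|\Sigma\|_F = o(1)$ from Assumption~\ref{as:1} at every step.
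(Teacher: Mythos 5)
Your decomposition is essentially the one the paper uses: write $X_i$ as signal plus noise, split $\tilde G_n$ (and $S_n^{*}$) into a pure-noise piece matching the null, a deterministic drift converging to $\Lambda$, and cross terms controlled by Kolmogorov-type maximal inequalities via $\Delta^T\Sigma\Delta \le \|\Sigma\|_2\|\Delta\|_2^2 = o(\|\Sigma\|_F)\|\Delta\|_2^2$ (the paper phrases this through the quadruple-sum representation $D_n(k)$ and through the partial-sum process $S_n^{*X} = S_n^{*Y} + O_p(n(\Delta^T\Sigma\Delta)^{1/2}) + O_p(n\|\Delta\|_2^2)$, but the content is the same), followed by the unconditional joint-convergence argument of B\"ucher and Kojadinovic for the triple. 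The rates you give for the cross terms, $O_p(\sqrt{\|\Sigma\|_{2}/\|\Sigma\|_F})$ under $n\|\Delta\|_2^2/\|\Sigma\|_F = O(1)$, match the paper's.

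There is one genuine flaw, in your justification of part 2. Your bound on the bootstrap remainder, variance $O(n^2\|\Delta\|_2^4) = O(\|\Sigma\|_F^2)$, uses $n\|\Delta\|_2^2 = O(\|\Sigma\|_F)$, which holds in the diminishing and fixed regimes but \emph{fails} in the diverging regime. From that computation you then assert that $c_{1,\alpha}^{(M_n)}/\|\Sigma\|_F$ ``remains bounded in probability in all three regimes,'' and this is what you use to get power tending to one. That assertion is false in general: the purely-$\mu$ contribution to $\tilde G_n^{*}(m)/\|\Sigma\|_F$ is of order $\|\Delta\|_2^2/\|\Sigma\|_F = n^{-1}\cdot(n\|\Delta\|_2^2/\|\Sigma\|_F)$, which diverges whenever $n\|\Delta\|_2^2/\|\Sigma\|_F$ grows faster than $n$ (the centering by $\bar X$ does not remove the jump in the mean, only its average). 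The correct, and sufficient, statement — the one the paper proves — is that $T_n^{*} = o_P\bigl(n\|\Delta\|_2^2/\|\Sigma\|_F\bigr)$, i.e.\ the bootstrap statistic and hence the critical value are of strictly smaller order than the test statistic, since both remainder rates $\|\Delta\|_2^2/\|\Sigma\|_F$ and $(\Delta^T\Sigma\Delta)^{1/2}/\|\Sigma\|_F$ are $o\bigl(n\|\Delta\|_2^2/\|\Sigma\|_F\bigr)$ when the latter diverges. With that replacement your argument for part 2 goes through; as written, it rests on an intermediate claim that does not hold.
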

The result in Theorem~\ref{thm:3} shows that our test has nontrivial power when the $\mathcal{L}_2$-norm of change is large relative to $\|\Sigma\|_F$, which targets the dense alternative. In the special homoscedastic case, i.e., $H(t)={\bf I}_p$ for all $t\in [0,1]$, the power result is consistent with the one obtained in \cite{wang2019inference}, in the sense that both tests share the same rate of alternative under which nontrivial power occurs. This suggests that the bootstrap-based procedure brings extra robustness with respect to unconditional time-varying heteroscedasticity, as compared to the SN-based one in \cite{wang2019inference}, without sacrificing power. \change{Finally, we note that by results in \cite{bucher2019note} the result in part 3 remains true for an arbitrary number of bootstrap copies $T_{n}^{*,1},\dots,T_{n}^{*,K}$ that are obtained from independent multipiers.}


The following theoretical results can be derived similarly for multiple change point testing. 
\begin{theorem}\label{thm:4}
Assume Assumptions~\ref{as:1}-\ref{as:5} hold, under $H_0$, we have 
\[\frac{T_{n,M}}{\|\Sigma\|_F} \xrightarrow{d} T_M := \sup_{0\le r_1 < r_2 \le 1} G(r_1;0,r_2) + \sup_{0\le r_1 < r_2 \le 1} G(r_2;r_1,1),\]
where 
\[G(r;a,b) := 2(b-a)(b-r)Q(a,r) + 2(b-a)(r-a)Q(r, b) - 2(r-a)(b-r) Q(a, b).\]
Moreover, \change{for $2$ copies of the bootstrap statistic $T_{n,M}^{1,*},T_{n,M}^{2,*}$} which are based on independent sets of multipliers we have 
\[
\change{(T_{n,M}/\|\Sigma\|_F,T_{n,M}^{1,*}/\|\Sigma\|_F,T_{n,M}^{2,*}/\|\Sigma\|_F) \xrightarrow{d} (T_M, T^{(1)}_M,T^{(2)}_M)},
\]
where $\change{(T_{n,M},T^{(1)}_M,T^{(2)}_M})$ are independent copies of $T_M$.
\end{theorem}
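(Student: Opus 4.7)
The plan is to reduce Theorem~\ref{thm:4} to the same process-level machinery that underpins Theorems~\ref{thm:1} and \ref{thm:2}, and then apply the continuous mapping theorem together with the unconditional bootstrap device of \cite{bucher2019note}. First, I would rewrite the generalized kernel $G_n(m;a,b)$ in the same quadratic decomposition used for $G_n(m)=G_n(m;1,n)$ in the proof of Theorem~\ref{thm:1}: expanding the three sums of inner products in terms of the partial sums of $X_i^T X_j$, one sees that $\widetilde G_n(\lfloor n r\rfloor;\lfloor n a\rfloor,\lfloor n b\rfloor)$ is an explicit linear combination of the basic process $\widetilde S_n$ (the same process that already appears in the proof of Theorem~\ref{thm:1}) evaluated at the grid points $(a,r)$, $(r,b)$ and $(a,b)$. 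From this one directly reads off the scaling factor $(b-a)(b-r)$, $(b-a)(r-a)$ and $(r-a)(b-r)$ that already appear in the definition of $G(r;a,b)$.

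With this representation in place, I would establish joint weak convergence of the bivariate process
\[
\Big(\widetilde G_n(\lfloor n r_1\rfloor;1,\lfloor n r_2\rfloor)/\|\Sigma\|_F,\ \widetilde G_n(\lfloor n r_2\rfloor;\lfloor n r_1\rfloor,n)/\|\Sigma\|_F\Big)_{0\le r_1<r_2\le 1}
\]
to $\bigl(G(r_1;0,r_2),G(r_2;r_1,1)\bigr)_{0\le r_1<r_2\le 1}$ on the simplex $\{(r_1,r_2):0\le r_1<r_2\le 1\}$. Finite-dimensional convergence follows from Proposition~\ref{prop:convSn} applied to $\widetilde S_n$ at the (finitely many) arguments generated by the decomposition together with the Cram\'er--Wold device, since each coordinate is a fixed linear combination of the same Gaussian limit $Q$. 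For tightness, I would extend the tightness argument used for $\widetilde G_n$ in the proof of Theorem~\ref{thm:1}: the increments of $\widetilde G_n(\lfloor nr\rfloor;\lfloor na\rfloor,\lfloor nb\rfloor)$ in $(r,a,b)$ reduce via the same decomposition to increments of $\widetilde S_n$ on nested rectangles, and the moment bounds used earlier (which rely only on Assumptions~\ref{as:1}--\ref{as:4}) extend uniformly over the simplex. Once weak convergence on the parameter space is in hand, the continuous mapping theorem applied to the functional $(f,g)\mapsto \sup_{r_1<r_2} f(r_1,r_2) + \sup_{r_1<r_2} g(r_1,r_2)$ delivers the stated limit for $T_{n,M}/\|\Sigma\|_F$.

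For the bootstrap part, I would follow the strategy of Theorem~\ref{thm:2}: using the very same decomposition, the bootstrap statistic $\widetilde G_n^*(\lfloor nr\rfloor;\lfloor na\rfloor,\lfloor nb\rfloor)$ can be written as the analogous linear combination of a wild-bootstrap counterpart $\widetilde S_n^*$ plus remainder terms that are negligible under Assumption~\ref{as:5} (these are the same remainders that were already controlled in the proof of Theorem~\ref{thm:2}, now evaluated at different interval endpoints). The unconditional route of \cite{bucher2019note} then requires establishing the joint weak convergence of $(\widetilde S_n,\widetilde S_n^{*,1},\widetilde S_n^{*,2})$ on the parameter space to $(Q,Q^{(1)},Q^{(2)})$ with independent copies $Q^{(1)},Q^{(2)}$ of $Q$, which is exactly the multivariate analogue of what is already shown in the proof of Theorem~\ref{thm:2}. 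Combined with the tightness argument above (applied to each bootstrap coordinate) and the continuous mapping theorem, this yields the stated joint limit for $(T_{n,M},T_{n,M}^{1,*},T_{n,M}^{2,*})/\|\Sigma\|_F$.

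The main obstacle I anticipate is the tightness step on the three-dimensional simplex: unlike the one-dimensional indexing used for Theorems~\ref{thm:1} and \ref{thm:2}, here one must control increments of $\widetilde G_n(\cdot;\cdot,\cdot)$ jointly in the evaluation point and in both endpoints of the scanning window, including near the degenerate boundary $r_1=r_2$ (and similarly $a=b$) where the normalizing factors in $\widetilde G_n(m;a,b)$ shrink. The rescaling by $(m-a+1)(m-a)(b-m)(b-m-1)/(b-a+1)^3$ is precisely what prevents a blow-up there, mirroring the role of the corresponding factor in the single-window case; once this is exploited carefully, the existing moment bounds for $\widetilde S_n$ should yield the required uniform modulus-of-continuity estimate, and the remainder of the argument is a direct adaptation of the proofs of Theorems~\ref{thm:1} and \ref{thm:2}.
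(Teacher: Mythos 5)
Your proposal follows essentially the same route as the paper: the paper's proof of Theorem~\ref{thm:4} simply invokes the process convergence of $S_n$ and $S_n^{*}$ on $[0,1]^2$ from Proposition~\ref{prop:convSn}, applies the continuous mapping theorem to the linear-combination representation of $\widetilde G_n(\cdot;\cdot,\cdot)$, and then repeats the \cite{bucher2019note}-based joint-convergence argument from Theorems~\ref{thm:2} and~\ref{thm:3} for the bootstrap copies. The separate tightness analysis on the simplex that you flag as the main obstacle is not needed as a new step, since the uniform equicontinuity already established for $S_n$ and $S_n^{1,*},S_n^{2,*}$ in $\ell^\infty([0,1]^2)$ carries over through the (extended) continuous mapping theorem exactly as in the single change-point case.
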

Under the alternative, we show that the power of the proposed test for multiple change-point detection goes to 1 when there is a dense mean change. 
\begin{theorem} \label{th:multCppower}
\change{Assume Assumptions~\ref{as:1}-\ref{as:5} hold.}
Suppose that there are change points at $k_1, \ldots, k_s$, that $k_j = \floor{c_jn}$ for constants $0 < c_1 < \dots<c_s < 1$, and at least one of the change-point sizes, say for the r'th change-point, satisfies $\frac{n\|\Delta_r\|_2^2}{\|\Sigma\|_F} \to \infty$. Then $P(T_{n,M} > c_{2,\alpha}^{(M_n)}) \to 1$.
\end{theorem}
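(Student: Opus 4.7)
The plan is to establish two facts: (i) $T_{n,M}/\|\Sigma\|_F \to \infty$ in probability under $H_1$, and (ii) $c_{2,\alpha}^{(M_n)} = O_p(\|\Sigma\|_F)$ under $H_1$. Together these immediately yield $P(T_{n,M} > c_{2,\alpha}^{(M_n)}) \to 1$.

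For (i) the strategy is to isolate a single summand inside the forward or backward scan that by itself captures the $r$-th change. Let $\mu^{(j)}$ denote the common value of $\mu_i$ on the $(j{+}1)$st constant segment (with the conventions $k_0:=0$ and $k_{s+1}:=n$), so $\Delta_j = \mu^{(j)} - \mu^{(j-1)}$, and write $\bar\mu_{a:b} = (b-a+1)^{-1}\sum_{i=a}^b\mu_i$. Introduce the three mean-difference vectors
\[
A := \bar\mu_{1:k_r}-\bar\mu_{k_r+1:n},\quad B := \bar\mu_{1:k_r}-\mu^{(r)},\quad C := \mu^{(r-1)}-\bar\mu_{k_r+1:n}.
\]
A short algebraic computation yields the identity $\Delta_r = A-B-C$, and the triangle inequality therefore gives $\max(\|A\|_2,\|B\|_2,\|C\|_2) \ge \|\Delta_r\|_2/3$. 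Each of these three vectors appears as the effective mean-jump of a statistic already sitting inside $T_{n,M}$: $A$ shows up in $\tilde G_n(k_r;1,n)$ (which is part of the forward scan at $k=n$ and of the backward scan at $k=1$); $B$ in $\tilde G_n(k_r;1,k_{r+1})$ (forward scan); and $C$ in $\tilde G_n(k_r;k_{r-1}+1,n)$ (backward scan). Each of these three windows is set up so that the segment on at least one side of $k_r$ is constant in mean, and a direct U-statistic expectation computation then gives
\[
E[\tilde G_n(k_r;1,n)] \asymp n\|A\|_2^2,\quad E[\tilde G_n(k_r;1,k_{r+1})] \asymp n\|B\|_2^2,\quad E[\tilde G_n(k_r;k_{r-1}+1,n)] \asymp n\|C\|_2^2,
\]
with proportionality constants determined only by $c_1,\dots,c_s$. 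Combined with the trichotomy, at least one of these three expectations is of order $n\|\Delta_r\|_2^2$.

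To promote expectations into the statistics themselves, I will decompose $X_i=\mu_i+H(i/n)Z_i$ and correspondingly split the chosen $\tilde G_n$ into (a) its deterministic expectation, (b) a ``null-like'' $Z^TZ$ piece whose variance is $O(\|\Sigma\|_F^2)$ by the variance calculation underlying the proof of Theorem~\ref{thm:1}, and (c) a $\mu^T HZ$ cross piece whose variance, using Assumption~\ref{as:3} and $\|\Sigma\|_2\le\|\Sigma\|_F$, can be bounded by a constant multiple of $n\|\Sigma\|_F\|D\|_2^2$ for the relevant $D\in\{A,B,C\}$. The hypothesis $n\|\Delta_r\|_2^2/\|\Sigma\|_F\to\infty$ then forces concentration of the chosen $\tilde G_n$ around its expectation, yielding $T_{n,M}/\|\Sigma\|_F\to\infty$. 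For (ii) I will expand $X_i-\bar X = (\mu_i - \bar\mu_n) + (H(i/n)Z_i - \overline{HZ})$ inside $G_n^*(m;a,b)$; the bootstrap statistic then decomposes into a null-like bilinear form in $\{Z_i\}$ and the multipliers $\{e_i\}$ that is $O_p(\|\Sigma\|_F)$ by the arguments of Theorem~\ref{thm:4}, plus remainder terms whose deterministic weights involve the bounded shifts $\mu_i-\bar\mu_n$. Conditional bootstrap variance calculations, leveraging Assumptions~\ref{as:3} and~\ref{as:5}, will show these remainders are $o_p(\|\Sigma\|_F)$, so $T_{n,M}^*/\|\Sigma\|_F$ remains tight under $H_1$ and the critical value inherits the order $O_p(\|\Sigma\|_F)$.

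I expect the main obstacle to be part (ii): the bootstrap analysis behind Theorem~\ref{thm:4} was tailored to $H_0$, and under $H_1$ one must carefully track every cross term produced by the nonzero mean profile to ensure that none of them reinflates the bootstrap statistic to a scale larger than $\|\Sigma\|_F$ after taking maxima over the two scanning directions.
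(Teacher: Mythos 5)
Your overall architecture — lower-bound $T_{n,M}$ by the deterministic drift of a scan statistic anchored at the diverging change point, and upper-bound the bootstrap critical value by expanding $X_i-\bar X$ into a mean profile plus centered noise — is the same as the paper's. Your part (i) is in fact more careful than the paper's: the paper simply evaluates the forward scan $G_n(k_s;1,k_{s+1})$ and asserts its drift is of order $n\|\Delta_s\|_2^2$, which ignores possible cancellation (one can have $\bar\mu_{1:k_s}=\mu^{(s)}$ while $\mu^{(s-1)}\neq\mu^{(s)}$, killing that particular drift). Your identity $\Delta_r=A-B-C$ together with the three windows $(k_r;1,n)$, $(k_r;1,k_{r+1})$, $(k_r;k_{r-1}+1,n)$ — the latter two chosen so that one side of $k_r$ has constant mean — guarantees a drift of order $n\|\Delta_r\|_2^2/9$ somewhere inside the forward or backward scan, and the concentration bounds you cite for the $Z^TZ$ and $\mu^THZ$ pieces match the paper's computations (variance $O(\|\Sigma\|_F^2)$ and $O(n\,\Delta^T\Sigma\Delta)=o(n\|\Sigma\|_F\|\Delta\|_2^2)$ respectively). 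This part is sound and arguably patches a gap in the published argument.

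Part (ii), however, contains a genuine error: the claim $c_{2,\alpha}^{(M_n)}=O_p(\|\Sigma\|_F)$ under $H_1$ is false in general. After expanding $X_i-\bar X$, the bootstrap statistic contains the pure-shift term
\[
\sum_{i\neq j}(\mu_i-\bar\mu_n)^T(\mu_j-\bar\mu_n)\,e_ie_j
\]
(weighted as in $G_n^*$), which in the simplest one-change configuration reduces to $\|\Delta\|_2^2\sum_{i,j\ge k^*,\,i\neq j}e_ie_j=\Theta_p(n\|\Delta\|_2^2)$ at the level of $S_n^*$, hence $\Theta_p(\|\Delta\|_2^2)$ at the level of $\widetilde G_n^*$. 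No conditional variance calculation can make this $o_p(\|\Sigma\|_F)$ when $\max_j\|\Delta_j\|_2^2\gg\|\Sigma\|_F$, a regime the theorem permits. The paper accepts this and only proves $c_{2,\alpha}^{(M_n)}=O_p\big(\|\Sigma\|_F+\max_j\|\Delta_j\|_2^2+\max_j\|\Delta_j\|_2\|\Sigma\|_F^{1/2}\big)$, which still suffices because each of these terms is $o(n\|\Delta\|_2^2)$ for the \emph{largest} change $\Delta$. To close your argument you should (a) weaken claim (ii) to this form, and (b) run part (i) for the change point maximizing $\|\Delta_j\|_2$ (it also satisfies the divergence hypothesis), so that the lower bound $n\|\Delta\|_2^2$ dominates every term in the bootstrap bound. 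With that repair your proof goes through and is, on the lower-bound side, tighter than the one in the paper.
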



\section{Change-point estimation}
\label{sec:estimation}

Wild binary segmentation (WBS) was introduced by \cite{fryzlewicz2014wild} as an alternative to the popular binary segmentation algorithm to estimate the change-points locations in a univariate sequence. \cite{wang2019inference} combined WBS and their SN-based test and showed that 
WBS outperforms binary segmentation, especially when the changes are non-monotonic.   Here, we shall combine our bootstrap-based test with the WBS algorithm  to estimate the locations of  change-points in the mean of high-dimensional heteroscedastic data. The algorithm involves  generating $N$ random segments $\{(s_{m},e_{m})\}_{m = 1,\ldots,N}$,  calculating the single change point test statistic on each segment $(s_m,e_m)$,  \change{
$$W(s_{m},e_{m}) = \max_{s_m+2 \le t \le e_m-2 } \Tilde{G}_n(t;s_m,e_m),$$}
and then taking a maximum over all random segments, that is, $\max_{m = 1, \ldots, N} W(s_{m},e_{m})$. A change point is detected  when $\max_{m = 1, \ldots, N}  W(s_{m},e_{m}) > \xi_n$, where $\xi_n$ is a proper threshold parameter. \change{In the event that a change point is detected, let $\widehat m = \argmax_m W(s_{m},e_{m}). $ The location of the changepoint is estimated at 
$$\hat{t}_1=argmax_{s_{\widehat m}+2 \le t \le e_{\widehat m}-2 } \Tilde{G}_n(t;s_{\widehat m},e_{\widehat m}).$$
Then the data is split into two parts $(X_1,\cdots,X_{\hat{t}_1})$ and $(X_{\hat{t}_1+1},\cdots,X_n)$ and WBS is employed for each part until no change-points are detected. 
}


In \cite{wang2019inference},  the threshold was obtained by applying the same test to the simulated iid Gaussian data to the same set of random segments. This approach makes intuitive sense since SN-based test statistic is asymptotically pivotal when there is no heterosecasticity,  but is no longer meaningful in the presence of heteroscedasticity, as the asymptotic pivotal nature of the SN-based test statistic is lost and the function $H(\cdot)$ is unknown. To overcome this difficulty, we propose to adopt a bootstrap-based approach in determining the threshold. Specifically, for $N$ random segments $(s_m,e_m)$, we generate $R$ independent copies of Gaussian multipliers. 
Let \[W^{*(i)}(s,e) =\max_{s+2 \le t \le e-2 } \Tilde{G}^{*(i)}_n(t;s,e),\]
be the $i$th bootstrap-based test statistic on the interval $[s,e]$.  For the $i$th bootstrap replicate, we calculate
\[\hat \xi_n^i = \max_{m=1,\cdots,N} W^{*(i)}(s_m,e_m).\]
The threshold $\xi_n$ is defined as the $95\%$ quantile of the values $\{\hat \xi_n^1,\ldots,\hat \xi_n^R\}$. Note that we generate multipliers once for each bootstrap replication and apply the same multipliers in all intervals. 
\change{Changepoints are now estimated by running WBS$(1,n,\xi_n, \emptyset)$ below.} 


\begin{algorithm}[H]
\caption{Bootstrap-based WBS}
\SetAlgoLined
{WBS$(s,e,\xi_n,\hat C)$}\\
Set of estimated changepoints: $\hat C$\\ 
 \If{$e - s < 4$}{
  STOP\;}
  \Else{ $\mathcal{M}_{s,e}$: = set of those $1 \le m \le N$ for which $s \le s_m$, $e_m \le e$\\
  $m_0 :=\argmax_{m \in \mathcal{M}_{s,e}} W(s_m,e_m)$ \\
  \If{$W(s_{m_0},e_{m_0}) > \xi_n$}{
    Add $m_0$ to the set of estimated change-points $\hat C$\;
    WBS$(s,m_0,\xi_n,\hat C)$\;
    WBS$(m_0+1,e, \xi_n,\hat C)$\;
   }
   \Else{Stop\;}
}
\end{algorithm}

\section{Simulation studies}
\label{sec:simulation}

In this section, we investigate  the finite sample performance of our proposed bootstrap-based tests and WBS+Bootstrap estimation method via simulations. In Section~\ref{sub:test}, we present the size and power for our bootstrap-based tests in comparison with SN-based tests in \cite{wang2019inference} for the settings of single and multiple change points in high-dimensional homoskedastic and heteroscedastic data. Section~\ref{sub:estimation} examines the performance of the WBS+Bootstrap change point estimation method in comparison with the WBS+SN based approach in \cite{wang2019inference} when the unconditional heteroscedasticity is present. 

\subsection{Testing}
\label{sub:test}
Recall that we assume the following data generating model 
\[X_i = \mu_i + H(i/n) Z_i, \text{ for $i = 1,\ldots,n$.}\]
We generate $Z_i, i=1,\cdots,n$ independently from a multivariate normal distribution $MVN(\bm 0,\Sigma)$, where the following three different types of covariance matrix $\Sigma$ are considered, 
\begin{itemize}
    \item (Case 1) AR(1) covariance matrix with $\Sigma_{ij} = 0.5^{|i-j|}$;
    \item (Case 2)  AR(1) covariance matrix with $\Sigma_{ij} = 0.8^{|i-j|}$;
    \item (Case 3) Compound symmetric covariance matrix with $\Sigma_{ij} = 0.5^{{\bf 1}(i \ne j)}$.
\end{itemize}
Cases 1 and 2 both belong to weakly dependent (across coordinates of $X$) models and it will be interesting to see how the increased dependence from Case 1 to Case 2 impact the finite sample size accuracy.    Case 3 corresponds to a model with strong dependence, and it violates the componentwise weakly dependent assumption we imposed in our theory (see Assumptions 1\&2). Nevertheless it would be interesting to see how robust our bootstrap-based tests are with respect to strong componentwise dependence. 

Next, we consider  the following time varying trend function $H(\cdot)$, which specifies the trend in time-varying variance of each component but not the trend in mean. We use the terminology "trend" with the understanding that it always refers to the time-varying variance.
\begin{itemize}
    \item A0: $H(i/n) = {\bf I}_p$, $i=1,\cdots,n$. 
    This is the case for no trend.
    \item A1: $H(i/n) = \{0.2\bm 1_{i\le n/2} +  0.6 \bm 1_{i> n/2}\}{\bf I}_p$ (piecewise constant trend). 
    \item A2: $H(i/n) = (i/n){\bf I}_p$ (linear trend). 
    \item A3:  $H(i/n) = [0.2\{1 + \cos^2(i/n^{4/5})\}] {\bf I}_p$. This trend function has a cosine shape.  
    \item A4: $H(i/n) = \{0.2+0.1 \log(1+|i-n/2|)\} {\bf I}_p$. This trend function has a sharp change around $n/2$.
    \item A1 + A2: Apply trend function A1 to the first $p/2$ coordinates in $Z$, and apply trend function A2 to the rest of coordinates. 
    \item A1 + A3: Apply trend function A1 to the first $p/2$ coordinates in $Z$, and apply trend function A3 to the rest of coordinates. 
    \item A1 + A4: Apply trend function A1 to the first $p/2$ coordinates in $Z$, and apply trend function A4 to the rest of coordinates. 
\end{itemize}
Some of these trend functions, such as A1, A3 and A4, have been considered in \cite{zhaoli2013}, who studied the inference of the mean for a univariate time series with time-varying variance.

First, we investigate the case where there is at most one change point in the mean. Under the null hypothesis, we set $\mu_i = \bm 0$ for all $i = 1, \ldots, n$. \change{We consider $(n,p) = (400,100), (100, 100), (400, 400)$}, for all choices of $\Sigma$ and $H(\cdot)$ described above. The empirical sizes  at significance levels $\alpha = 0.05,0.1$ are reported based on $1000$ Monte Carlo simulations. The results of SN-based test statistic for one change point (i.e., $T_n$  in \cite{wang2019inference}) are also reported for comparison. From Table \ref{tab_size}, we can see that for AR covariance matrix with $\rho = 0.5, 0.8$, both tests achieve size accuracy, i.e., the empirical sizes are close to the nominal level, when there is no time-varying  heteroscedasticity. However, when time varying heteroscedasticity is present, the SN-based test exhibits pronounced over-size distortion in the case of A1, A2, A1+A2, and A1+A3. By contrast, the bootstrap-based test we propose achieves accurate size across all trend types. When the covariance matrix is compound symmetric, the model assumptions required for the validity of both SN-based test and bootstrap-based test are violated. It is observed that the SN-based test over-rejects even when there is no  trend, which is consistent with the result in \cite{wang2019inference}. Interestingly, the bootstrap-based test still maintains accurate size for all settings. This suggests that the applicability of bootstrap-based test may be broader than what we are able to justify. It would be interesting but may be challenging to provide a new theory that supports the robustness of our bootstrap-based test when the panel dependence is strong. 



 \begin{table}[]
         \centering

\begin{tabular}{lrrrrrrrrrrrr}
\hline

\multicolumn{1}{c}{$n = 400$} & \multicolumn{4}{c}{AR 0.5}                                                                              & \multicolumn{4}{c}{AR 0.8}                                                                              & \multicolumn{4}{c}{CS}                                                                                  \\ \cline{2-13} 
\multicolumn{1}{c}{$p=100$}   & \multicolumn{2}{c}{SN}                             & \multicolumn{2}{c}{Boot}                           & \multicolumn{2}{c}{SN}                             & \multicolumn{2}{c}{Boot}                           & \multicolumn{2}{c}{SN}                             & \multicolumn{2}{c}{Boot}                           \\ \hline

$\alpha$  & \multicolumn{1}{c}{0.05} & \multicolumn{1}{c}{0.1} & \multicolumn{1}{c}{0.05} & \multicolumn{1}{c}{0.1} & \multicolumn{1}{c}{0.05} & \multicolumn{1}{c}{0.1} & \multicolumn{1}{c}{0.05} & \multicolumn{1}{c}{0.1} & \multicolumn{1}{c}{0.05} & \multicolumn{1}{c}{0.1} & \multicolumn{1}{c}{0.05} & \multicolumn{1}{c}{0.1} \\ \hline
A0        & 0.050                    & 0.096                   & 0.052                    & 0.106                   & 0.063                    & 0.090                   & 0.05                     & 0.098                   & 0.093                    & 0.123                   & 0.046                    & 0.093                   \\
A1        & 0.171                    & 0.284                   & 0.050                    & 0.097                   & 0.181                    & 0.268                   & 0.057                    & 0.099                   & 0.141                    & 0.176                   & 0.053                    & 0.093                   \\
A2        & 0.244                    & 0.341                   & 0.049                    & 0.113                   & 0.223                    & 0.316                   & 0.052                    & 0.104                   & 0.151                    & 0.194                   & 0.046                    & 0.101                   \\
A3        & 0.041                    & 0.074                   & 0.052                    & 0.109                   & 0.043                    & 0.077                   & 0.061                    & 0.106                   & 0.090                    & 0.113                   & 0.049                    & 0.094                   \\
A4        & 0.038                    & 0.08                    & 0.045                    & 0.100                   & 0.052                    & 0.090                   & 0.045                    & 0.102                   & 0.099                    & 0.124                   & 0.048                    & 0.094                   \\
A1+A2     & 0.217                    & 0.313                   & 0.051                    & 0.111                   & 0.193                    & 0.298                   & 0.059                    & 0.106                   & 0.150                    & 0.179                   & 0.06                     & 0.097                   \\
A1+A3     & 0.126                    & 0.198                   & 0.052                    & 0.106                   & 0.12                     & 0.188                   & 0.051                    & 0.096                   & 0.133                    & 0.169                   & 0.05                     & 0.092                   \\
A1+A4     & 0.054                    & 0.090                   & 0.056                    & 0.103                   & 0.056                    & 0.095                   & 0.045                    & 0.098                   & 0.099                    & 0.134                   & 0.053                    & 0.092                   \\ \hline
\end{tabular}
         
\begin{tabular}{lrrrrrrrrrrrr}
\hline
\multicolumn{1}{c}{$n = 100$} & \multicolumn{4}{c}{AR 0.5}                                                                              & \multicolumn{4}{c}{AR 0.8}                                                                              & \multicolumn{4}{c}{CS}                                                                                  \\ \cline{2-13} 
\multicolumn{1}{c}{$p = 100$} & \multicolumn{2}{c}{SN}                             & \multicolumn{2}{c}{Boot}                           & \multicolumn{2}{c}{SN}                             & \multicolumn{2}{c}{Boot}                           & \multicolumn{2}{c}{SN}                             & \multicolumn{2}{c}{Boot}                           \\ \hline
\multicolumn{1}{c}{$\alpha$}  & \multicolumn{1}{c}{0.05} & \multicolumn{1}{c}{0.1} & \multicolumn{1}{c}{0.05} & \multicolumn{1}{c}{0.1} & \multicolumn{1}{c}{0.05} & \multicolumn{1}{c}{0.1} & \multicolumn{1}{c}{0.05} & \multicolumn{1}{c}{0.1} & \multicolumn{1}{c}{0.05} & \multicolumn{1}{c}{0.1} & \multicolumn{1}{c}{0.05} & \multicolumn{1}{c}{0.1} \\ \hline
A0                            & 0.047                    & 0.085                   & 0.049                    & 0.106                   & 0.055                    & 0.097                   & 0.043                    & 0.085                   & 0.116                    & 0.140                   & 0.068                    & 0.128                   \\
A1                            & 0.158                    & 0.238                   & 0.044                    & 0.098                   & 0.165                    & 0.228                   & 0.043                    & 0.109                   & 0.133                    & 0.170                   & 0.056                    & 0.111                   \\
A2                            & 0.208                    & 0.286                   & 0.046                    & 0.106                   & 0.205                    & 0.290                   & 0.046                    & 0.108                   & 0.157                    & 0.207                   & 0.057                    & 0.118                   \\
A3                            & 0.052                    & 0.094                   & 0.048                    & 0.114                   & 0.056                    & 0.088                   & 0.050                    & 0.096                   & 0.117                    & 0.147                   & 0.071                    & 0.132                   \\
A4                            & 0.036                    & 0.068                   & 0.047                    & 0.106                   & 0.049                    & 0.081                   & 0.040                    & 0.090                   & 0.111                    & 0.146                   & 0.060                    & 0.128                   \\
A1+A2                         & 0.173                    & 0.263                   & 0.048                    & 0.112                   & 0.185                    & 0.256                   & 0.045                    & 0.108                   & 0.156                    & 0.196                   & 0.058                    & 0.115                   \\
A1+A3                         & 0.078                    & 0.134                   & 0.044                    & 0.100                   & 0.087                    & 0.130                   & 0.050                    & 0.111                   & 0.108                    & 0.135                   & 0.060                    & 0.120                   \\
A1+A4                         & 0.041                    & 0.083                   & 0.042                    & 0.100                   & 0.063                    & 0.103                   & 0.039                    & 0.098                   & 0.106                    & 0.142                   & 0.058                    & 0.123                   \\ \hline
\end{tabular}         
         
\begin{tabular}{lrrrrrrrrrrrr}
\hline
\multicolumn{1}{c}{$n = 400$} & \multicolumn{4}{c}{AR 0.5}                                                                              & \multicolumn{4}{c}{AR 0.8}                                                                              & \multicolumn{4}{c}{CS}                                                                                  \\ \cline{2-13} 
\multicolumn{1}{c}{$p=400$}   & \multicolumn{2}{c}{SN}                             & \multicolumn{2}{c}{Boot}                           & \multicolumn{2}{c}{SN}                             & \multicolumn{2}{c}{Boot}                           & \multicolumn{2}{c}{SN}                             & \multicolumn{2}{c}{Boot}                           \\ \hline
\multicolumn{1}{c}{$\alpha$}  & \multicolumn{1}{c}{0.05} & \multicolumn{1}{c}{0.1} & \multicolumn{1}{c}{0.05} & \multicolumn{1}{c}{0.1} & \multicolumn{1}{c}{0.05} & \multicolumn{1}{c}{0.1} & \multicolumn{1}{c}{0.05} & \multicolumn{1}{c}{0.1} & \multicolumn{1}{c}{0.05} & \multicolumn{1}{c}{0.1} & \multicolumn{1}{c}{0.05} & \multicolumn{1}{c}{0.1} \\ \hline
A0                            & 0.051                    & 0.101                   & 0.059                    & 0.124                   & 0.043                    & 0.088                   & 0.047                    & 0.115                   & 0.094                    & 0.128                   & 0.050                     & 0.090                    \\
A1                            & 0.186                    & 0.284                   & 0.057                    & 0.105                   & 0.180                     & 0.264                   & 0.049                    & 0.099                   & 0.147                    & 0.179                   & 0.052                    & 0.089                   \\
A2                            & 0.241                    & 0.344                   & 0.058                    & 0.115                   & 0.255                    & 0.350                    & 0.052                    & 0.113                   & 0.151                    & 0.189                   & 0.049                    & 0.098                   \\
A3                            & 0.035                    & 0.073                   & 0.057                    & 0.124                   & 0.040                     & 0.069                   & 0.056                    & 0.102                   & 0.090                     & 0.115                   & 0.051                    & 0.092                   \\
A4                            & 0.046                    & 0.087                   & 0.061                    & 0.120                    & 0.039                    & 0.073                   & 0.047                    & 0.096                   & 0.100                      & 0.129                   & 0.052                    & 0.092                   \\
A1+A2                         & 0.223                    & 0.326                   & 0.062                    & 0.117                   & 0.220                     & 0.322                   & 0.061                    & 0.114                   & 0.148                    & 0.174                   & 0.053                    & 0.100                     \\
A1+A3                         & 0.118                    & 0.192                   & 0.052                    & 0.098                   & 0.123                    & 0.193                   & 0.056                    & 0.110                    & 0.130                     & 0.166                   & 0.051                    & 0.090                    \\
A1+A4                         & 0.047                    & 0.092                   & 0.060                     & 0.114                   & 0.040                     & 0.084                   & 0.043                    & 0.098                   & 0.105                    & 0.135                   & 0.050                     & 0.090                    \\ \hline
\end{tabular}
        \caption{Size for single change point testing under different trend functions}
        \label{tab_size}
    \end{table}

Next, we investigate the power of the proposed bootstrap test under the alternative of one change point. We consider $(n,p) = (100,100)$ and the mean shift occurs at the center of data, i.e. $\mu_i = \Delta \bm 1\{i \ge \floor{n/2}\}$. We provide the power curves of the proposed bootstrap test and SN-based test for two AR covariance matrices and all trend types. We let $\Delta$ steadily increase from $0$ to some larger values and evaluate the empirical power at different change magnitudes based on $1000$ Monte Carlo simulations. In Figure~\ref{fig:power_curve}, the solid line corresponds to the power for bootstrap-based test and the dashed line corresponds to  SN-based test, with the colors red and black indicating the results for $\rho=0.8$ and $\rho=0.5$, respectively.  When there is no time varying trend in variance, the two tests have similar size and power. Similar phenomenon holds for trend types A3, A4, and A1+A4, for  all of which we observe size accuracy for both tests. On the other hand, the SN-based test  shows significant size distortion for trend types A1, A2, A1+A2 and A1+A3, making it difficult to compare the power of the two tests directly. To make a fair comparison, we also report the size adjusted power of the SN-based test for these cases. To be more specific, we calibrate the empirical critical values used in SN-based test such that the empirical sizes are exactly $0.05$. The size adjusted powers of SN-based test are shown in dotted lines in the figures for trend types A1, A2, A1+A2, and A1+A3. The size for the bootstrap-based test is fairly close to 0.05, so  we did not make any  power adjustment. A direct comparison between the size-adjusted power of SN-based test and the raw power of bootstrap-based test suggests that the powers are quite comparable, with slight advantage for the bootstrap-based test in some settings, such as A1, A2, and A1+A2.


\begin{figure}
    \centering
    \includegraphics[scale = 0.9]{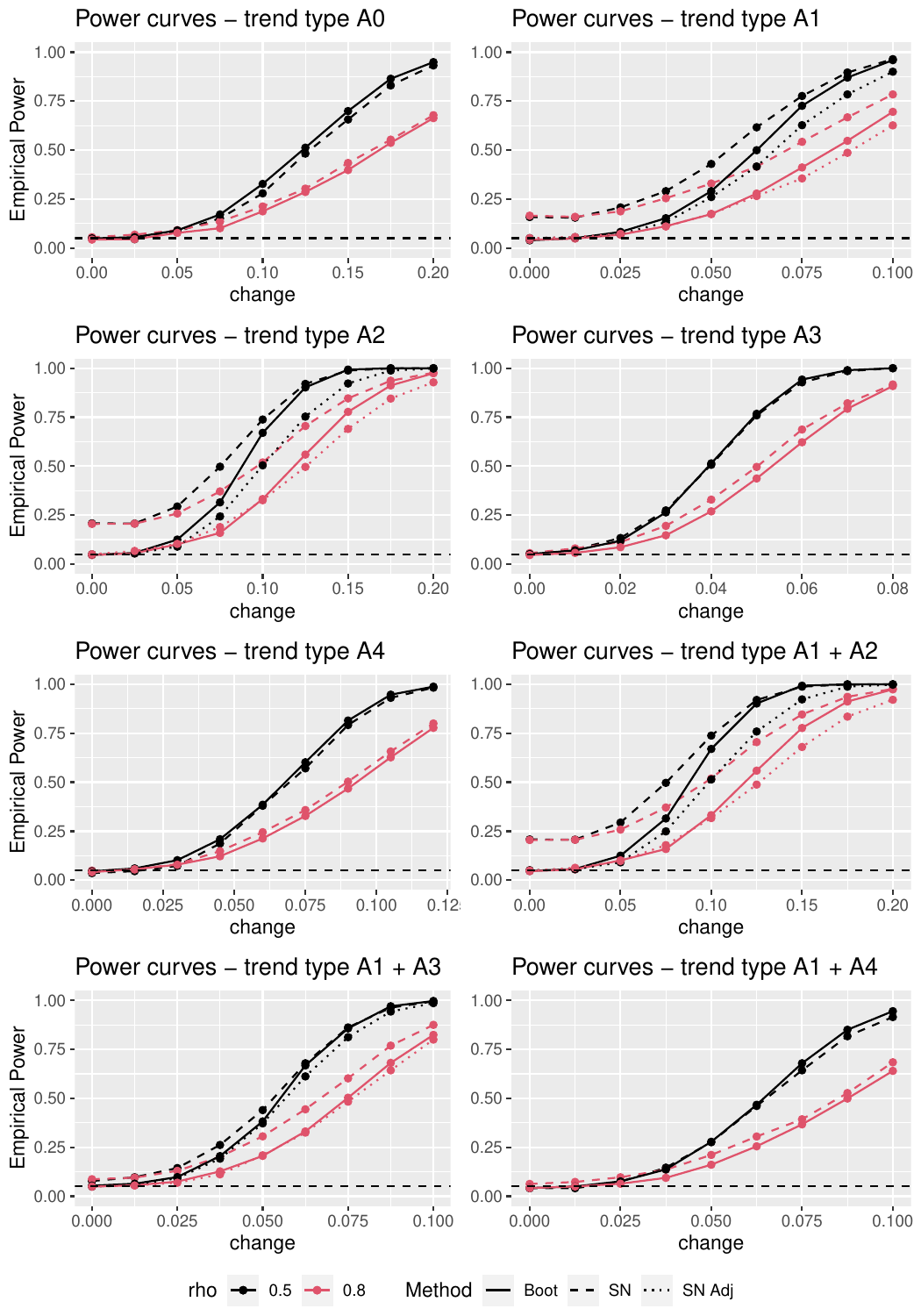}
    \caption{Power curves for single change point testing under different trend functions}
    \label{fig:power_curve}
\end{figure}

Next, we investigate the performance of the bootstrap-based test that targets unknown number of change points, where there are more than one change point under the alternative.  We only present the results for trend types A0, A1, A2 and A1 + A2.  Following \cite{wang2019inference}, we consider  a two-change-points alternative (2CP)
\[\mu_i =  \Delta \bm 1\{ \floor{n/3} \le i \le \floor{2n/3}\} \]
and  a three-change-points alternative (3CP),
\[\mu_i  = \Delta \bm 1\{ \floor{n/4} \le i \le \floor{n/2} \} +  \Delta \bm 1\{ \floor{3n/4}\le i \le n\}.
\]
We consider two AR covariance matrices used before when generating $Z_i$ and set $(n,p) = (50,50)$ and $\Delta = 0.2$.   We  compare the empirical size and power  with those of SN-based test statistic $T_n^{\diamond}$ in \cite{wang2019inference} based on $1000$ replications. 

\begin{table}[!th]
\centering
\caption{Size and power of multiple change points testing}
\label{multiresult}
\begin{tabular}{lrrrrrrrrrrrr}
\hline
$(n,p) = (50,50)$ & \multicolumn{4}{c}{$H_0$}                                                                               & \multicolumn{4}{c}{$H_1(2CP)$}                                                                          & \multicolumn{4}{c}{$H_2(3CP)$}                                                                          \\ \cline{2-13} 
$AR(0.5)$         & \multicolumn{2}{c}{SN}                             & \multicolumn{2}{c}{Boot}                           & \multicolumn{2}{c}{SN}                             & \multicolumn{2}{c}{Boot}                           & \multicolumn{2}{c}{SN}                             & \multicolumn{2}{c}{Boot}                           \\ \hline
$\alpha$          & \multicolumn{1}{c}{0.05} & \multicolumn{1}{c}{0.1} & \multicolumn{1}{c}{0.05} & \multicolumn{1}{c}{0.1} & \multicolumn{1}{c}{0.05} & \multicolumn{1}{c}{0.1} & \multicolumn{1}{c}{0.05} & \multicolumn{1}{c}{0.1} & \multicolumn{1}{c}{0.05} & \multicolumn{1}{c}{0.1} & \multicolumn{1}{c}{0.05} & \multicolumn{1}{c}{0.1} \\ \hline
A0                & 0.143                    & 0.206                   & 0.039                    & 0.098                   & 0.253                    & 0.335                   & 0.166                    & 0.321                   & 0.242                    & 0.320                   & 0.080                    & 0.194                   \\
A1                & 0.222                    & 0.318                   & 0.037                    & 0.094                   & 0.999                    & 0.999                   & 0.839                    & 0.934                   & 1.000                    & 1.000                   & 0.493                    & 0.727                   \\
A2                & 0.384                    & 0.495                   & 0.040                    & 0.097                   & 0.994                    & 0.997                   & 0.448                    & 0.645                   & 1.000                    & 1.000                   & 0.189                    & 0.366                   \\
A1+A2             & 0.242                    & 0.329                   & 0.043                    & 0.106                   & 0.997                    & 1.000                   & 0.601                    & 0.777                   & 1.000                    & 1.000                   & 0.268                    & 0.487                   \\ \hline
\end{tabular}

\begin{tabular}{lrrrrrrrrrrrr}
\hline
$(n,p) = (50,50)$ & \multicolumn{4}{c}{$H_0$}                                                                               & \multicolumn{4}{c}{$H_1(2CP)$}                                                                          & \multicolumn{4}{c}{$H_2(3CP)$}                                                                          \\ \cline{2-13} 
$AR(0.8)$         & \multicolumn{2}{c}{SN}                             & \multicolumn{2}{c}{Boot}                           & \multicolumn{2}{c}{SN}                             & \multicolumn{2}{c}{Boot}                           & \multicolumn{2}{c}{SN}                             & \multicolumn{2}{c}{Boot}                           \\ \hline
$\alpha$          & \multicolumn{1}{c}{0.05} & \multicolumn{1}{c}{0.1} & \multicolumn{1}{c}{0.05} & \multicolumn{1}{c}{0.1} & \multicolumn{1}{c}{0.05} & \multicolumn{1}{c}{0.1} & \multicolumn{1}{c}{0.05} & \multicolumn{1}{c}{0.1} & \multicolumn{1}{c}{0.05} & \multicolumn{1}{c}{0.1} & \multicolumn{1}{c}{0.05} & \multicolumn{1}{c}{0.1} \\ \hline
A0                & 0.226                    & 0.288                   & 0.052                    & 0.116                   & 0.319                    & 0.386                   & 0.125                    & 0.245                   & 0.299                    & 0.371                   & 0.091                    & 0.180                   \\
A1                & 0.307                    & 0.386                   & 0.045                    & 0.122                   & 0.972                    & 0.990                   & 0.435                    & 0.609                   & 0.994                    & 0.997                   & 0.230                    & 0.389                   \\
A2                & 0.413                    & 0.524                   & 0.066                    & 0.121                   & 0.914                    & 0.946                   & 0.211                    & 0.380                   & 0.968                    & 0.981                   & 0.142                    & 0.263                   \\
A1+A2             & 0.314                    & 0.397                   & 0.062                    & 0.125                   & 0.935                    & 0.954                   & 0.266                    & 0.454                   & 0.966                    & 0.981                   & 0.161                    & 0.295                   \\ \hline
\end{tabular}
\end{table}
According to Table 2, we can see that even for homoscedastic case (trend type A0), the SN-based test  is unable to control the size, which is presumably due to relatively small sample size $n$ and dimension $p$. The size distortion for the SN-based test when there are time varying heterocedasticity is obvious.  In comparison, the bootstrap test shows quite accurate size for both homoscedastic and  heteroscedastic cases. Notice that under the alternatives where there are 2 or 3 change points, the proposed bootstrap test still shows respectable power in most cases. Due to the size inflation of SN-based test, the interpretation of its high power needs to be done with caution. Overall, we would not recommend to use SN-based test when there is time varying heteroscedasticity and bootstrap-based test is preferred.

\subsection{Estimation}
\label{sub:estimation}

In this subsection, we examine the finite sample performance of the WBS+Bootstrap based change point estimation method described in Section 4. We followed the same setting used in \cite{wang2019inference}. Let $n=120$, $p=50$, and change point locations are $30,60$ and $90$. These change points partitioned the data into four zones. We draw i.i.d. normal samples from $N(\bm \nu_j,\bm I_p), j = 1, 2, 3, 4$ for each zone. Let $\bm \theta_j = \bm \nu_{j+1} -\bm \nu_j$ be the strength of the signals. For the dense case, we choose $\bm \theta_1 = k\times \bm 1_p$, $\bm \theta_2 = -k\times \bm 1_p$, $\bm \theta_3 = k\times \bm 1_p$ and $k = \sqrt{2.5/p}, 2\sqrt{2.5/p}$. We consider all the trend functions.

In addition to reporting the frequency for the difference between the estimated number of change points and the actual number of change points ($\hat N - N $), we also use the mean squared error (MSE) of $(\hat N - N)$ to measure the estimation accuracy for the number of change point. Similar to the comparison in \cite{wang2019inference}, we can view the change point estimation problem as a special case of classification. We treat the data between two successive change points as if they are in the same category, and evaluate the classification accuracy based on Adjusted Rand Index (ARI) \citep{rand1971objective,hubert1985comparing,wang2018high}. ARI can only take values between 0 and 1, and the larger ARI is associated with the better accuracy. When all change points are estimated perfectly, the ARI is 1. If there is no change point estimated, the corresponding ARI is 0.  The results are summarized in Table 3. 
Notice that for weaker signal $k = \sqrt{2.5/p}$, even when there is no trend (Type A0), the WBS+SN is unable to provide an accurate estimate while our WBS+Bootstrap correctly estimates the number and location of change points. When there is heteroscedasticity, our WBS+Bootstrap also substantially outperforms WBS+SN, in particular for trend types A2, A4, A1+A2 and A1+ A4. Both methods perform worse for trend type A2 which corresponds to a linear trend, while the WBS+Bootstrap still maintains reasonably good MSE and ARI. For the strong signal $k = 2\sqrt{2.5/p}$ case, WBS+SN  still cannot compete with WBS+Bootstrap when there is no trend (A0).  For the heteroscedastic cases, WBS+SN seems to perform better than the no trend case. However, this is due to the fact that the trend function yields a smaller variance, which makes the signal to noise ratio larger and easier for WBS+SN to estimate the change point locations. 
For all four trend types we considered,  the estimated number of change points $\hat N$ by WBS+Bootstrap are all correct in 200 replications (i.e., MSE = 0), which is another evidence to support the superiority of WBS+Bootstrap over WBS+SN.

\begin{table}[]
\label{tab:WBS}
\caption{WBS for change point estimation}
\begin{tabular}{llrrrrrrr}
\hline
$\sqrt{2.5/p}$                     &      & \multicolumn{5}{c}{$\hat N - N$}                                                                                         & \multicolumn{1}{c}{MSE} & \multicolumn{1}{c}{ARI} \\ \cline{3-9} 
                         &      & \multicolumn{1}{c}{-3} & \multicolumn{1}{c}{-2} & \multicolumn{1}{c}{-1} & \multicolumn{1}{c}{0} & \multicolumn{1}{c}{1} & \multicolumn{1}{c}{}    & \multicolumn{1}{c}{}    \\ \hline
\multirow{2}{*}{A0}      & SN   & 196                    & 4                      & 0                      & 0                     & 0                     & 8.900                   & 0.005                   \\
                         & Boot & 90                       &   64                     &    34                    &   12                    &    0                   &   5.504                      &   0.277                      \\ \hline
\multirow{2}{*}{A1}      & SN   & 0                      & 6                      & 161                    & 33                    & 0                     & 0.926                   & 0.723                   \\
                         & Boot & 0                      & 0                      & 0                      & 200                   & 0                     & 0.000                   & 0.985                   \\ \hline
\multirow{2}{*}{A2}      & SN   & 0                      & 88                     & 110                    & 2                     & 0                     & 2.311                   & 0.524                   \\
                         & Boot & 2                      & 12                     & 46                     & 140                   & 0                     & 0.564                   & 0.865                   \\ \hline
\multirow{2}{*}{A3}      & SN   & 0                      & 0                      & 0                      & 198                   & 2                     & 0.010                   & 0.967                   \\
                         & Boot & 0                      & 0                      & 0                      & 200                   & 0                     & 0.000                   & 0.999                   \\ \hline
\multirow{2}{*}{A4}      & SN   & 15                     & 65                     & 81                     & 28                    & 1                     & 2.389                   & 0.618                   \\
                         & Boot & 0                      & 0                      & 0                      & 200                   & 0                     & 0.000                   & 0.984                   \\ \hline
\multirow{2}{*}{A1 + A2} & SN   & 0                      & 54                     & 136                    & 10                    & 0                     & 1.761                   & 0.594                   \\
                         & Boot & 0                      & 2                      & 16                     & 182                   & 0                     & 0.121                   & 0.949                   \\ \hline
\multirow{2}{*}{A1 + A3} & SN   & 0                      & 2                      & 90                     & 107                   & 1                     & 0.496                   & 0.826                   \\
                         & Boot & 0                      & 0                      & 0                      & 199                   & 1                     & 0.005                   & 0.993                   \\ \hline
\multirow{2}{*}{A1 + A4} & SN   & 11                     & 43                     & 106                    & 40                    & 0                     & 1.888                   & 0.634                   \\
                         & Boot & 0                      & 0                      & 0                      & 200                   & 0                     & 0.000                   & 0.987                   \\ \hline
\end{tabular}
\begin{tabular}{llrrrrrrr}
\hline
$2\sqrt{2.5/p}$                       &      & \multicolumn{5}{c}{$\hat N - N$}                                                                                         & \multicolumn{1}{c}{MSE} & \multicolumn{1}{c}{ARI} \\ \cline{3-9} 
                         &      & \multicolumn{1}{c}{-3} & \multicolumn{1}{c}{-2} & \multicolumn{1}{c}{-1} & \multicolumn{1}{c}{0} & \multicolumn{1}{c}{1} & \multicolumn{1}{c}{}    & \multicolumn{1}{c}{}    \\ \hline
\multirow{2}{*}{A0}      & SN   & 31                     & 55                     & 70                     & 43                    & 1                     & 2.855                   & 0.538                   \\
                         & Boot & 0                      & 0                      & 0                      & 200                   & 0                     & 0.000                   & 0.986                   \\ \hline
\multirow{2}{*}{A1}      & SN   & 0                      & 0                      & 0                      & 198                   & 2                     & 0.010                   & 0.975                   \\
                         & Boot & 0                      & 0                      & 0                      & 200                   & 0                     & 0.000                   & 0.985                   \\ \hline
\multirow{2}{*}{A2}      & SN   & 0                      & 0                      & 2                      & 197                   & 1                     & 0.015                   & 0.964                   \\
                         & Boot & 0                      & 0                      & 0                      & 200                   & 0                     & 0.000                   & 0.998                   \\ \hline
\multirow{2}{*}{A3}      & SN   & 0                      & 0                      & 0                      & 198                   & 2                     & 0.010                   & 0.981                   \\
                         & Boot & 0                      & 0                      & 0                      & 200                   & 0                     & 0.000                   & 0.999                   \\ \hline
\multirow{2}{*}{A4}      & SN   & 0                      & 0                      & 0                      & 198                   & 2                     & 0.010                   & 0.971                   \\
                         & Boot & 0                      & 0                      & 0                      & 200                   & 0                     & 0.000                   & 0.984                   \\ \hline
\multirow{2}{*}{A1 + A2} & SN   & 0                      & 0                      & 0                    & 196                    & 4                     & 0.020                   & 0.970                   \\
                         & Boot & 0                      & 0                      & 0                      & 200                   & 0                     & 0.000                   & 0.999                   \\ \hline
\multirow{2}{*}{A1 + A3} & SN   & 0                      & 0                      & 0                      & 198                   & 2                     & 0.010                   & 0.975                   \\
                         & Boot & 0                      & 0                      & 0                      & 200                   & 0                     & 0.000                   & 0.993                   \\ \hline
\multirow{2}{*}{A1 + A4} & SN   & 0                      & 0                      & 0                      & 199                   & 1                     & 0.005                   & 0.968                   \\
                         & Boot & 0                      & 0                      & 0                      & 200                   & 0                     & 0.000                   & 0.987                   \\ \hline
\end{tabular}
\centering
\end{table}

\section{Real data application}
\label{sec:data}

In this section, we compare the performance of the proposed change point location estimation method on the micro-array bladder tumor dataset. The ACGH (Array Comparative Genomic Hybridisation) data is publicly available and it contains log intensity ratio measurements for 43 individuals at 2215 different loci on their genome. The dataset is available in R package ``ecp'' and was also studied by \cite{wang2018high} and \cite{wang2019inference}. Following the latter paper, we only considered first 200 loci and perform change point estimation using WBS+Bootstrap and compare with WBS+SN.

To examine whether there are changes in the variance of each component, we apply the test for constant variance proposed by \cite{schmidt2020asymptotic} for a univariate time series to each of the 43 subjects. For a sequence of univariate random variable $D_1,\ldots,D_n$, the test statistic is constructed as follows:
\[U(n) = \frac{1}{b_n(b_n-1)}\sum_{1\le j\ne k \le b_n}|\log \hat \sigma^2_j - \log \hat \sigma^2_k|,\]
where
\[\hat \sigma^2_j = \frac{1}{l_n}\sum_{i = (j-1)l_n+1}^{jl_n}\left( D_i - \frac{1}{l_n}\sum_{r = (j-1)l_n+1}^{jl_n} D_r\right)^2,\quad l_n = \floor{n^s}, \quad b_n = \floor{n/l_n}.\]
Under the null, $U(n)$ is asymptotically normal,
\[\sqrt{b_n}\left(\frac{\sqrt{l_n}}{\hat \kappa^*}U(n)-\frac{2}{\sqrt{\pi}}\right) \xrightarrow{D} N\left(0,\frac{4}{3}+\frac{8}{\pi}(\sqrt{3}-2)\right),\]
where $\hat \kappa^{*2}$ is the estimated long run variance
\[\hat \kappa^* = \frac{1}{\Tilde{b}_n}\sqrt{\frac{\pi}{2}}\frac{1}{\hat \sigma^2_H} \sum_{j = 1}^{\Tilde{b}_n}\left| \frac{1}{\sqrt{\Tilde{l}_n}}\sum_{i = (j-1)\Tilde{l}_n +1}^{j \Tilde{l}_n}(\Tilde{D}_i^2 - \hat \sigma_H^2)\right|, \quad \Tilde{D}_i = D_i - \frac{1}{l_n}\sum_{r = (j-1)l_n + 1}^{jl_n} D_r,  \quad  \hat \sigma_H^2 = \frac{1}{n}\sum_{i = 1}^n\Tilde{D}_i^2, \]
\[\Tilde{l}_n = \floor{n^q}, \quad \Tilde{b}_n = \floor{n/\Tilde{l}_n}.\]
We set the tuning parameters $s = 0.7$ and $q = 0.5$, following the recommendation in \cite{schmidt2020asymptotic}. An appealing feature of this test is that it allows for changes in the mean, in particular, a piecewise Lipschitz-continuous mean function. We treat the resulting 43 p-values as independent and apply the Higher Criticism test~\citep{donoho2004} to determine whether there is a variance change in any of the 43 dimensions. The resulting p-value is $0.022$, which indicates quite strong evidence against the constant variance assumption for all components. The WBS+SN  yields 6 change points $\{ 39,74,87, 134, 173,191\}$, while the WBS+Bootstrap  only reports 3 change points at $\{73, 135, 173\}$, which largely coincides with the three change-points $\{74,134,173\}$ detected by WBS+SN. The additional change-point locations obtained from WBS+SN could be spurious due to the variance instability un-accounted for in the latter procedure.


\section{Conclusion}
\label{sec:conclusion}

In this paper, we develop a bootstrap-based test for the mean changes in high-dimensional  heteroscedastic data. Existing literature on high-dimensional mean change detection  exclusively focuses on the homoscedastic case, and the applicability of existing tests is questionable when there is time-varying heterscedasticity. Building on the U-statistic approach proposed in \cite{wang2019inference}, we develop a new test statistic and a bootstrap-based approximation for single change point testing. The bootstrap consistency is justified under mild assumptions on the heteroscedasticity and componentwise dependence. Our test involves no tuning parameters and is easy to implement.  Extensions to multiple change-points testing and estimation using WBS are also presented.  Numerical comparison demonstrates the robustness of our proposed testing and estimation procedures with respect to time-varying heteroscedasticity and the degree of panel dependence. 


To conclude, we mention a few possible extensions. First, it would be interesting to extend our method to allow temporal dependence, that is, assuming $\{Z_i\}$ to be  stationary and weakly dependent instead of independent observations. Under this setting, the Gaussian multiplier bootstrap may not be adequate. The dependent wild bootstrap proposed in \cite{shao2010wdependent} may be needed to capture the serial dependence. Second,  as the numerical results suggest, the bootstrap-based test may still work when the panel dependence is strong, i.e., compound symmetric case. It would be desirable to expand our theory to cover this interesting case. Third, we did not provide any theoretical support for the consistency of WBS+Bootstrap, although the empirical performance is very encouraging. Further theoretical investigation is left for future work.


\newpage

\appendix

\section{Appendix}
\label{sec:appendix}

In the appendix, we include all the technical proofs for the theorems. Note that under $H_0$, the test statistics $T_n$ can be viewed a continuous transformation of a partial sum process
\[
S_n(a,b) = \sum_{i =\floor{na}+1}^{\floor{nb}-1} \sum_{j = \floor{na}+1}^i X_{i+1}^TX_{j}
\]
for any $0 \le a < b \le 1$ and $\floor{na}+1 \le \floor{nb}-1$.

Consider the following representation of the bootstrapped version of the partial sum process $S^*_n(a,b)$:
\begin{align*}
S_n^*(a,b) &= \sum_{i = \floor{na}+1}^{\floor{nb}-1} \sum_{j= \floor{na}+1}^i (X_{i+1} - \bar X)^T (X_j - \bar X) e_{i+1}e_j
\\
&=  \sum_{i = \floor{na}+1}^{\floor{nb}-1} \sum_{j= \floor{na}+1}^i  X_{i+1}^TX_j e_{i+1}e_j  - \bar X^T \sum_{i = \floor{na}+1}^{\floor{nb}-1} \sum_{j= \floor{na}+1}^i  X_j e_{i+1}e_j \\
& \quad - \bar X^T \sum_{i = \floor{na}+1}^{\floor{nb}-1} \sum_{j= \floor{na}+1}^i  X_{i+1} e_{i+1}e_j + \bar X^T \bar X \sum_{i = \floor{na}+1}^{\floor{nb}-1} \sum_{j= \floor{na}+1}^i e_{i}e_j\\
& = S_{n,1}^*(a,b) + S_{n,2}^*(a,b) + S_{n,3}^*(a,b) + S_{n,4}^*(a,b)
\end{align*}
The proofs are divided into three subsections: In Section 8.1, we show the bootstrap process $S_n^*(a,b)$ converges to the same limiting process of $S_n(a,b)$ by using the unconditional convergence argument proposed in \cite{bucher2019note}. The asymptotic results in Theorem 1 and  Theorem 2 follow from these arguments. In Section 8.2, we study the behavior of  $S_n(a,b)$ and $S_n^*(a,b)$ under three different kinds of alternatives. The power of the bootstrap test presented in Theorem 3 follows from these results. Finally, in Section 8.3, we show theoretical results for multiple change points testing, which is a generalization of the first two parts.


\subsection{Proof of Theorems~\ref{thm:1} and~\ref{thm:2}}

\subsubsection{Process convergence of $S_n$ and $S_n^*$ under the null.}

This section contains the crucial technical ingredient for establishing bootstrap consistency. Let
\[
S_n^{k,*}(a,b) := \sum_{i = \floor{na}+1}^{\floor{nb}-1} \sum_{j= \floor{na}+1}^i (X_{i+1} - \bar X)^T (X_j - \bar X) e_{i+1,k}e_{j,k}
\]
where $\{e_{i,k}\}_{i=1,\dots,n}, k= 1,2$ denote two independent collections of i.i.d. $N(0,1)$ random variables. The main result in this section establishes process convergence of $S_n$ and joint convergence of $(S_n,S_n^{1,*},S_n^{2,*})$ under the null. The latter result will be later combined with and the results in \cite{bucher2019note} to establish bootstrap consistency under the null.  

\begin{proposition}\label{prop:convSn}

Let Assumptions~\ref{as:1}-\ref{as:4} hold. Then
\[
\left\{ \frac{1}{n \|\Sigma\|_F} S_n(a,b)\right\}_{(a,b) \in[0,1]^2} \rightsquigarrow Q \text{ in } \ell^{\infty}([0,1]^2),
\]
where the centered Gaussian process $Q$ is defined in Theorem~\ref{thm:1}. If Assumption~\ref{as:5} also holds then
\begin{equation}\label{eq:SnJoint}
\left(\left\{ \frac{S_n(a,b)}{n \|\Sigma\|_F} \right\}_{(a,b) \in[0,1]^2},\left\{ \frac{S_n^{1,*}(a,b)}{n \|\Sigma\|_F} \right\}_{(a,b) \in[0,1]^2},\left\{ \frac{S_n^{2,*}(a,b)}{n \|\Sigma\|_F} \right\}_{(a,b) \in[0,1]^2}\right) \rightsquigarrow (Q,Q^{(1)},Q^{(2)})
\end{equation}
where $Q,Q^{(1)},Q^{(2)}$ are i.i.d. copies of $Q$ and convergence takes place in $\ell^{\infty}([0,1]^2)\times\ell^{\infty}([0,1]^2)\times\ell^{\infty}([0,1]^2)$. Moreover, the sample paths of each process are asymptotically uniformly equicontinuous in probability with respect to the Euclidean metric in $[0,1]^2$.
\end{proposition}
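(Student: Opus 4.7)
The standard route is to establish (a) finite-dimensional convergence and (b) asymptotic equicontinuity of the three normalized processes, which together yield the desired weak convergence in $\ell^\infty([0,1]^2)$. Before tackling either, I would reduce the bootstrap process to its dominant term. The excerpt already writes $S_n^* = S_{n,1}^* + S_{n,2}^* + S_{n,3}^* + S_{n,4}^*$ for each copy $k=1,2$; only $S_{n,1}^{k,*}(a,b) = \sum_{i,j} X_{i+1}^T X_j e_{i+1,k} e_{j,k}$ retains the correct scale, while the other three carry extra factors of $\bar X$. Expanding $\bar X$ componentwise and computing second moments, the bound $\kappa_4 = o(n^2\|\Sigma\|_F^2)$ implied by Assumption~\ref{as:5}, in combination with Assumption~\ref{as:2}, is exactly what is needed to show $\sup_{(a,b)} |S_{n,j}^{k,*}(a,b)|/(n\|\Sigma\|_F) \stackrel{P}{\to} 0$ for $j=2,3,4$. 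This reduces the task to joint weak convergence of the triple $(S_n, S_{n,1}^{1,*}, S_{n,1}^{2,*})/(n\|\Sigma\|_F)$.

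For finite-dimensional convergence, the key observation is that both $S_n(a,b)$ and $S_{n,1}^{k,*}(a,b)$ can be written as sums of martingale differences with respect to the joint filtration $\mathcal{G}_l := \sigma(X_1,\dots,X_l, e_{1,1}, e_{1,2},\dots, e_{l,1}, e_{l,2})$. Under $H_0$ the summand $X_l^T \sum_{j<l} X_j$ of $S_n$ is $\mathcal{G}_{l-1}$-centered because $X_l$ is independent of $\mathcal{G}_{l-1}$ and has mean zero; the summand $X_l^T \sum_{j<l} X_j e_{l,k} e_{j,k}$ of $S_{n,1}^{k,*}$ is likewise centered because $e_{l,k}$ is independent of $\mathcal{G}_{l-1}$ with mean zero. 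For any finite grid $\{(a_s,b_s)\}$ and any linear combination, the resulting scalar process is a martingale to which I would apply a martingale CLT. The conditional-variance calculation, once normalized by $n^2\|\Sigma\|_F^2$, reproduces the double sum appearing in Assumption~\ref{as:4} and yields the stated covariance of $Q$. Crucially, the conditional cross-covariances between $S_n$ and $S_n^{k,*}$, and between $S_n^{1,*}$ and $S_n^{2,*}$, vanish due to unpaired multiplier factors $E[e_{l,k}]=0$; this is what produces the asymptotic independence of $Q, Q^{(1)}, Q^{(2)}$. The Lindeberg-type negligibility condition I would verify through a $(2+\delta)$-moment bound on the martingale differences, which follows from the cumulant control of Assumption~\ref{as:2} applied to quadratic forms in the $Z_i$'s. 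A Cram\'er--Wold argument then delivers the joint finite-dimensional convergence.

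For tightness I would prove a moment bound of the form $E\bigl[(S_n(a_1,b_1)-S_n(a_2,b_2))^2\bigr] \le C n^2 \|\Sigma\|_F^2 \,\rho\bigl((a_1,b_1),(a_2,b_2)\bigr)$ with $\rho$ a metric dominated by the Euclidean distance, and if needed a sharper fourth-moment estimate to drive a chaining argument. Such bounds come from direct variance computations on the rectangular U-statistic-type blocks that make up $S_n(a_1,b_1)-S_n(a_2,b_2)$, again using Assumptions~\ref{as:2} and~\ref{as:3} to control the cumulant and heteroscedasticity contributions. The analogous bound for $S_{n,1}^{k,*}$ is obtained with the same machinery, since expectations factor across the $X_i$'s and the multipliers and $E[e_{l,k}^2]=1$, $E[e_{l,k}^4]=3$ are explicit constants. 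An identical argument applied to the joint triple yields stochastic equicontinuity of the three processes, completing the proof.

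The main obstacles I anticipate are twofold. First, verifying the Lindeberg condition for the high-dimensional martingale CLT is delicate, since the martingale differences are quadratic forms in the $Z_i$'s whose higher moments must be controlled uniformly in $n$ through Assumption~\ref{as:2} rather than through pointwise bounds. Second, the moment bounds on increments must be uniform in $(a,b)$ with a sharp enough rate to imply tightness; this requires decomposing the U-statistic sum into pieces in which contributions of indices in different positions relative to the $(a_1,b_1)\to(a_2,b_2)$ perturbation can be controlled separately. The negligibility of the $\bar X$-dependent remainders is routine once Assumption~\ref{as:5} is available, but the bookkeeping across the three remainder terms $S_{n,j}^{k,*}$, $j=2,3,4$, is tedious.
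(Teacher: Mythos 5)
Your plan matches the paper's proof essentially step for step: reduction of $S_n^{k,*}$ to the leading term $S_{n,1}^{k,*}$ using Assumption~\ref{as:5}, joint finite-dimensional convergence via a martingale CLT with respect to the filtration generated by the $X_i$ and both multiplier sequences (with cross-terms killed by $E[e_{l,k}]=0$ and the conditional variance converging to the limit in Assumption~\ref{as:4}), and tightness via moment bounds on increments controlled through the cumulant condition. The only quantitative caveat is that for the index set $[0,1]^2$ a second- or fourth-moment increment bound is not sharp enough; the paper establishes a sixth-moment bound of the form $E[|S_n(u)-S_n(v)|^6] \le C n^6\|\Sigma\|_F^6(\|u-v\|_2^3+n^{-3})$, which your "sharper estimate if needed" clause would have to be upgraded to.
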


The proof of Proposition~\ref{prop:convSn} is long and technical and will be split over several subsections. Since the proof of the second statement contains the proof of the first statement, we will only provide that proof. A close look will reveal that all parts which are relevant to showing the first part go through without Assumption~\ref{as:5}.


\begin{proof}[Proof of Theorem~\ref{thm:1}]
Begin by observing the representation
\begin{align*}
\tilde G_n(k) = ~& \frac{2(n-k)(n-k-1)}{n^3} \widetilde S_n(1,k) + \frac{2 k(k-1)}{n^3}\widetilde S_n(k+1,n)
\\
& - \frac{2k(n-k)}{n^3} (\widetilde S_n(1,n) - \widetilde S_n(1,k) - \widetilde S_n(k+1,n))
\end{align*}
where
\[
\widetilde S_n(k,m) := \sum_{i = k}^{m}\sum_{j = k}^{i} X_{i+1}^TX_j.
\]
Proposition~\ref{prop:convSn} and uniform asymptotic equi-continuity of the sample path of $S_n$ in probability together with some simple calculations yields,
\[
\frac{1}{\|\Sigma\|_F}\Tilde{G}_n(\floor{nr})\rightsquigarrow G(r) := 2(1-r)Q(0,r) + 2rQ(r, 1) - 2r(1-r) Q(0, 1).
\]
Since the sample paths of $Q$ are uniformly continuous with respect to the Euclidean metric on $[0,1]^2$, a simple calculation shows that the sample paths of $G(r;0,1)$ are uniformly continuous with respect to the Euclidean metric on $[0,1]$. Consider the maps
\[
\Phi_n(f) := \max_{k = 2,\ldots,n-3} f(k/n),
\]
defined for bounded functions $f:[0,1] \to \mathbb{R}$. With this definition, we have $T_n = \Phi_n(\Tilde{G}_n)$.
Consider the map 
\[
\Phi(f) = \sup_{r\in[0,1]} f(r)
\]
defined for bounded functions $f: [0,1] \to \R$. It is straightforward to see that, for any sequence of bounded functions $f_n$ with $\|f_n - f\|_\infty = o(1)$ for a continuous function $f$, we have $\Phi_n(f_n) \to \Phi(f)$. Applying the extended continuous mapping theorem (see Theorem 1.11.1 in \cite{van1996weak}), this implies 
\[
\frac{1}{\|\Sigma\|_F}T_n = \Phi_n\Big(\frac{1}{\|\Sigma\|_F} \Tilde{G}_n\Big) \rightsquigarrow \Phi(G) = \sup_{r\in [0,1]}G(r) =T,
\] 
which completes the proof.
\end{proof}

\noindent
\textit{Proof of Theorem \ref{thm:2}} 
Similar arguments as given in the proof of Theorem~\ref{thm:1} but utilizing equation~\eqref{eq:SnJoint} instead of the first part of that proposition show that
\begin{equation}\label{eq:tnbjoint}
\Big(\frac{T_n}{\|\Sigma\|_F},\frac{T_{n,1}^*}{\|\Sigma\|_F}, \frac{T_{n,2}^*}{\|\Sigma\|_F}\Big) \weak (T,T',T'')
\end{equation}
where $T', T''$ are i.i.d. copies of $T$ and $T_{n,1}^*,T_{n,2}^*$ are two copies of the bootstrap statistic each with independent sets of multipliers $e_i$.

Next observe that by Corollary 1.3 and Remark 4.1 in \cite{gaenssler2007continuity} the function
\[
t \mapsto P\Big( \sup_{r \in [0,1]} |G(r)| \leq t \Big)
\] 
is continuous on $\R$ and strictly increasing on $\R^+$. This implies that the function 
\[
H(t) := P\Big( \sup_{r \in [0,1]} G(r) \leq t \Big)
\]
satisfies $H(\eps) > 0$ for all $\eps >0$. Indeed,
\[
P\Big( \sup_{r \in [0,1]} G(r) \leq \eps \Big) \geq P\Big( \sup_{r \in [0,1]} |G(r)| \leq \eps \Big) > 0
\]
since the latter is strictly increasing on $\R^+$. Thus the left support point of the cdf $H$ must be in $t \leq 0$. Hence by Theorem 1 in \cite{tsirel1976density} the function $H$ is continuous on $(0,\infty)$. Clearly $H(0) \leq 1 - P(G(1/2) > 0) = 1/2$.  

The proof is completed by observing that the conclusion of Lemma 4.2 in \cite{bucher2019note} remains true if continuity of the cdf $F$ in there (corresponding to $H$ in our case) is replaced by continuity on $(0,\infty)$ and the additional assumption $G^{-1}(1-\alpha) \in (0,\infty)$ is made (note that $1-\alpha$ in our notation corresponds to $\alpha$ in \cite{bucher2019note}). The condition $G^{-1}(1-\alpha) \in (0,\infty)$ is guaranteed by the assumption $\alpha < 1/2$. Next observe that~\eqref{eq:tnbjoint} verifies Condition (a) in Lemma 2.2 in the latter paper. Condition 4.1 in \cite{bucher2019note} is satisfied as well and relaxing continuity of the cdf of the limit was described above. This completes the proof. \hfill $\Box$

\subsection{Proof of Proposition~\ref{prop:convSn}}\label{sec:pfconvss*}

We begin by providing an overview of the proof: first, we show that the process $S_{n}^{k,*}$ admits the representation
\begin{equation}\label{eq:SnStarrepr}
S_n^{k,*}(a,b) = \sum_{i = \floor{na}+1}^{\floor{nb}-1} \sum_{j= \floor{na}+1}^i X_{i+1}^TX_j e_{i+1}e_j + o_P(n\|\Sigma\|_F) =: S_{n,1}^{k,*}(a,b) + o_P(n\|\Sigma\|_F)
\end{equation}
uniformly in $a,b \in [0,1]$, see section~\ref{sec:pfSn*repr}. Thus it suffices to establish~\eqref{eq:SnJoint} with $S_{n,1}^{k,*}$ instead of $S_{n}^{k,*}$. Then, in section~\ref{sec:asyeq}, we show that under Assumptions~\ref{as:1}--\ref{as:4},
\begin{align}  \label{eq:asyeqSn}
\lim_{\delta \downarrow 0} \limsup_{n\rightarrow \infty} P\Big( \sup_{\|u-v\|_2 \le \delta} \Big|\frac{1}{n||\Sigma||_F}S_n(u)- \frac{1}{n\|\Sigma\|_F}S_n(v)\Big| > x\Big) = 0.
\\ \label{eq:asyeqSnStar}
\lim_{\delta \downarrow 0} \limsup_{n\rightarrow \infty} P\Big( \sup_{\|u-v\|_2 \le \delta} \Big|\frac{1}{n||\Sigma||_F}S_{n,1}^{1,*}(u)- \frac{1}{n\|\Sigma\|_F}S_{n,1}^{1,*}(v)\Big| > x\Big) = 0.
\end{align}
This implies that each of the processes $S_n/n\|\Sigma\|_F$, $S_{n,1}^{1,*}/n\|\Sigma\|_F$, $S_{n,1}^{2,*}/n\|\Sigma\|_F$ is tight. Finally, we show joint finite-dimensional convergence of the processes $S_n/n\|\Sigma\|_F, S_{n,1}^{1,*}/n\|\Sigma\|_F, S_{n,1}^{2,*}/n\|\Sigma\|_F$ to the joint limit in~\eqref{eq:SnJoint}, again under Assumptions~\ref{as:1}--\ref{as:4} (see section \ref{sec:sn*fidi}). Combined, the results above imply the statement in~\eqref{eq:SnJoint}. Note in particular that process convergence of $S_n/(n\|\Sigma\|_F)$ follows under just Assumptions~\ref{as:1}--\ref{as:4} without utilizing Assumption~\ref{as:5}.

Before proceeding, we state a useful technical Lemma that we will utilize in several places throughout the proof.

\begin{lemma}\label{lem:boundmom}
Under Assumptions~\ref{as:2} and \ref{as:3} we have for a constant $\tilde C$ independent of $n,p$ and the distribution of $Z$ we have for $s=4,6$
\[
\max_{j_1,\dots,j_s,k_1,\dots,k_s = 1, k_i \neq j_i} \Big| E[X_{k_1}^TX_{j_1}\cdots X_{k_s}^TX_{j_s}] \Big| \leq \tilde CB^{2s} \|\Sigma\|_F^s
\]
\end{lemma}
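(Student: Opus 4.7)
The plan is to expand the bilinear forms coordinate-wise using the diagonal structure of $H(\cdot)$, apply the moment-to-cumulant expansion to the resulting expectation, and bound each surviving combinatorial term by means of Assumption~\ref{as:2}. Since under the null (or after appropriate centering) we may take $X_i = H(i/n)Z_i$, writing $H_i := H(i/n)$ gives $X_k^T X_j = \sum_{l=1}^p [H_k]_{ll}[H_j]_{ll} Z_{k,l} Z_{j,l}$. Multiplying the $s$ factors, taking expectations, and using Assumption~\ref{as:3} to pull out the deterministic prefactor bounded by $B^{2s}$, the problem reduces to showing
\[
\sum_{l_1,\ldots,l_s = 1}^p \Big| E\Big[\prod_{t=1}^s Z_{k_t,l_t}Z_{j_t,l_t}\Big] \Big| \;\leq\; \tilde C \|\Sigma\|_F^s
\]
uniformly over admissible index tuples.

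Next I would expand the expectation via the moment-to-cumulant identity into a sum over set partitions $\pi$ of the $2s$ position labels $\{(t,k),(t,j):t=1,\ldots,s\}$, with summand a product over the blocks of $\pi$ of joint cumulants of the corresponding $Z$-components. Three structural observations narrow the range of partitions: (i) independence of $Z_i$ across $i$ forces every block of $\pi$ to lie inside a single first-index class; (ii) $E[Z_1]=0$ forces every block to have size at least $2$; and (iii) the constraint $k_t\ne j_t$ places positions $(t,k)$ and $(t,j)$ in \emph{different} blocks, so each coordinate index $l_t$ appears in exactly two distinct blocks of $\pi$, and within any single block all coordinate indices are distinct. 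In particular, every surviving cumulant has size at most $s\le 6$, which matches the range covered by Assumption~\ref{as:2}.

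It then remains to bound, for each admissible $\pi$ with blocks $B_1,\ldots,B_r$ whose sizes sum to $2s$, the quantity $\sum_{l_1,\ldots,l_s}\prod_{i=1}^r\bigl|\mathrm{cum}(Z_{1,l_t}:t\in B_i)\bigr|$ by a constant multiple of $\|\Sigma\|_F^s$. Because each $l_t$ appears in exactly two factors, I would apply Cauchy-Schwarz inductively, selecting one variable at a time and splitting the pair of factors containing it; after eliminating all $s$ variables the bound collapses to a product of $L^2$-norms taken over all components of each original cumulant, each controlled by $\sqrt{C}\,\|\Sigma\|_F^{|B_i|/2}$ via Assumption~\ref{as:2}, giving total exponent $\sum_i |B_i|/2 = s$. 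Summing over the at most $B_{2s}$ admissible partitions (a number depending only on $s$) yields the claim with a constant $\tilde C$ that absorbs both the combinatorial count and the constant $C$ from Assumption~\ref{as:2}. The main obstacle is the bookkeeping in the iterated Cauchy-Schwarz step: at each stage one must verify that the intermediate auxiliary tensors produced after splitting remain controllable by Assumption~\ref{as:2}, and that the ``each variable appears in exactly two factors'' structure is preserved along the reduction; this is most cleanly carried out by induction on the number of summation variables $s$.
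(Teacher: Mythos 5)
Your proposal is correct, but it follows a genuinely different route from the paper. The paper first applies the generalized H\"older inequality across the $s$ factors to reduce everything to $\max_{k \neq j} E[|X_k^TX_j|^s]$; the payoff is that for a \emph{single} pair $k \neq j$, independence and identical distribution turn the coordinate-wise expectation into a perfect square, $E[Z_{k,l_1}\cdots Z_{k,l_s}Z_{j,l_1}\cdots Z_{j,l_s}] = \bigl(E[Z_{1,l_1}\cdots Z_{1,l_s}]\bigr)^2$, so after the elementary bound $\bigl(\sum_{\pi} a_\pi\bigr)^2 \le |\mathcal{P}_s|\sum_\pi a_\pi^2$ the coordinate sum factorizes \emph{exactly} over the blocks of each partition into the squared-cumulant sums of Assumption~\ref{as:2} --- no cross-block contraction ever appears. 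You instead expand the full $2s$-fold product and are then forced to control a genuine contraction of a cumulant network in which each coordinate variable is shared by exactly two blocks; your iterated Cauchy--Schwarz step (a fractional-cover/generalized H\"older argument with all exponents equal to $2$, valid here because every variable has degree exactly two and no self-loops occur thanks to $k_t \neq j_t$) does close this, and your structural observations (blocks confined to a single time index, size between $2$ and $s$, hence within the range $h \le 6$ of Assumption~\ref{as:2}) are all accurate. The trade-off is that your route handles the general index configuration head-on at the price of the tensor-network bookkeeping you yourself flag as the main obstacle, whereas the paper's H\"older-first reduction is precisely engineered to make that bookkeeping disappear and to land directly on the sum-of-squared-cumulants form of Assumption~\ref{as:2}. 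If you keep your route, the one step you should write out in full is the induction showing that each partial $\ell^2$-reduction preserves the degree-two structure and that the accumulated norms telescope to $\prod_i \|T_i\|_2 \le C^{r/2}\|\Sigma\|_F^{\sum_i |B_i|/2} = C^{r/2}\|\Sigma\|_F^{s}$.
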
 
\textit{Proof}: 
Observe that by the generalized version of H\"older's inequality 
\begin{align*}
\Big| E[X_{k_1}^TX_{j_1}\cdots X_{k_s}^TX_{j_s}] \Big|
\leq E[|X_{k_1}^TX_{j_1}|^s]^{1/s}\cdots E[|X_{k_s}^TX_{j_s}|^s]^{1/s}
\leq \max_{k \neq j} E[|X_{k}^TX_{j}|^s]
\end{align*}
Now let $\mathcal{P}_s$ denote the set of disjoint partitions $\pi$ of the set $1,\dots,s$ such that 
With this notation we obtain for $k \neq j$ 
\begin{align*}
E[|X_{k}^TX_{j}|^s]& = \Big| \sum_{l_1,\dots,l_s=1}^p H_{l_1}(k/n)H_{l_1}(j/n)\cdots H_{l_s}(k/n)H_{l_s}(j/n) E[ Z_{k,l_1}Z_{j,l_1}\cdots Z_{k,l_s}Z_{j,l_s} ] \Big|
\\
& = \Big| \sum_{l_1,\dots,l_s=1}^p H_{l_1}(k/n)H_{l_1}(j/n)\cdots H_{l_s}(k/n)H_{l_s}(j/n) \Big(E[ Z_{k,l_1} \cdots Z_{k,l_s}]\Big)^2 \Big|
\\
& \leq B^{2s}\sum_{l_1,\dots,l_s=1}^p  \Big(E[ Z_{1,l_1}\cdots Z_{1,l_s}]\Big)^2
\\
& =  B^{2s} \sum_{l_1,\dots,l_s=1}^p \Big(\sum_{\pi \in\mathcal{P}_s} \prod_{B \in \pi} cum(Z_{1,l_j}: j \in B) \Big)^2
\\
& \leq B^{2s} |\mathcal{P}_s| \sum_{l_1,\dots,l_s=1}^p \sum_{\pi \in\mathcal{P}_s} \prod_{B \in \pi} cum(Z_{1,l_j}: j \in B)^2 
\\
& = B^{2s} |\mathcal{P}_s| \sum_{\pi \in\mathcal{P}_s} \prod_{B \in \pi} \Big\{ \sum_{l_k: k \in B} cum(Z_{1,l_j}: j \in B)^2 \Big\}
\\
& \leq  B^{2s} |\mathcal{P}_s| \sum_{\pi \in\mathcal{P}_s} \prod_{B \in \pi} \|\Sigma\|_F^{|B|} 
\\
& = B^{2s} |\mathcal{P}_s| \|\Sigma\|_F^s 
\end{align*}
where the second equality uses stationarity and independence across $t$ of $\{Z_t\}$ and the last inequality follows by Assumption~\ref{as:2}. Setting $\tilde C = |\mathcal{P}_s|$ completes the proof. 

\hfill $\Box$


\subsubsection{Proof of~\eqref{eq:asyeqSn} and~\eqref{eq:asyeqSnStar}} \label{sec:asyeq}

Both proofs follow the same principle. Observe that the processes $S_n, S_{n,1}^*$ are piecewise constant on their index set and their values are entirely determined by their values on the grid $\{(i/n,j/n): i,j=0,\dots,n\}$. Now following the arguments in section 8.8.1 in  \cite{wang2019inference} it is clear that \eqref{eq:asyeqSn} and~\eqref{eq:asyeqSnStar} follow if we prove that there exists a constant $C$ which is independent of $n$ such that
\begin{align*}
\sup_{u,v \in [0,1]^2}E\Big[\frac{|S_n(u) - S_n(v)|^6}{n^6\|\Sigma\|_F^6}\Big] \leq C (\|u-v\|_2^3 + n^{-3}),
\\
\sup_{u,v \in [0,1]^2}E\Big[\frac{|S_{n,1}^{*}(u) - S_{n,1}^{*}(v)|^6}{n^6\|\Sigma\|_F^6}\Big] \leq C (\|u-v\|_2^3 + n^{-3}).
\end{align*} 
Next, a close look at the proof of (8.18) in \cite{wang2019inference} shows that it suffices to show that, for a possibly different constant $C$,
\begin{align*}
&\max_{j_1,\dots,j_6,k_1,\dots,k_6 = 1, k_i \neq j_i} \Big| E[X_{k_1}^TX_{j_1}\cdots X_{k_6}^TX_{j_6}] \Big| \leq  C \|\Sigma\|_F^6,
\\
&\max_{j_1,\dots,j_6,k_1,\dots,k_6 = 1, k_i \neq j_i} \Big| E[X_{k_1}^TX_{j_1}\cdots X_{k_6}^TX_{j_6} e_{k_1}e_{j_1}\cdots e_{k_6} e_{j6}] \Big| \leq  C \|\Sigma\|_F^6.
\end{align*}
The first bound is a direct consequence of Lemma~\ref{lem:boundmom}. For the second bound, note that by independence of $e_i$ and $X_i$
\[
\Big| E[X_{k_1}^TX_{j_1}\cdots X_{k_6}^TX_{j_6} e_{k_1}e_{j_1}\cdots e_{k_6} e_{j6}] \Big| = \Big| E[X_{k_1}^TX_{j_1}\cdots X_{k_6}^TX_{j_6}] E[ e_{k_1}e_{j_1}\cdots e_{k_6} e_{j6}] \Big|
\]
and the claim follows from Lemma~\ref{lem:boundmom} since the $e_i$ are standard normal and have finite moments of all orders. Note that the proofs in this section did not make use of Assumption~\ref{as:5} and all arguments hold under Assumptions~\ref{as:1}--\ref{as:4}. \hfill $\Box$ 


\subsubsection{Proof of~\eqref{eq:SnStarrepr}}\label{sec:pfSn*repr}

Throughout this section we will drop the index $k$ in $S_n^{k,*}$ for notational convenience. Consider the decomposition 
\begin{align*}
S_n^*(a,b) &= \sum_{i = \floor{na}+1}^{\floor{nb}-1} \sum_{j= \floor{na}+1}^i (X_{i+1} - \bar X)^T (X_j - \bar X) e_{i+1}e_j
\\
&=  \sum_{i = \floor{na}+1}^{\floor{nb}-1} \sum_{j= \floor{na}+1}^i  X_{i+1}^TX_j e_{i+1}e_j  - \bar X^T \sum_{i = \floor{na}+1}^{\floor{nb}-1} \sum_{j= \floor{na}+1}^i  X_j e_{i+1}e_j \\
& \quad - \bar X^T \sum_{i = \floor{na}+1}^{\floor{nb}-1} \sum_{j= \floor{na}+1}^i  X_{i+1} e_{i+1}e_j + \bar X^T \bar X \sum_{i = \floor{na}+1}^{\floor{nb}-1} \sum_{j= \floor{na}+1}^i e_{i}e_j\\
& = S_{n,1}^*(a,b) - S_{n,2}^*(a,b) - S_{n,3}^*(a,b) + S_{n,4}^*(a,b).
\end{align*}
Observe that $0 \leq \bar X^T \bar X$ and that
\[
E[\bar X^T \bar X] = \frac{1}{n^2} \sum_{i,j} E[X_i^TX_j] \leq \frac{B^2}{n} tr(\Sigma)
\]
since we are under the null and the $X_i$ are centered. Thus
\begin{equation}\label{eq:xbartxbar}
\bar X^T \bar X = O_P(tr(\Sigma)/n)
\end{equation}
Moreover
\[
\sup_{a,b \in [0,1]} \Big|\frac{1}{n} \sum_{i = \floor{na}+1}^{\floor{nb}-1} \sum_{j= \floor{na}+1}^i e_{i}e_j \Big| = O_P(1)
\]  
since the above term is simply the process $S_n$ with $e_i$ instead of $X_i$ and $S_n$ converges weakly under Assumptions~\ref{as:1}--\ref{as:4} as argued in the beginning of section~\ref{sec:pfconvss*}. Hence
\[
\sup_{a,b\in [0,1]} \Big|S_{n,4}^*(a,b) \Big| = O_P(tr(\Sigma)) = o_P\Big(n\|\Sigma\|_F\Big)
\]
by Assumption~\ref{as:5}. Next observe the decomposition
\begin{align*}
S_{n,2}^*(a,b) + S_{n,3}^*(a,b)
& = \bar{X}^T \sum_{i,j = \floor{na}+1}^{\floor{nb}-1} X_j e_ie_j - \bar{X}^T \sum_{i = \floor{na}+1}^{\floor{nb}-1} X_i e_i^2
\\
& = \bar{X}^T\Big( \sum_{i = \floor{na}+1}^{\floor{nb}-1} X_ie_i \Big)  \sum_{j = \floor{na}+1}^{\floor{nb}-1} e_j - \bar{X}^T \sum_{i = \floor{na}+1}^{\floor{nb}-1} X_i e_i^2.
\end{align*}
We first deal with the second term. Observe that by the Cauchy-Schwarz inequality
\begin{align*}
\sup_{a,b \in [0,1]} \Big| \bar{X}^T \sum_{i = \floor{na}+1}^{\floor{nb}-1} X_i e_i^2 \Big|
&\leq
|\bar X^T \bar X |^{1/2} \sup_{a,b \in [0,1]}\Big| \sum_{i,j = \floor{na}+1}^{\floor{nb}-1} X_i^TX_j e_i^2e_j^2 \Big|^{1/2}
\\
& =|\bar X^T \bar X |^{1/2} \sup_{a,b \in [0,1]}\Big| \sum_{i,j = \floor{na}+1, i\neq j}^{\floor{nb}-1} X_i^TX_j e_i^2e_j^2  - \sum_{i = \floor{na}+1}^{\floor{nb}-1} X_i^TX_i e_i^4 \Big|^{1/2}
\\
&\leq |\bar X^T \bar X |^{1/2} \Big(\sup_{a,b \in [0,1]}\Big| \sum_{i,j = \floor{na}+1, i\neq j}^{\floor{nb}-1} X_i^TX_j e_i^2e_j^2\Big| + \sup_{a,b \in [0,1]} \Big|\sum_{i = \floor{na}+1}^{\floor{nb}-1} X_i^TX_i e_i^4 \Big| \Big)^{1/2}
\end{align*}
Now we have
\[
E\Big| \sum_{i = \floor{na}+1}^{\floor{nb}-1} X_i^TX_i e_i^4\Big| \leq n E| X_i^TX_i e_i^4| = O(tr(\Sigma)n).
\]
Moreover 
\[
\sup_{a,b \in [0,1]}\Big| \sum_{i,j = \floor{na}+1, i\neq j}^{\floor{nb}-1} X_i^TX_j e_i^2e_j^2\Big| = O_P(n \|\Sigma\|_F)
\]
since this is simply the process $S_n$ with the new random vectors $\tilde X_i = X_i e_i^2$. It is straightforward to check that $\tilde X_i$ satisfy Assumption~\ref{as:1}--\ref{as:4}, and thus convergence of the process follows (recall that in the beginning of Section~\ref{sec:pfconvss*} we argued that Assumptions~\ref{as:1}--\ref{as:4} suffice for process convergence of $S_n$). Combining all results so far we find that
\[
\sup_{a,b \in [0,1]} \Big| \bar{X}^T \sum_{i = \floor{na}+1}^{\floor{nb}-1} X_i e_i^2 \Big| 
= O_P(tr(\Sigma) + tr(\Sigma)^{1/2}\|\Sigma\|_F^{1/2}) = o_P(n \|\Sigma\|_F)
\]
by the assumption $tr(\Sigma) = o(n\|\Sigma\|_F)$.

Next observe that
\[
\sup_{a,b \in [0,1]} \Big| \bar{X}^T\Big( \sum_{i = \floor{na}+1}^{\floor{nb}-1} X_ie_i \Big)  \sum_{j = \floor{na}+1}^{\floor{nb}-1} e_j \Big| \leq 4 \max_{k=1,\dots,n} \Big|\bar{X}^T \sum_{i = 1}^{k} X_ie_i \Big| \times \max_{j=1,\dots,n} \Big| \sum_{i = 1}^{j} e_i \Big|
\]
By the classical Donsker theorem for partial sum processes
\[
\max_{k=1,..,n} \Big| \sum_{j = 1}^{k} e_j \Big| = O_P(n^{1/2}).
\] 
Next consider the decomposition
\[
\max_{k=1,\dots,n} \Big|\bar{X}^T \sum_{i = 1}^{k} X_ie_i \Big| \leq \max_{k=1,\dots,n}  \Big|\frac{1}{n}\sum_{i = 1}^{k} \sum_{j=1, j\neq i}^n X_j^TX_ie_i \Big| +  \max_{k=1,\dots,n}  \Big|\frac{1}{n}\sum_{i = 1}^{k} X_i^TX_ie_i \Big|.
\]
By Kolmogorov's maximal inequality
\begin{align*}
\max_{k=1,\dots,n}  \Big|\frac{1}{n}\sum_{i = 1}^{k} X_i^TX_ie_i \Big| &= O_P\Big(n^{-1/2} Var(X_i^TX_ie_i)^{1/2} \Big)
\\
&= O_P\Big(n^{-1/2} \max_i E[(X_i^TX_i)^2]^{1/2} \Big) = o_P( n^{1/2}\|\Sigma\|_F)
\end{align*}
where the last line follows since 
\begin{align*}
E[(X_j^TX_j)^2] &= \sum_{s,t =1}^pE[ H_s^2(j/n)H_t^2(j/n)Z_{j,s}^2 Z_{j,t}^2] \leq B^4 \sum_{s,t =1}^pE[Z_{j,s}^2 Z_{j,t}^2] 
\\
&= B^4\sum_{s,t =1}^p\Big( \Sigma_{s,s}\Sigma_{t,t} + \Sigma_{s,t}^2 + cum(Z_{j,s} Z_{j,s} Z_{j,t} Z_{j,t})\Big)
\\
& = B^4\tr(\Sigma)^2 + B^4\|\Sigma\|_F^2 + B^4\sum_{s,t =1}^p cum(Z_{j,s} Z_{j,s} Z_{j,t} Z_{j,t})
 = o(n^2\|\Sigma\|_F^2)
\end{align*}
by Assumption~\ref{as:5}. Hence it remains to show that 
\begin{equation}\label{eq:help1}
\max_{k=1,\dots,n}  \Big|\frac{1}{n}\sum_{i = 1}^{k} \sum_{j=1, j\neq i}^n X_j^TX_ie_i \Big| = o_P(n^{1/2}\|\Sigma\|_F).
\end{equation}
To this end observe that for $1 \leq \ell < k \leq n$
\begin{align*}
E\Big[\Big( \sum_{i = \ell}^{k} \sum_{j=1, j\neq i}^n X_j^TX_ie_i\Big)^4\Big] 
& = \sum_{j_1,\dots,j_4 = \ell}^k \sum_{k_1,\dots,k_4 = 1, k_i \neq j_i}^n E[X_{k_1}^TX_{j_1}\cdots X_{k_4}^TX_{j_4}]E[e_{j_1}e_{j_2}e_{j_3}e_{j_4}]
\\
& \leq C_1 n^2(k-\ell)^2 \max_{j_1,\dots,j_4,k_1,\dots,k_4, k_i \neq j_i} \Big| E[X_{k_1}^TX_{j_1}\cdots X_{k_4}^TX_{j_4}] \Big| 
\\
& \leq B^8 C_2 C n^2(k-\ell)^2 \|\Sigma\|_F^4
\end{align*}
where $C_1, C_2$ are constants that are independent of $n,k,\ell$ and the distribution of $X_i$ and $C$ is the constant from Assumption~\ref{as:2}. Here the last line uses Lemma~\ref{lem:boundmom}. The second-to-last line follows since the $e_i$ are centered and independent across $i$, and so are the $X_i$. Thus we can have at most two different values for the $j_i$. Further, each $k_i$ has to be equal to either at least one $j_i$ or at least one other $k_i$. This gives at most $K (k-\ell)^2n^2$ different choices for a universal constant $K$. 

To conclude the proof define the process
\[
G_n(t) := \frac{1}{n \|\Sigma\|_F} \sum_{i = 1}^{nt} \sum_{j=1, j\neq i}^n X_j^TX_ie_i 
\] 
with index set $T_n = \{i/n: i=0,\dots,n\}$ where $G_n(0) \equiv 0$. The computation above implies that for $s,t \in T_n, s\neq t$ we have (note that $s,t \in T_n, s \neq t$ implies $|s-t| \geq 1/n$)
\[
E\Big[|G_n(s) - G_n(t)|^4\Big] \leq C_3 |s-t|^2
\] 
for a universal constant $C_3$ where we used the fact that $s\neq t, s,t\in T_n$ implies $|s-t| \geq 1/n$. Applying Corollary 2.2.5 from \cite{van1996weak} with $T, \Psi, d, X$ in the latter result defined as follows: $T = T_n$, $\Psi(x) = x^4$, $d(s,t) = |t-s|^{1/2}$, $X_t = G_n(t)$ we find
\[
\Big\| \sup_{s,t \in T_n} |G_n(s) - G_n(t)| \Big\|_{4} \leq K \int_0^1 (C_4 \epsilon^{-2})^{1/4} d\epsilon < \infty,
\]
here $C_4,K$ are constants that depend on $\psi, C_3$ only and are thus independent of $n$. An application of the Markov inequality yields
\[
\sup_{s,t \in T_n} |G_n(s) - G_n(t)| = O_P(1).
\]
Finally, observe that
\[
\max_{k=1,\dots,n}  \Big|\frac{1}{n}\sum_{i = 1}^{k} \sum_{j=1, j\neq i}^n X_j^TX_ie_i \Big| =  \|\Sigma\|_F\sup_{t \in T_n} \Big| G_n(t)\Big| \leq \|\Sigma\|_F \sup_{s,t \in T_n} \Big| G_n(t) - G_n(s)\Big|= O_P(\|\Sigma\|_F)
\]
since by definition $G_n(0) = 0$. This completes the proof of~\eqref{eq:help1} and thus of~\eqref{eq:SnStarrepr} \hfill $\Box$


\subsubsection{Finite dimensional convergence result} \label{sec:sn*fidi}

\begin{proposition}	Under the Assumption~\ref{as:1}-\ref{as:5}, for any $(a_{u,k},b_{u,k}) \in (0,1)^2$, $a_{u,k}<b_{u,k}$ and contrasts $\alpha_{u,k}\in \mathbb{R}$, where $u = 1,2,3$, $k= 1, \ldots,K$, it holds that
\[
S_n := \frac{1}{n\|\Sigma\|_F}\left(\sum_{k = 1}^{K}\alpha_{1,k} S_{n}(a_{1,k}, b_{1,k})+ \sum_{k = 1}^{K}\alpha_{2,k} S_{n,1}^*(a_{2,k}, b_{2,k}) + \sum_{k = 1}^{K}\alpha_{3,k}S'^*_{n,1}(a_{3,k}, b_{3,k}) \right)
\xrightarrow{d} 
N(0,\Tilde{\sigma}^2) 
\]
where 
\[
\Tilde{\sigma}^2 = \sum_{u = 1}^3 \sum_{k = 1}^{K}\sum_{k' = 1}^{K} \alpha_{u,k}\alpha_{u,k'}V(a_{u,k}\vee a_{u,k'}, b_{u,k}\wedge b_{u,k'})
\]
\end{proposition}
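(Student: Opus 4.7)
The plan is to apply a martingale CLT (e.g.\ Corollary 3.1 of Hall and Heyde, 1980) by recasting the sum as a sum of martingale differences. After the index shift $m=i+1$ and an interchange of summation, each summand in $n\|\Sigma\|_F\, S_n$ becomes $\sum_m X_m^T(\cdot)$ where the bracket is a partial sum of $X_j$'s or $X_j e_j^{(k)}$'s over $j\le m-1$. Collecting all contrasts yields $n\|\Sigma\|_F\,S_n = \sum_m D_m$ with
\[
D_m = X_m^T W_m, \qquad W_m = \widetilde Y_{m-1}^{(0)} + e_m^{(1)}\widetilde Y_{m-1}^{(1)} + e_m^{(2)}\widetilde Y_{m-1}^{(2)},
\]
where $\widetilde Y_{m-1}^{(u)}$ is a contrast-weighted partial sum of $X_j$'s (for $u=0$) or of $X_j e_j^{(u)}$'s (for $u\in\{1,2\}$). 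With the filtration $\mathcal H_m := \sigma(X_1,\ldots,X_{m-1},e_1^{(1)},\ldots,e_m^{(1)},e_1^{(2)},\ldots,e_m^{(2)})$, $W_m$ is $\mathcal H_m$-measurable while $X_m$ is independent of $\mathcal H_m$, so $\{D_m,\mathcal H_{m+1}\}$ forms a martingale difference sequence.

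Next I would verify the two standard martingale CLT hypotheses. For the conditional variance, observe that $E[D_m^2\mid \mathcal H_m] = W_m^T H(m/n)\Sigma H(m/n) W_m$. Expanding $W_mW_m^T$ into the nine cross products indexed by $(u,v)\in\{0,1,2\}^2$, the diagonal pieces $u=v$ give rise to the three sums constituting $\widetilde\sigma^2$: after taking expectations the multiplier moments collapse via $E[(e_m^{(u)})^2]=1$ and $E[e_j^{(u)}e_{j'}^{(u)}]=\mathbf{1}\{j=j'\}$, leaving a double sum of traces of products involving $H(m/n),H(j/n),\Sigma$ that, under Assumption~\ref{as:4}, is a Riemann approximation of $V$ at the corresponding endpoints. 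The six off-diagonal pairs $u\neq v$ each pick up an unmatched factor among $e_m^{(u)},\,e_m^{(v)},\,e_m^{(u)}e_m^{(v)}$ and thus have mean zero; their normalized second moments are shown to vanish using the fourth-order moment bound of Lemma~\ref{lem:boundmom} together with Assumptions~\ref{as:2} and~\ref{as:5}. An analogous second-moment computation gives $\operatorname{Var}\bigl((n\|\Sigma\|_F)^{-2}\sum_m E[D_m^2\mid\mathcal H_m]\bigr)\to 0$, yielding the required conditional-variance convergence in probability. For the Lindeberg condition I would check Lyapunov's sufficient criterion $(n\|\Sigma\|_F)^{-4}\sum_m E[D_m^4]\to 0$: conditioning on $W_m$, $E[(X_m^TW_m)^4\mid W_m]$ is controlled by a constant multiple of $(W_m^T H(m/n)\Sigma H(m/n)W_m)^2$ plus a fourth-cumulant correction handled by Assumption~\ref{as:2}, after which summation reduces to another application of Lemma~\ref{lem:boundmom}. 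The martingale CLT then delivers the scalar convergence to $N(0,\widetilde\sigma^2)$; since the contrasts $\alpha_{u,k}$ were arbitrary, the Cram\'er--Wold device gives the joint finite-dimensional convergence consistent with the product structure of $\widetilde\sigma^2$ used in Proposition~\ref{prop:convSn}.

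The main obstacle I expect is the conditional variance step: the nine-term expansion of $W_m W_m^T$ produces many off-diagonal terms mixing the three multiplier systems, and showing that each contributes $o_P(n^2\|\Sigma\|_F^2)$ requires careful fourth-order moment bookkeeping that leans heavily on Assumptions~\ref{as:2} and~\ref{as:5}. The identification of the limit of the diagonal terms as $\widetilde\sigma^2$ requires recognizing the relevant sum as a Riemann approximation prescribed by Assumption~\ref{as:4}, and this is where the diagonal structure of $H$ and the $\|\Sigma\|_F$ scaling combine decisively.
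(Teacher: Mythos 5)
Your proposal follows essentially the same route as the paper's own proof: the identical martingale-difference decomposition $X_{i+1}^{T}(\text{weighted partial sums of }X_j\text{ or }X_je_j)$, a martingale CLT whose conditional-variance condition is verified by identifying the diagonal sums as Riemann approximations of $V$ via Assumption~\ref{as:4} (with a second-moment computation for the in-probability statement), and a Lyapunov/fourth-moment condition handled through the cumulant bounds of Lemma~\ref{lem:boundmom}. The only real difference is your filtration $\mathcal H_m$ includes the current multipliers $e_m^{(u)}$, so the cross terms between the three multiplier systems survive in $E[D_m^2\mid\mathcal H_m]$ and must be shown asymptotically negligible, whereas the paper conditions on $\mathcal F_i=\sigma(X_1,\dots,X_i,e_1,\dots,e_i,e'_1,\dots,e'_i)$, which excludes $e_{i+1},e'_{i+1}$ and makes those cross terms vanish exactly; both choices are valid.
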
 

\begin{proof}
	
	Consider the following decomposition:
	\[S_n  = \sum_{i = \floor{na_{\min}}+1}^{\floor{nb_{\max}}-1} \hat \xi_{n,i+1}, \]
	where $a_{\min} = \min_{u, k}a_{u,k}$ and $b_{\max} = \max_{u,k}b_{u,k}$,
	\[\hat  \xi_{n,i+1} = \sum_{u = 1}^3\sum_{k =1}^{K}\bm 1\{\floor{a_{u,k}n} + 1 \le i\le \floor{b_{u,k}n} -1 \}\alpha_{u,k}{\xi}^u_{a_{u,k},i+1},\]
	and 
	\[{\xi}^1_{a_{u,k},i+1} =\frac{1}{n\|\Sigma\|_F} \sum_{j = \floor{a_{u,k}n}+1}^iX_{i+1}^T X_j\]
	\[{\xi}^2_{a_{u,k},i+1} = \frac{1}{n\|\Sigma\|_F} \sum_{j = \floor{a_{u,k}n}+1}^iX_{i+1}^T X_j e_{i+1}e_j\]
	\[{\xi}^3_{a_{u,k},i+1} = \frac{1}{n\|\Sigma\|_F} \sum_{j = \floor{a_{u,k}n}+1}^iX_{i+1}^T X_je'_{i+1}e'_j\]
	
	Let $\mathcal{F}_i = \sigma(X_1\ldots X_i,e_1\ldots e_i,e'_1\ldots e'_i)$ be a filtration, it is easy to check that $\sum_{i =\floor{a_{\min}n} }^j\hat{\xi}^u_{a_{u,k},i+1}$ is still a martingale. To get the convergence result, we need to check the following conditions:
	\begin{enumerate}
		\item $\forall \epsilon >0, \sum_{i =\floor{a_{\min}n}+1}^{\floor{b_{\max}n}-1} E[(\hat{\xi}_{n,i+1})^2 I(\hat{\xi}_{n,i+1} > \epsilon)|\mathcal{F}_i] \xrightarrow{p} 0$,
		\item $V_n =\sum_{i =\floor{a_{\min}n}+1}^{\floor{b_{\max}n}-1} E[(\hat{\xi}_{n,i+1})^2|\mathcal{F}_i] \xrightarrow{p} \sum_{u = 1}^3 \sum_{k = 1}^{k}\sum_{k' = 1}^{k} \alpha_{u,k} \alpha_{u, k'}V(a_{u,k}\vee a_{u,k'}, b_{u,k}\wedge b_{u,k'})$,
	\end{enumerate}
For Condition 1, it suffices to check that for any fixed interval $(a,b)$ and $u \in \{1,2,3\}$,  
\[
\sum_{i =\floor{na}+1}^{\floor{nb}-1} E\left [({\xi}^u_{a,i+1})^4\right ] \to 0.
\]
For the case $u = 1$ observe that by independence of the $X_i$ and since $X_i$ are centered
\begin{align*}
E\left [({\xi}^1_{a,i+1})^4\right ] & \frac{1}{n^4\|\Sigma\|_F^4}\sum_{i =\floor{na}+1 }^{\floor{nb}-1} E\Big[\Big(X_{i+1}^T\sum_{j =  \floor{na}+1}^i X_j\Big)^4\Big] 
\\
=& \frac{1}{n^4\|\Sigma\|_F^4}\sum_{j =  \floor{na}+1}^iE[(X_{i+1}^T X_{j})^4] + \frac{1}{n^4\|\Sigma\|_F^4}\sum_{j_1,j_2 =  \floor{na}+1}^iE[(X_{i+1}^T X_{j_1})^2 (X_{i+1}^T X_{j_2})^2] 
\\
\leq & \frac{1}{n^4\|\Sigma\|_F^4}\sum_{j =  \floor{na}+1}^iE[(X_{i+1}^T X_{j})^4] + \frac{1}{n^4\|\Sigma\|_F^4}\sum_{j_1,j_2 =  \floor{na}+1}^iE[(X_{i+1}^T X_{j_1})^4]^{1/2}E[ (X_{i+1}^T X_{j_2})^4]^{1/2}
\\
\leq & \frac{1}{n^4\|\Sigma\|_F^4}\Big(n\tilde C B^{8} \|\Sigma\|_F^4 + n^2 \tilde C B^{8} \|\Sigma\|_F^4 \Big) = O(n^{-2})
\end{align*}
where we applied Lemma~\ref{lem:boundmom} for the last step. Thus
\[
\sum_{i =\floor{na}+1}^{\floor{nb}-1} E\left [({\xi}^1_{a,i+1})^4\right ] = O(1/n) = o(1).
\]
Similarly we obtain for $u=2$
\begin{align*}
E\left [({\xi}^2_{a,i+1})^4\right ] & \frac{1}{n^4\|\Sigma\|_F^4}\sum_{i =\floor{na}+1 }^{\floor{nb}-1} E\Big[\Big(e_{i+1}X_{i+1}^T\sum_{j =  \floor{na}+1}^i e_jX_j\Big)^4\Big] 
\\
=& \frac{1}{n^4\|\Sigma\|_F^4}\sum_{j =  \floor{na}+1}^iE[(X_{i+1}^T X_{j})^4]E[e_1^4]^2 
+ \frac{1}{n^4\|\Sigma\|_F^4}\sum_{j_1,j_2 =  \floor{na}+1}^iE[(X_{i+1}^T X_{j_1})^2 (X_{i+1}^T X_{j_2})^2]E[e_1^4]
\\
\leq & \frac{9}{n^4\|\Sigma\|_F^4}\sum_{j =  \floor{na}+1}^iE[(X_{i+1}^T X_{j})^4] + \frac{3}{n^4\|\Sigma\|_F^4}\sum_{j_1,j_2 =  \floor{na}+1}^iE[(X_{i+1}^T X_{j_1})^4]^{1/2}E[ (X_{i+1}^T X_{j_2})^4]^{1/2}
\\
\leq & \frac{1}{n^4\|\Sigma\|_F^4}\Big(n9\tilde C B^{8} \|\Sigma\|_F^4 + 3n^2 \tilde C B^{8} \|\Sigma\|_F^4 \Big) = O(n^{-2})
\end{align*}
and thus 
\[
\sum_{i =\floor{na}+1}^{\floor{nb}-1} E\left [({\xi}^2_{a,i+1})^4\right ] = O(1/n) = o(1).
\]
The case $u=3$ is treated by exactly the same arguments and the proof of part 1 is complete. 

\bigskip
	
For Condition 2, observe that the bootstrap multipliers are independent of $X_i$'s, which implies
for $u \ne v$, 
\[
E[\xi^u_{a_1,i+1}\xi^v_{a_2,i+1} \mid \mathcal{F}_i] = 0.
\]
Therefore, we have the following simplification  
\begin{align*}
\sum_{i = \floor{na_{\min}} + 1}^{\floor{nb_{\max}} - 1} E[\hat{\xi}^2_{n,i+1}|\mathcal{F}_i] = \sum_{u = 1}^3 \sum_{k = 1}^{K} \sum_{k' = 1}^{K} \Big\{ \alpha_{u,k} \alpha_{u,k'} \sum_{ i = \floor{(a_{u,k} \vee a_{u,k'}) n} +1}^{\floor{(b_{u,k} \wedge b_{u,k'}) n} -1} E[\xi^u_{a_{u,k},i+1}\xi^u_{a_{u,k'},i+1}\mid \mathcal{F}_i]\Big \}.
\end{align*}
To complete the proof, it remains to show for $u \in \{1,2,3\}$,
\[
\sum_{ i = \floor{(a_{u,k} \vee a_{u,k'}) n} +1}^{\floor{(b_{u,k} \wedge b_{u,k'}) n} -1} E[\xi^u_{a_{u,k},i+1}\xi^u_{a_{u,k},i+1}\mid \mathcal{F}_i] \xrightarrow{p} V(a_{u,k}\vee a_{u,k'}, b_{u,k}\wedge b_{u,k'}).
\]
Given this structure, it suffices to show for $a' \le a \le b \le b \le 1$, 
\[
\sum_{i = \floor{an} + 1}^{\floor{bn} - 1} E[\xi^u_{a',i+1}\xi^u_{a,i+1}\mid \mathcal{F}_i] \xrightarrow{p} V(a,b).
\]
Define
\begin{align*}
M_1(a,b) &:= \sum_{i = \floor{an} + 1}^{\floor{bn} - 1} E[\xi^1_{a',i+1}\xi^1_{a,i+1}\mid \mathcal{F}_i] = \frac{1}{n^2\|\Sigma\|_F^2} \sum_{i=\floor{na}+1}^{\floor{nb}-1} E[(X_{i+1}^T\sum_{j =  \floor{na}+1}^i X_j)^2 |\mathcal{F}_i],
\\
M_2(a,b) &:= \frac{1}{n^2\|\Sigma\|_F^2} \sum_{i=\floor{na}+1}^{\floor{nb}-1} E[(X_{i+1}^Te_{i+1}\sum_{j =  \floor{na}+1}^i X_je_j)^2 |\mathcal{F}_i],
\\
M_3(a,b) &:= \frac{1}{n^2\|\Sigma\|_F^2} \sum_{i=\floor{na}+1}^{\floor{nb}-1} E[(X_{i+1}^Te'_{i+1}\sum_{j =  \floor{na}+1}^i X_je'_j)^2 |\mathcal{F}_i].
\end{align*}
Since $M_1 M_2, M_3$ have a very similar structure we will only prove that
\[
M_2(a,b) \xrightarrow{p} V(a,b), 
\]
the other two cases follow similarly. In what follows write $M_2$ for $M_2(a,b)$. Consider the following decomposition,
\begin{align*}
M_2 & = \frac{1}{n^2||\Sigma||_F^2} \sum_{i=\floor{na}+1}^{\floor{nb}-1}  \sum_{j_1,j_2 =  \floor{na}+1}^i E[X_{i+1}^T X_{j_1}X_{j_2}^TX_{i+1} e_{i+1}^2e_{j_1}e_{j_2} |\mathcal{F}_i]
\\
& = \frac{1}{n^2||\Sigma||_F^2} \sum_{i=\floor{na}+1}^{\floor{nb}-1}  \sum_{j_1,j_2 =  \floor{na}+1}^i  Z_{j_2}^T H\Big(\frac{j_2}{n}\Big)H\Big(\frac{i+1}{n}\Big)\Sigma H\Big(\frac{i+1}{n}\Big) H\Big(\frac{j_1}{n}\Big)Z_{j_1} e_{j_1}e_{j_2}
\\
& = \frac{1}{n^2||\Sigma||_F^2} \sum_{i=\floor{na}+1}^{\floor{nb}-1}  \sum_{j =  \floor{na}+1}^i  Z_{j}^T H\Big(\frac{j}{n}\Big)H\Big(\frac{i+1}{n}\Big)\Sigma H\Big(\frac{i+1}{n}\Big)H\Big(\frac{j}{n}\Big)Z_{j} e_{j}^2
\\
& \quad + \frac{1}{n^2||\Sigma||_F^2} \sum_{i=\floor{na}+1}^{\floor{nb}-1} 
\sum_{\stackrel{j_1,j_2 =  \floor{na}+1}{j_1 \neq j_2}}^i  Z_{j_2}^TH\Big(\frac{j_2}{n}\Big)H\Big(\frac{i+1}{n}\Big)\Sigma H\Big(\frac{i+1}{n}\Big)H\Big(\frac{j_1}{n}\Big)Z_{j_1} e_{j_1}e_{j_2}
\\ 
& = M_2^{(1)} + M_2^{(2)}.
\end{align*}
For $ M_2^{(1)}$,
\begin{align*}
E[M_2^{(1)}] 
& = \frac{1}{n^2||\Sigma||_F^2} \sum_{i=\floor{na}+1}^{\floor{nb}-1}  \sum_{j =  \floor{na}+1}^i E\Big[Z_{j}^T H\Big(\frac{j}{n}\Big)H\Big(\frac{i+1}{n}\Big)\Sigma H\Big(\frac{i+1}{n}\Big)H\Big(\frac{j}{n}\Big)Z_{j}\Big]
\\
& = \frac{1}{n^2\|\Sigma\|_F^2} \sum_{i=\floor{na}+1}^{\floor{nb}-1}  \sum_{j =  \floor{na}+1}^i \tr\Big(E\Big[ H\Big(\frac{j}{n}\Big)H\Big(\frac{i+1}{n}\Big)\Sigma H\Big(\frac{i+1}{n}\Big)H\Big(\frac{j}{n}\Big)Z_{j}Z_{j}^T\Big]\Big)
\\
& = \frac{1}{n^2\|\Sigma\|_F^2} \sum_{i=\floor{na}+1}^{\floor{nb}-1}  \sum_{j =  \floor{na}+1}^i \tr\Big( H\Big(\frac{j}{n}\Big)H\Big(\frac{i+1}{n}\Big)\Sigma H\Big(\frac{i+1}{n}\Big)H\Big(\frac{j}{n}\Big)\Sigma\Big)
\\
& = \frac{1}{n^2||\Sigma||_F^2} \sum_{i=\floor{na}+1}^{\floor{nb}-1}  \sum_{j =  \floor{na}+1}^i \tr\Big( H^2\Big(\frac{j}{n}\Big)H^2\Big(\frac{i+1}{n}\Big)\Sigma^2\Big)
\\ 
& \to V(a,b).
\end{align*}
Here the last equality follows since for symmetric matrices $A,B,C$ we have 
\[
tr(ABC) = tr((ABC)^T) = tr(CBA) = tr(ACB)
\] 
and since $H(\cdot)$ are diagonal matrices. Thus letting $A = H(j/n)H((i+1)/n)\Sigma, B = H(j/n)H((i+1)/n), C = \Sigma$ the claim follows after some simple computations. Next observe
\begin{align*}
& E[(M_2^{(1)})^2]\\
& =  \frac{1}{n^4\|\Sigma\|_F^4}\sum_{i_1,i_2=\floor{na}+1}^{\floor{nb}-1}  \sum_{j =  \floor{na}+1 }^{\min(i_1,i_2)} \\
& \hspace{2cm} E\Big[Z_{j}^T H\Big(\frac{j}{n}\Big)H\Big(\frac{i_1+1}{n}\Big) \Sigma H\Big(\frac{i_1+1}{n}\Big)H\Big(\frac{j}{n}\Big) Z_{j}Z_{j}^T H\Big(\frac{j}{n}\Big)H\Big(\frac{i_2+1}{n}\Big)\Sigma H\Big(\frac{i_2+1}{n}\Big)H\Big(\frac{j}{n}\Big) Z_{j}\Big]E[e_j^4]\\
& \quad + \frac{1}{n^4\|\Sigma\|_F^4}\sum_{i_1,i_2=\floor{na}+1}^{\floor{nb}-1}  \sum_{j_1 =  \floor{na}+1 }^{i_1} \sum_{\stackrel{j_2 =  \floor{na}+1}{j_1\neq j_2}  }^{i_2} \bigg( E\Big[Z_{j_1}^T H\Big(\frac{j_1}{n}\Big)H\Big(\frac{i_1+1}{n}\Big)\Sigma H\Big(\frac{i_1+1}{n})H\Big(\frac{j_1}{n}\Big) Z_{j_1}\Big] 
\\
& \hspace{6cm} \times E\Big[Z_{j_2}^T H\Big(\frac{j_2}{n}\Big)H\Big(\frac{i_2+1}{n}\Big)\Sigma H\Big(\frac{i_2+1}{n}\Big)H\Big(\frac{j_2}{n}\Big) Z_{j_2}\Big]E[e_{j_1}^2]E[e_{j_2}^2]\bigg).
\end{align*}
For the first part, let 
\[
\tilde \Sigma^{i,j} := H\Big(\frac{j}{n}\Big)H\Big(\frac{i_1+1}{n}\Big) \Sigma H\Big(\frac{i_1+1}{n}\Big)H\Big(\frac{j}{n}\Big)
\]
denote a matrix with entries $\tilde \sigma_{k,l}^{i,j}$. Then
\begin{align*}
& \Big|E\Big[Z_{j}^T H\Big(\frac{j}{n}\Big)H\Big(\frac{i_1+1}{n}\Big) \Sigma H\Big(\frac{i_1+1}{n}\Big)H\Big(\frac{j}{n}\Big) Z_{j}Z_{j}^T H\Big(\frac{j}{n}\Big)H\Big(\frac{i_2+1}{n}\Big)\Sigma H\Big(\frac{i_2+1}{n}\Big)H\Big(\frac{j}{n}\Big) Z_{j}\Big]\Big|
\\
&= \Big|E\Big[\Big(Z_{j}^T H\Big(\frac{j}{n}\Big)H\Big(\frac{i_1+1}{n}\Big) \Sigma H\Big(\frac{i_1+1}{n}\Big)H\Big(\frac{j}{n}\Big) Z_{j}\Big)^2\Big]\Big|
\\
&\le \sum_{k,\ell,k',\ell'=1}^p \Big|\tilde \sigma_{k,l}^{i,j} \tilde \sigma_{k',l'}^{i,j} E\Big[Z_kZ_\ell Z_{k'} Z_{\ell'}  \Big]\Big|
\\
& \le B^8  \sum_{k,\ell,k',\ell'=1}^p \sigma_{k,l}\sigma_{k',l'}\Big|cum(Z_{1,k}Z_{1,l}Z_{1,k'}Z_{1,l'}) + \sigma_{k,l}\sigma_{k',l'} + \sigma_{k,l'}\sigma_{k',l} + \sigma_{k,l'}\sigma_{k',l} \Big|
\\
& \lesssim \|\Sigma\|_F^4, 
\end{align*}
where the last inequality follows from Assumption~\ref{as:2} by repeated application of the Cauchy-Schwarz inequality. For instance
\begin{align*}
\sum_{k,\ell,k',\ell'=1}^p \sigma_{k,l}\sigma_{k',l'}\sigma_{k,l'}\sigma_{k',l} 
\le \Big(\sum_{k,\ell,k',\ell'=1}^p \sigma_{k,l}^2\sigma_{k',l'}^2\Big)^{1/2}\Big(\sum_{k,\ell,k',\ell'=1}^p \sigma_{k,l'}^2\sigma_{k',l}^2 \Big)^{1/2} = \Big(\sum_{k,l=1}^p \sigma_{k,l}^2 \Big)^2 \le C^2 \|\Sigma \|_F^4
\end{align*}
since $\sigma_{k,l} = cum(Z_{1,k},Z_{1,l})$ and similarly for the other terms. The second sum in the representation of $E[(M_2^{(1)})^2]$ can be rewritten as 
\begin{align*}
&\frac{1}{n^4\|\Sigma\|_F^4}\sum_{i_1,i_2=\floor{na}+1}^{\floor{nb}-1}  \sum_{j_1 =  \floor{na}+1 }^{i_1} \sum_{\stackrel{j_2 =  \floor{na}+1}{j_1\neq j_2}  }^{i_2}   \tr\Big(H^2\Big(\frac{j_1}{n}\Big)H^2\Big(\frac{i_1+1}{n}\Big)\Sigma^2\Big)
\tr\Big(H^2\Big(\frac{j_2}{n}\Big)H^2\Big(\frac{i_2+1}{n}\Big)\Sigma^2 \Big)
\\
= & \frac{1}{n^4\|\Sigma\|_F^4}\Big(\sum_{i=\floor{na}+1}^{\floor{nb}-1}  \sum_{j =  \floor{na}+1 }^{i}   \tr\Big(H^2\Big(\frac{j}{n}\Big)H^2\Big(\frac{i+1}{n}\Big)\Sigma^2\Big) \Big)^2
\\
& - \frac{1}{n^4\|\Sigma\|_F^4}\sum_{i_1,i_2=\floor{na}+1}^{\floor{nb}-1}  \sum_{j =  \floor{na}+1 }^{\min(i_1,i_2)}  \tr\Big(H^2\Big(\frac{j}{n}\Big)H^2\Big(\frac{i_1+1}{n}\Big)\Sigma^2\Big)
\tr\Big(H^2\Big(\frac{j}{n}\Big)H^2\Big(\frac{i_2+1}{n}\Big)\Sigma^2 \Big)
\\
= & \frac{1}{n^4\|\Sigma\|_F^4}\Big(\sum_{i=\floor{na}+1}^{\floor{nb}-1}  \sum_{j =  \floor{na}+1 }^{i}   \tr\Big(H^2\Big(\frac{j}{n}\Big)H^2\Big(\frac{i+1}{n}\Big)\Sigma^2\Big) \Big)^2 + O(n^{-1})
\\
\to& V(a,b)^2
\end{align*}
where we used the fact that 
\[
\Big|\tr\Big(H^2\Big(\frac{j}{n}\Big)H^2\Big(\frac{i_2+1}{n}\Big)\Sigma^2 \Big) \Big| \leq B^4 tr(\Sigma^2) = B^4 \|\Sigma\|_F^2.
\]
Therefore, $M_2^{(1)} \xrightarrow{p} V(a,b)$. As for $M_2^{(2)}$, it is easy to check that $E[M_2^{(2)}] = 0$. 
\begin{align*}
& E[(M_2^{(2)})^2]\\
& = \frac{1}{n^4\|\Sigma\|_F^4}\sum_{i,i'=\floor{na}+1}^{\floor{nb}-1}  \sum_{\stackrel{j_1,j_2 =  \floor{na}+1}{j_1\neq j_2} }^i\sum_{\stackrel{j_3,j_4 =  \floor{na}+1}{j_3\neq j_4} }^{i'} \\
& \hspace{2cm}\bigg( E\Big[X_{j_1}^T H\Big(\frac{j_1}{n}\Big)H\Big(\frac{i+1}{n}\Big)\Sigma H\Big(\frac{i+1}{n}\Big)H\Big(\frac{j_2}{n}\Big) X_{j_2}X_{j_3}^T H\Big(\frac{j_3}{n}\Big)H\Big(\frac{i'+1}{n}\Big)\Sigma H\Big(\frac{i'+1}{n}\Big)H\Big(\frac{j_4}{n}\Big) X_{j_4}\Big]
\\
& \hspace{2cm}\times E[e_{j_1}e_{j_2}e_{j_3}e_{j_4}]\bigg)\\
& = \frac{4}{n^4\|\Sigma\|_F^4}\sum_{i,i'=\floor{na}+1}^{\floor{nb}-1}  \sum_{\stackrel{j_1,j_2 =  \floor{na}+1}{j_1< j_2} }^i\sum_{\stackrel{j_3,j_4 =  \floor{na}+1}{j_3< j_4} }^{i'} 
\\
& \hspace{2cm}\bigg( E\Big[X_{j_1}^T H\Big(\frac{j_1}{n}\Big)H\Big(\frac{i+1}{n}\Big)\Sigma H\Big(\frac{i+1}{n}\Big)H\Big(\frac{j_2}{n}\Big) X_{j_2}X_{j_3}^T H\Big(\frac{j_3}{n}\Big)H\Big(\frac{i'+1}{n}\Big)\Sigma H\Big(\frac{i'+1}{n}\Big)H\Big(\frac{j_4}{n}\Big) X_{j_4}\Big]
\\
& \hspace{2cm}\times E[e_{j_1}e_{j_2}e_{j_3}e_{j_4}]\bigg).
\end{align*}
Only when $j_1 = j_3$, $j_2 = j_4$, the expectation is nonzero. Therefore, 
\begin{align*}
& E[(M_2^{(2)})^2]
\\
& = \frac{4}{n^4\|\Sigma\|_F^4}\sum_{i,i'=\floor{na}+1}^{\floor{nb}-1}  \sum_{j_1, j_2 =\floor{na}+1, , j_1 < j_2}^{\min(i,i')}
\\
& \hspace{3cm}E\Big[X_{j_1}^T H\Big(\frac{j_1}{n}\Big)H\Big(\frac{i+1}{n}\Big)\Sigma H\Big(\frac{i+1}{n}\Big)H\Big(\frac{j_2}{n}\Big) X_{j_2}X_{j_2}^T H\Big(\frac{j_2}{n}\Big)H\Big(\frac{i'+1}{n}\Big)\Sigma H\Big(\frac{i'+1}{n}\Big)H\Big(\frac{j_1}{n}\Big) X_{j_1}\Big] 
\\
& \le \frac{4}{n^4\|\Sigma\|_F^4}\sum_{i,i'=\floor{na}+1}^{\floor{nb}-1}  \sum_{j_1, j_2 =\floor{na}+1, , j_1 < j_2}^{\min(i,i')} B^8 tr(\Sigma^4) 
\\
& = \frac{4}{\|\Sigma\|_F^4} tr(\Sigma^4) O(1)
\\
& = O\left(\frac{ tr(\Sigma^4)}{\|\Sigma\|^4_F}\right) \to 0.
\end{align*}
Here the inequality follows since by repeated application of the identity
\[
tr(ABC) = tr((ABC)^T) = tr(CBA) = tr(ACB) = tr(BAC)
\]
valid for symmetric matrices $A,B,C$ as well as the cyclic permutation property of the trace operator we have
\begin{align*}
&tr\Big(\Sigma H\Big(\frac{j_1}{n}\Big)H\Big(\frac{i+1}{n}\Big)\Sigma H\Big(\frac{i+1}{n}\Big)H\Big(\frac{j_2}{n}\Big) \Sigma H\Big(\frac{j_2}{n}\Big)H\Big(\frac{i'+1}{n}\Big)\Sigma H\Big(\frac{i'+1}{n}\Big)H\Big(\frac{j_1}{n}\Big) \Big)
\\
=& tr\Big(\Sigma^2 H\Big(\frac{j_1}{n}\Big)H\Big(\frac{i+1}{n}\Big)\Sigma H\Big(\frac{i+1}{n}\Big)H\Big(\frac{j_2}{n}\Big) \Sigma H\Big(\frac{j_2}{n}\Big)H\Big(\frac{i'+1}{n}\Big)\Sigma H\Big(\frac{i'+1}{n}\Big)H\Big(\frac{j_1}{n}\Big) \Big) 
\\
=& ... = tr\Big(\Sigma^4 H\Big(\frac{j_1}{n}\Big)H\Big(\frac{i+1}{n}\Big) H\Big(\frac{i+1}{n}\Big)H\Big(\frac{j_2}{n}\Big)  H\Big(\frac{j_2}{n}\Big)H\Big(\frac{i'+1}{n}\Big) H\Big(\frac{i'+1}{n}\Big)H\Big(\frac{j_1}{n}\Big) \Big).
\end{align*}
Now the largest entry of the diagonal matrix 
\[
H\Big(\frac{j_1}{n}\Big)H\Big(\frac{i+1}{n}\Big) H\Big(\frac{i+1}{n}\Big)H\Big(\frac{j_2}{n}\Big)  H\Big(\frac{j_2}{n}\Big)H\Big(\frac{i'+1}{n}\Big) H\Big(\frac{i'+1}{n}\Big)H\Big(\frac{j_1}{n}\Big)
\]
is bounded by $B^8$ and $\Sigma^4$ has positive diagonal entries (since it can be seen as covariance matrix of $(\Sigma^{1/2})^3 Z_1$) so that
\[
tr\Big(\Sigma^4 H\Big(\frac{j_1}{n}\Big)H\Big(\frac{i+1}{n}\Big) H\Big(\frac{i+1}{n}\Big)H\Big(\frac{j_2}{n}\Big)  H\Big(\frac{j_2}{n}\Big)H\Big(\frac{i'+1}{n}\Big) H\Big(\frac{i'+1}{n}\Big)H\Big(\frac{j_1}{n}\Big) \Big) \leq B^8 tr(\Sigma^4). 
\]
This shows that $M_2^{(2)} \xrightarrow{p} 0$. Together with previous result, we have shown that 
\[
M_2 \xrightarrow{p} V(a,b). 
\]
Similarly, 
\[
M_3 \xrightarrow{p}V(a,b), \quad M_1 \xrightarrow{p}V(a,b). 
\]
This completes the proof.
\end{proof}


\subsection{Proof of Theorem~\ref{thm:3}: Power of the test}

The following equivalent representation for the quantity $G_n(k)$ will be useful: 
\begin{align}
G_n(k) &= \frac{1}{k(k-1)(n-k)(n-k-1)}\sum_{1\leq j_1,j_3\leq k, j_1\neq j_3} \sum_{k+1\leq j_2,j_4\leq n, j_2\neq j_4} (X_{j_1} - X_{j_2})^T(X_{j_3} - X_{j_4}) \notag
\\
&= \frac{1}{k(k-1)(n-k)(n-k-1)} D_n(k) \label{eq:GnDn}
\end{align}
This expression can be obtained by elementary calculations after multiplying out the products in the expression above.

Further, recall that we assumed $k^* = \floor{nc}$ for some constant $c \in (0,1)$. Define a new sequence of random vectors $Y_i$,
\[Y_ i = H(i/n)Z_i = \begin{cases} X_i & i = 1,\ldots,k^*\\
X_i - \Delta & i = k^*+1,\ldots, n
\end{cases}. \]
This sequence does not have a change point. Without loss of generosity, assume $Y_i$'s are centered.
The remaining proof consists of a detailed analysis of the original test statistic and the bootstrap statistic under different types of alternatives.

For the bootstrap statistic, we will prove that under the null and any alternative $S_n^{*X}$ satisfies 
\begin{equation} \label{eq:SnStaralt}
\frac{S_n^{*X}(a,b)}{n\|\Sigma\|_F} = \frac{S_n^{*Y}(a,b)}{n\|\Sigma\|_F} + O_p\Big(\frac{(\Delta^T \Sigma \Delta)^{1/2}}{\|\Sigma\|_F}\Big) + O_p\Big(\frac{\|\Delta\|_2^2}{\|\Sigma\|_F}\Big)
\end{equation}
where the remainder terms are uniform in $a,b \in [0,1]$ and $S_n^{*Y}$ is defined in exactly the same way as $S_n^{*X}$ but with $Y_i$ in place of $X_i$. We will further show that
\begin{equation}\label{eq:Tnalt}
T_n \left\{
\begin{array}{ll}
\xrightarrow{d} T &, n\|\Delta\|_2^2/\|\Sigma\|_F \to 0
\\
\xrightarrow{d} \sup_{r \in [0,1]} \{G(r) + \Lambda(r)\} &, n\|\Delta\|_2^2/\|\Sigma\|_F \to \beta \in (0,\infty)
\\
\geq \frac{k^*(k^*-1)(n-k^*)(n-k^*-1)}{n^4} \frac{n\|\Delta\|_2^2}{\|\Sigma\|_F} + o_P\Big(n\frac{\|\Delta\|_2^2}{\|\Sigma\|_F}\Big) &, n\|\Delta\|_2^2/\|\Sigma\|_F \to \infty
\end{array}
\right.
\end{equation}
where
\[
\Lambda(r) = \begin{cases} (1-c)^2r^2\beta & r \le c\\
c^2 (1-r)^2 \beta & r > c
\end{cases}
\]
for $c = \lim_{n\to \infty} k^*/n$. The argument in the case $n\|\Delta\|_2^2/\|\Sigma\|_F \to \beta \in (0,\infty)$ is complete. The remaining two cases are discussed below.

\textbf{The case $n\|\Delta\|_2^2/\|\Sigma\|_F \to 0$} In this case~\eqref{eq:SnStaralt} implies
\[
\frac{S_n^{*X}(a,b)}{n\|\Sigma\|_F} = \frac{S_n^{*Y}(a,b)}{n\|\Sigma\|_F} + o_P(1).
\]
Since the sequence $Y_i$ contains no change-points and satisfies all assumptions of Theorem~\ref{thm:2}, the proof follows from exactly the same arguments as the proof of the latter result.

\medskip

\textbf{The case $n\|\Delta\|_2^2/\|\Sigma\|_F \to \infty$}: from expression~\eqref{eq:SnStaralt} we find that in this case $T_n^* = o_P(n\|\Delta\|_2^2/\|\Sigma\|_F)$ and hence $c_{1,\alpha}^{(M)} = o_P(n\|\Delta\|_2^2/\|\Sigma\|_F)$. Since also 
\[
\frac{k^*(k^*-1)(n-k^*)(n-k^*-1)}{n^4} \to c^2(1-c)^2 
\]
we obtain
\[
P\Big(T_n > c_{1,\alpha}^{(M)}\Big) \geq P\Big( c^2(1-c)^2 n\|\Delta\|_2^2/\|\Sigma\|_F > o_P(n\|\Delta\|_2^2/\|\Sigma\|_F) \Big) \to 1.
\]
This completes the proof of Theorem~\ref{thm:3}. \hfill $\Box$


\subsubsection{Proof of~\eqref{eq:Tnalt}: Behaviour of $G_n(k)$ under the alternative}

Simple computations show that in the case $k^* > k$, the statistic $D_n(k)$ admits the following decomposition
\begin{align*}
D_n(k) = & D_n^Y(k) + k(k-1)(n-k^*)(n-k^*-1) \|\Delta\|_2^2 - 2(k-1)(n-k^*)(n-k-2) \sum_{j=1}^k Y_j^T \Delta
\\
& - 4(k-1)(k-2)(n-k^*)\sum_{j=k+1}^{k^*} Y_j^T \Delta
\end{align*} 
where $D_n^Y$ is defined similarly as $D_n$ but with $Y_i$ replacing $X_i$. By Kolmogorov's maximal inequality we have
\[
\sup_{1 \le l\le k \le  n}\Big|\sum_{j = l}^k X_j^T \Delta \Big| \le 2 \sup_{1\le k \le n}\Big|\sum_{j = 1}^k X_j^T \Delta \Big|  = O_p(n^{1/2} (\Delta^T \Sigma \Delta)^{1/2}) 
= o_p(n^{1/2} \|\Delta\|_2 \|\Sigma\|_F^{1/2})
\]
where we used the bound
\[
\max_j Var\Big(\Delta^T Y_j\Big) \le B^2 \Delta^T \Sigma \Delta
\]
and the fact that under Assumption 1 we have $\|\Sigma\|_2 = o(\|\Sigma\|_F)$, see Remark 3.2 in \cite{wang2019inference}.

Combining this with the representation in~\eqref{eq:GnDn} we find that, uniformly in $1 \leq k \leq k^*$,
\begin{align*}
\tilde G_n(k) = \tilde G_n^Y(k) + \frac{k(k-1)(n-k^*)(n-k^*-1) \|\Delta\|_2^2}{n^3} + o_p(n^{1/2} \|\Delta\|_2 \|\Sigma\|_F^{1/2}). 
\end{align*}
Similar arguments show that, uniformly in $1 \leq k^* \leq k \leq n$,
\begin{align*}
\tilde G_n(k) = \tilde G_n^Y(k) + \frac{k^*(k^*-1)(n-k)(n-k-1) \|\Delta\|_2^2}{n^3} + o_p(n^{1/2} \|\Delta\|_2 \|\Sigma\|_F^{1/2}). 
\end{align*}

 Finally, elementary computations show that for $k/n \to r$ 
\[
\frac{k(k-1)(n-k^*)(n-k^*-1)}{n^4} \to r^2(1-c)^2
\]
and 
\[
\frac{k^*(k^*-1)(n-k)(n-k-1)}{n^4} \to c^2(1-r)^2
\]
We now discuss the consequence of this result for three types of alternatives.

\medskip

\noindent
\textbf{case 1: $n\|\Delta\|_2^2/\|\Sigma\|_F \to 0$} 

In this case we have
\[
\frac{\tilde G_n(k)}{\|\Sigma\|_F} = \frac{\tilde G_n^Y(k)}{\|\Sigma\|_F} + o_P(1)
\] 
uniformly in $k$. Hence $T_n \Dkonv T$.

\medskip
\noindent
\textbf{case 2: $n\|\Delta\|_2^2/\|\Sigma\|_F \to \beta > 0$}

In this case we obtain
\[
\Big(\frac{\tilde G_n(\floor{nr})}{\|\Sigma\|_F}\Big)_{r \in [0,1]} \weak \Big( G(r) + \Lambda(r)\Big)_{r \in [0,1]}
\]
where 
\[
\Lambda(r) = \begin{cases} (1-c)^2r^2\beta & r \le c\\
c^2 (1-r)^2 \beta & r > c
\end{cases}.
\]
Hence by the continuous mapping theorem
\[
T_n \Dkonv \sup_{r \in [0,1]} G(r) + \Lambda(r).
\]
\medskip
\noindent
\textbf{case 3: $n\|\Delta\|_2^2/\|\Sigma\|_F \to \infty$} 

In this case note that
\[
T_n \ge \frac{\Tilde{G}_n(k^*) }{\|\Sigma\|_F} = O_P(1) + \frac{k^*(k^*-1)(n-k^*)(n-k^*-1)}{n^4} \frac{n\|\Delta\|_2^2}{\|\Sigma\|_F} + o_P\Big(\frac{n\|\Delta\|_2^2}{\|\Sigma\|_F}\Big). 
\]
This completes the proof of~\eqref{eq:Tnalt} \hfill $\Box$

\subsubsection{Proof of~\eqref{eq:SnStaralt}: $S_n^*(a,b)$ under the alternatives}
For the bootstrap partial sum process, we observe the following decomposition:
\begin{align*}
S_n^{*X}(a,b) &= \sum_{i = \floor{na}+1}^{\floor{nb}-1} \sum_{j= \floor{na}+1}^i (X_{i+1} - \bar X)^T (X_j - \bar X) e_{i+1}e_j\\
& = \sum_{i = \floor{na}+1}^{\floor{nb}-1} \sum_{j= \floor{na}+1}^i \left(\Delta[\bm I( k^*+1\le i+1 \le n) - \frac{n-k^*}{n}] + Y_{i+1} - \bar Y\right)^T \\
& \hspace{5cm}\left( \Delta[\bm I( k^*+1\le j \le n) - \frac{n-k^*}{n}] + Y_j - \bar Y\right) e_{i+1}e_j\\
& = \sum_{i = \floor{na}+1}^{\floor{nb}-1} \sum_{j= \floor{na}+1}^i (Y_{i+1} - \bar Y)^T (Y_j - \bar Y) e_{i+1}e_j \\
& \qquad + \sum_{i = \floor{na}+1}^{\floor{nb}-1} \sum_{j= \floor{na}+1}^i \Delta^T(Y_{j} - \bar Y) [\bm I( k^*+1\le i+1 \le n) - \frac{n-k^*}{n}] e_{i+1}e_j
\\ 
& \qquad + \sum_{i = \floor{na}+1}^{\floor{nb}-1} \sum_{j= \floor{na}+1}^i \Delta^T(Y_{i+1} - \bar Y) [\bm I( k^*+1\le j \le n) - \frac{n-k^*}{n}] e_{i+1}e_j
\\ 
& \qquad + \sum_{i = \floor{na}+1}^{\floor{nb}-1} \sum_{j= \floor{na}+1}^i \Delta^T\Delta [\bm I( k^*+1\le i+1 \le n) - \frac{n-k^*}{n}][\bm I( k^*+1\le j \le n) - \frac{n-k^*}{n}] e_{i+1}e_j
\\
& = S_n^{*Y}(a,b) + S_{n,2}^{*Y}(a,b) + S_{n,3}^{*Y}(a,b) + S_{n,4}^{*Y}(a,b)
\end{align*}
The first term corresponds to the case with no changepoint and has the same limiting behaviour as under the null. We now study he behaviour of the remainder terms. 

\bigskip

\textbf{The case $c < a < b$.} The remainder terms take the form 
\[
\sum_{i = \floor{na}+1}^{\floor{nb}-1} \sum_{j= \floor{na}+1}^i \frac{k^*}{n}\Delta^T(Y_{j} - \bar Y) e_{i+1}e_j 
+ \sum_{i = \floor{na}+1}^{\floor{nb}-1} \sum_{j= \floor{na}+1}^i \frac{k^*}{n}\Delta^T(Y_{i+1} - \bar Y) e_{i+1}e_j
+ \sum_{i = \floor{na}+1}^{\floor{nb}-1} \sum_{j= \floor{na}+1}^i \Big(\frac{k^*}{n}\Big)^2\Delta^T\Delta e_{i+1}e_j
\]
For the last term, observe that this has the same form as $S_n(a,b)/n$ where $X_i$ are replaced by $e_i$. The corresponding covariance matrix is $\Sigma =1$ and hence by weak convergence of $S_n/n\|\Sigma\|_F$ under general conditions which are satisfied in this special case we have
\begin{equation}\label{eq:supsumeiej}
\sup_{a,b} \frac{1}{n} \sum_{\stackrel{i,j = \floor{na}+1}{i \neq j}}^{\floor{nb}-1} e_{i}e_j = O_p(1).
\end{equation}
Therefore, the last term is of order $O_p(n\|\Delta\|_2^2)$.

The terms $S_{n,2}^{*Y}(a,b)$ and $S_{n,3}^{*Y}(a,b)$ will be handled together. Note that
\begin{align*}
S_{n,2}^{*Y}(a,b) + S_{n,3}^{*Y}(a,b) 
& = \sum_{i = \floor{na}+1}^{\floor{nb}-1} \sum_{j= \floor{na}+1}^i \frac{k^*}{n}\Delta^T(Y_{j} + Y_{i+1} - 2\bar Y) e_{i+1}e_j
\\
& = \frac{1}{2}\sum_{i,j = \floor{na}+1, i \neq j}^{\floor{nb}-1} \frac{k^*}{n}\Delta^T(Y_{j} + Y_{i} - 2\bar Y) e_{i}e_j
\\
& = \frac{k^*}{2n}\Big( \sum_{i,j = \floor{na}+1}^{\floor{nb}-1} \Delta^T(Y_{j} + Y_{i}) e_{i}e_j - \sum_{i = \floor{na}+1}^{\floor{nb}-1} 2 \Delta^T Y_i e_i^2 \Big)- \Delta^T\bar Y \sum_{i,j = \floor{na}+1, i \neq j}^{\floor{nb}-1} \frac{k^*}{n} e_{i}e_j.
\end{align*}
For the first term in the bracket observe that
\begin{align*}
\sum_{i,j = \floor{na}+1}^{\floor{nb}-1} \Delta^T Y_{j} e_{i}e_j = \left\{\sum_{i =  \floor{na}+1}^{\floor{nb}-1} \Delta^T Y_{i} e_{i}\right\} \left\{\sum_{j =  \floor{na}+1}^{\floor{nb}-1} e_j \right\}. 
\end{align*}
Since $E[Y_i] = 0$ and $Y_i$ are independent of $e_i$ we obtain by Kolmogorov's maximal inequality, 
\begin{align}
\sup_{0 \leq a,b \leq 1} \left|\sum_{i =  \floor{na}+1}^{\floor{nb}-1}  \Delta^T Y_{i} e_{i}\right|  \label{eq:hpp1}
&= O_p(n^{1/2} \max_jVar(\Delta^T Y_{j} e_{j})^{1/2} ) 
= O_p(n^{1/2}(\Delta^T \Sigma \Delta)^{1/2}),
\\
\sup_{0 \leq a,b \leq 1} \left|\sum_{i =  \floor{na}+1}^{\floor{nb}-1}  \Delta^T Y_{i} e_{i}^2\right| 
&= O_p(n^{1/2} \max_jVar(\Delta^T Y_{j} e_{j}^2)^{1/2} ) = O_p(n^{1/2}(\Delta^T \Sigma \Delta)^{1/2}), \label{eq:hpp2}
\\
\sup_{0 \leq a,b \leq 1} \left|\sum_{j =  \floor{na}+1}^{\floor{nb}-1} e_j \right| 
&= O_p(n^{1/2}). \label{eq:hpp3}
\end{align}
For the last term we have by~\eqref{eq:supsumeiej} and an elementary calculation using independence of the $Y_i$
\begin{align}
\sup_{0 \leq a,b \leq 1} \left|\Delta^T\bar Y \sum_{i,j = \floor{na}+1, i \neq j}^{\floor{nb}-1} \frac{k^*}{n} e_{i}e_j \right| \notag
&\leq  |\Delta^T\bar Y| \cdot\sup_{a,b} \left|\sum_{i = \floor{na}+1}^{\floor{nb}-1} \sum_{j= \floor{na}+1}^ie_{i+1}e_j\right| 
\\
&=  O_p(Var(\Delta^T\bar Y )^{1/2})O_p(n) \notag
\\
& = O_p(n^{1/2}(\Delta^T \Sigma \Delta)^{1/2}). \label{eq:hpp4}
\end{align}
In summary, we have proved that in the case $c < a < b$
\begin{equation}
\frac{S_n^{*X}(a,b)}{n\|\Sigma\|_F} = \frac{S_n^{*Y}(a,b)}{n\|\Sigma\|_F} + O_p\Big(\frac{(\Delta^T \Sigma \Delta)^{1/2}}{\|\Sigma\|_F}\Big) + O_p\Big(\frac{\|\Delta\|_2^2}{\|\Sigma\|_F}\Big).
\end{equation}

\textbf{The case $ a < b < c$.} The remainder terms take the form 
\begin{align*}
- \sum_{i = \floor{na}+1}^{\floor{nb}-1} \sum_{j= \floor{na}+1}^i \frac{n-k^*}{n}\Delta^T(Y_{j} - \bar Y) e_{i+1}e_j 
& - \sum_{i = \floor{na}+1}^{\floor{nb}-1} \sum_{j= \floor{na}+1}^i \frac{n-k^*}{n}\Delta^T(Y_{i+1} - \bar Y) e_{i+1}e_j
\\
& + \sum_{i = \floor{na}+1}^{\floor{nb}-1} \sum_{j= \floor{na}+1}^i \Big(\frac{n-k^*}{n}\Big)^2\Delta^T\Delta e_{i+1}e_j.
\end{align*}
This can be handled similarly to the case $c < a <b$.
\\ 

\bigskip
\textbf{The case $ a < c < b $.} 
Compared to the case $a < b < c$ we have the additional terms
\begin{align*}
    \sum_{i = \floor{na}+1}^{\floor{nb}-1} \sum_{j= \floor{na}+1}^i \Delta^T(Y_{j} - \bar Y) w_{i+1} e_{i+1}e_j  +    \sum_{i = \floor{na}+1}^{\floor{nb}-1} \sum_{j= \floor{na}+1}^i \Delta^T(Y_{i+1} - \bar Y) w_{j} e_{i+1}e_j
    \\
    +     \sum_{i = \floor{na}+1}^{\floor{nb}-1} \sum_{j= \floor{na}+1}^i \Delta^T\Delta w_{i+1}w_j e_{i+1}e_j
\end{align*}
where $w_t = \bm I \{k^* + 1\le t \le n\}$. .
For the last term, note that
\[
w_t = H(t/n) + r_n(t)
\]
where $H(x) = \bm I \{c \le x \}$ and $|r_n(t)| \leq \bm I \{ |t - \floor{nc}| \leq 1 \}$. Now a direct computation shows that the pieces involving $r_n(t)$ are negligible while the remaining term takes the form
\[
\sum_{i = \floor{na}+1}^{\floor{nb}-1} \sum_{j= \floor{na}+1}^i \Delta^T\Delta H((i+1)/n)H(j/n) e_{i+1}e_j
\]
This has the same form as $S_n(a,b)/n$ where $\mu_i=0$, $H$ as above and $Z_i$ are replaced by $e_i$. The corresponding covariance matrix is $\Sigma =1$ and hence by weak convergence of $S_n/n\|\Sigma\|_F$ under general conditions which are satisfied in this special case we have
\begin{equation}\label{eq:supsumeiej}
\sup_{a,b} \frac{1}{n} \sum_{\stackrel{i,j = \floor{na}+1}{i \neq j}}^{\floor{nb}-1} H((i+1)/n)H(j/n) e_{i+1}e_j = O_p(1).
\end{equation}
Therefore, the last term is of order $O_p(n\|\Delta\|_2^2)$.

Next we bound the first two terms. We have
\begin{align*}
& \sum_{i = \floor{na}+1}^{\floor{nb}-1} \sum_{j= \floor{na}+1}^i \Delta^T(Y_{j} - \bar Y) w_{i+1} e_{i+1}e_j  +    \sum_{i = \floor{na}+1}^{\floor{nb}-1} \sum_{j= \floor{na}+1}^i \Delta^T(Y_{i+1} - \bar Y) w_{j} e_{i+1}e_j
\\
= & \sum_{i = \floor{nc}}^{\floor{nb}-1} \sum_{j= \floor{nc}}^i \Delta^T( Y_{j} - \bar Y) e_{i+1}e_j + \sum_{i = \floor{nc}}^{\floor{nb}-1} \sum_{j= \floor{na} + 1}^{\floor{nc}-1}\Delta^T( Y_{j} - \bar Y) e_{i+1}e_j
\\
& + \sum_{i = \floor{nc}}^{\floor{nb}-1} \sum_{j= \floor{nc}}^i \Delta^T(Y_{i+1} - \bar Y)  e_{i+1}e_j - \sum_{i = \floor{nc}}^{\floor{nb}-1}  \Delta^T(Y_{i+1} - \bar Y)  e_{i+1}e_{\floor{nc}}.
\end{align*}
The terms 
\[
\sum_{i = \floor{nc}}^{\floor{nb}-1} \sum_{j= \floor{nc}}^i \Delta^T( Y_{j} - \bar Y) e_{i+1}e_j + \sum_{i = \floor{nc}}^{\floor{nb}-1} \sum_{j= \floor{nc}}^i \Delta^T(Y_{i+1} - \bar Y)  e_{i+1}e_j
\]
have a similar structure as in the case $a < b <c$ and can be treated similarly as there. For the remaining two terms note that
\begin{align*}
&\Big| \sum_{i = \floor{nc}}^{\floor{nb}-1} \sum_{j= \floor{na} + 1}^{\floor{nc}-1}\Delta^T( Y_{j} - \bar Y) e_{i+1}e_j - e_{\floor{nc}} \sum_{i = \floor{nc}}^{\floor{nb}-1}  \Delta^T(Y_{i+1} - \bar Y)  e_{i+1} \Big|
\\
\leq & \Big|\sum_{i = \floor{nc}}^{\floor{nb}-1}  e_{i+1} \Big| \cdot \Big| \sum_{j= \floor{na} + 1}^{\floor{nc}-1} \Delta^T( Y_{j} - \bar Y)e_j\Big| + |e_{\floor{nc}}| \cdot \Big| \sum_{i = \floor{nc}}^{\floor{nb}-1}  \Delta^T(Y_{i+1} - \bar Y)  e_{i+1} \Big|
\\
\leq & 2\sup_{0 \leq a,b \leq 1} \left|\sum_{i =  \floor{na}+1}^{\floor{nb}-1}  \Delta^T Y_{i} e_{i}\right| \cdot\sup_{0 \leq a,b \leq 1} \left|\sum_{j =  \floor{na}+1}^{\floor{nb}-1} e_j \right|  
+ 2|\Delta^T\bar Y|\cdot \left(\sup_{0 \leq a,b \leq 1} \left|\sum_{j =  \floor{na}+1}^{\floor{nb}-1} e_j \right|\right )^2
\\
= & O_p(n^{1/2}(\Delta^T \Sigma \Delta)^{1/2}) O_p(n^{1/2}) + O_p(n^{-1/2}(\Delta^T \Sigma \Delta)^{1/2}) O_p(n) = O_p(n(\Delta^T \Sigma \Delta)^{1/2})
\end{align*}
Where the last line uses the bounds in~\eqref{eq:hpp1}-\eqref{eq:hpp4}. Summarizing, we have proved that in the case $ a <c < b$
\begin{equation}
\frac{S_n^{*X}(a,b)}{n\|\Sigma\|_F} = \frac{S_n^{*Y}(a,b)}{n\|\Sigma\|_F} + O_p\Big(\frac{(\Delta^T \Sigma \Delta)^{1/2}}{\|\Sigma\|_F}\Big) + O_p\Big(\frac{\|\Delta\|_2^2}{\|\Sigma\|_F}\Big).
\end{equation}
This completes the proof. \hfill $\Box$

 \subsection{Proof of Theorem~\ref{thm:4} and Theorem~\ref{th:multCppower}: Theory for multiple change point testing}
 Under the null, the process convergence result of $S^*_n(a,b)$ and continuous mapping theorem, we conclude that
\[\frac{T^*_{n,M}}{\|\Sigma\|_F} \xrightarrow{d}\sup_{0\le r_1 < r_2 \le 1} G(r_1;0,r_2) + \sup_{0\le r_1 < r_2 \le 1} G(r_2;r_1,1), \text{ in probability.}\]
 Similar to the arguments in the proof of Theorem 3, the result stated in Theorem 4 holds. \\
 
Under the alternative, there are $M$ change points at locations $k_1, k_2,\ldots,k_M$, and denote the changes by $\Delta_i = \mu_{i+1} - \mu_{i}$. The partial sum process can be decomposed as follows,
\begin{align*}
S_n^{*X}(a,b) &= \sum_{i = \floor{na}+1}^{\floor{nb}-1} \sum_{j= \floor{na}+1}^i (X_{i+1} - \bar X)^T (X_j - \bar X) e_{i+1}e_j
\\
& = \sum_{i = \floor{na}+1}^{\floor{nb}-1} \sum_{j= \floor{na}+1}^i 
\left( \sum_{r = 0}^{M} 
\Delta_r\bm I( k_r \le i+1 \le k_{r+1}) + Y_{i+1} - \bar Y
\right)^T \\ 
& \quad\quad\quad \times \left( \sum_{r = 0}^{M} 
\Delta_r\bm I( k_r \le j \le k_{r+1})+ Y_{j} - \bar Y 
\right)e_{i+1}e_j \\
& = \sum_{i = \floor{na}+1}^{\floor{nb}-1} \sum_{j= \floor{na}+1}^i  (Y_{i+1} - \bar Y
)^T  (Y_{j} - \bar Y
)e_{i+1}e_j \\
& \quad \quad +  \sum_{r = 0}^M\sum_{i = \floor{na}+1}^{\floor{nb}-1} \sum_{j= \floor{na}+1}^i  Y_{j}^T \Delta_r \bm I( k_r \le i+1 \le k_{r+1})e_{i+1}e_j\\
& \quad \quad+ \sum_{r = 0}^M\sum_{i = \floor{na}+1}^{\floor{nb}-1} \sum_{j= \floor{na}+1}^i  Y_{i+1}^T \Delta_r \bm I( k_r \le j \le k_{r+1})e_{i+1}e_j \\
& \quad \quad+ \sum_{t,r = 0}^M\sum_{i = \floor{na}+1}^{\floor{nb}-1} \sum_{j= \floor{na}+1}^i \Delta_t^T \Delta_r \bm I( k_r \le i+1 \le k_{r+1}) \bm I( k_t \le j \le k_{t+1})e_{i+1}e_j.
\end{align*}
Again, the first term is simply the process under the null. Similar analysis to single change point case shows that the second term and third term is of order $O_p(n \sum_{i = 1}^M(\Delta_i^T \Sigma \Delta_i)^{1/2})$. The last term is of order $O_p(n\sum_{i = 1}^M\|\Delta_i\|_2^2)$. 
Under local or fixed alternative, that is 
\[\frac{n\Delta_i^T\Delta_i}{\|\Sigma\|_F} \to b_i \in [0, \infty), \text{ for all $i = 1, \ldots M$},\]
$\frac{S_n^{*X}(a,b)}{n \|\Sigma\|_F}$ converges to the same process under the null. 
When there is at least one diverging change point, according the the bootstrap statistic is bounded by $O_p(\|\Delta_s\|_2^2)$, where $\Delta_s = \mu_{s+1} - \mu_s$ is the largest change. To show that the proposed test has power converging to one, it suffices to check that the order of the original statistic. To this end, we consider the forward scanning statistic $G_n(k_s;1,k_{s+1})$.

\begin{align*}
    & {k_s(k_s-1)(k_{s+1}- k_s)(k_{s+1}- k_s - 1)}G_n(k_s;1,k_{s+1})  \\
    & = \sum_{j_1 = 1}^{k_{s}} \sum_{j_3 = 1, j_3\ne j_1}^{k_s} \sum_{j_2 = k_s+1}^{k_{s+1}} \sum_{j_4 = k_s+1, j_4\ne j_2}^{k_{s+1}} (X_{j_1} - X_{j_2})^T(X_{j_3} - X_{j_4})\\
    & = \sum_{j_1 = 1}^{k_{s}} \sum_{j_3 = 1, j_3\ne j_1}^{k_s} \sum_{j_2 = k_s+1}^{k_{s+1}} \sum_{j_4 = k_s+1, j_4\ne j_2}^{k_{s+1}} (Y_{j_1} - Y_{j_2} + \sum_{j=0}^{s-1} \bm I\{k_j +1 \leq j_1 \leq k_{j+1} \}\mu_{j+1} - \mu_s )^T\\
    & \qquad \times (Y_{j_3} - Y_{j_4}+ \sum_{j=0}^{s-1} \bm I\{k_j +1 \leq j_3 \leq k_{j+1} \}\mu_{j+1} - \mu_s)\\
    & = \sum_{1 \leq j_1 \neq j_3 \leq k_s} \sum_{k_s+1 \leq j_2 \neq j_4 \leq k_{s+1}} \Big(\mu_{s+1} - \sum_{j=0}^{s-1} \bm I\{k_j +1 \leq j_1 \leq k_{j+1} \}\mu_{j+1} \Big)^T \Big(\mu_{s+1} - \sum_{j=1}^s \bm I\{k_j +1 \leq j_3 \leq k_{j+1} \} \mu_{j+1} \Big)\\
    & \qquad + \sum_{1 \leq j_1 \neq j_3 \leq k_s} \sum_{k_s+1 \leq j_2 \neq j_4 \leq k_{s+1}}(Y_{j_1} - Y_{j_2} )^T(Y_{j_3} - Y_{j_4})\\
    & \qquad + \sum_{1 \leq j_1 \neq j_3 \leq k_s} \sum_{k_s+1 \leq j_2 \neq j_4 \leq k_{s+1}}(Y_{j_1} - Y_{j_2} )^T(\sum_{j=0}^{s-1} \bm I\{k_j +1 \leq j_3 \leq k_{j+1} \}\mu_{j+1} - \mu_s) \\
    & \qquad + \sum_{1 \leq j_1 \neq j_3 \leq k_s} \sum_{k_s+1 \leq j_2 \neq j_4 \leq k_{s+1}}(Y_{j_3} - Y_{j_4})^T(\sum_{j=0}^{s-1} \bm I\{k_j +1 \leq j_1 \leq k_{j+1} \}\mu_{j+1} - \mu_s )
\end{align*}
Similar to the single change point case, the order of the first term dominates. Since we assumed that $k_s$ is the largest change, and there are only finite change points, the order of the this is $n^4\|\Delta_s\|_2^2$. After proper scaling, the order of the original test statistic is $T_{n,M} = O_p(n\|\Delta_s\|_2^2)$. Together with the fact that the bootstrap statistic is of order $\|\Delta_s\|_2^2$, we conclude that the power will converge to 1.

\bibliographystyle{chicago}
\bibliography{sample}

\end{document}